\documentclass[11pt,american,english]{article}

\usepackage[latin9]{inputenc}
\usepackage{geometry}
\geometry{verbose,tmargin=1in,bmargin=1in,lmargin=1in,rmargin=1in}
\usepackage[active]{srcltx}
\usepackage{babel}
\usepackage{units}
\usepackage{amsmath}
\usepackage{amsthm}
\usepackage{amssymb}
\usepackage{graphicx}
\usepackage{setspace}
\usepackage[authoryear,round]{natbib}
\PassOptionsToPackage{normalem}{ulem}
\usepackage{ulem}
\onehalfspacing
\usepackage[]
 {hyperref}

\makeatletter
\theoremstyle{plain}
\newtheorem{thm}{\protect\theoremname}
\theoremstyle{plain}
\newtheorem{assumption}[thm]{\protect\assumptionname}
\theoremstyle{plain}
\newtheorem{lem}[thm]{\protect\lemmaname}
\theoremstyle{plain}
\newtheorem{prop}[thm]{\protect\propositionname}
\theoremstyle{plain}
\newtheorem*{lem*}{\protect\lemmaname}
\theoremstyle{plain}
\newtheorem*{prop*}{\protect\propositionname}

\usepackage{babel}
\usepackage{xcolor}
\hypersetup{
    colorlinks,
    linkcolor={red!50!black},
    citecolor={blue!50!black},
    urlcolor={blue!80!black}
}

\ifdefined\showcaptionsetup
 \PassOptionsToPackage{caption=false}{subfig}
\fi
\usepackage{subfig}
\makeatother

\addto\captionsamerican{\renewcommand{\assumptionname}{Assumption}}
\addto\captionsamerican{\renewcommand{\lemmaname}{Lemma}}
\addto\captionsamerican{\renewcommand{\propositionname}{Proposition}}
\addto\captionsamerican{\renewcommand{\theoremname}{Theorem}}
\addto\captionsenglish{\renewcommand{\assumptionname}{Assumption}}
\addto\captionsenglish{\renewcommand{\lemmaname}{Lemma}}
\addto\captionsenglish{\renewcommand{\propositionname}{Proposition}}
\addto\captionsenglish{\renewcommand{\theoremname}{Theorem}}
\providecommand{\assumptionname}{Assumption}
\providecommand{\lemmaname}{Lemma}
\providecommand{\propositionname}{Proposition}
\providecommand{\theoremname}{Theorem}

\begin{document}
\title{\onehalfspacing{}Economic Geography and Structural Change}
\author{Clement E. Bohr, Martí Mestieri and Frédéric Robert-Nicoud\thanks{Clement E. Bohr, UCLA; e-mail: \protect\href{mailto:clement.bohr@anderson.ucla.edu}{clement.bohr@anderson.ucla.edu}.
Martí Mestieri, IAE-CSIC, CREI, BSE, and CEPR; email: \protect\href{mailto:marti.mestieri@iae.csic.es}{marti.mestieri@iae.csic.es}.
Frédéric Robert-Nicoud, GSEM (Université de Genève) and CEPR; e-mail:
\protect\href{mailto:frederic.robert-nicoud@unige.ch}{frederic.robert-nicoud@unige.ch}.
We thank Thibault Fally, Kiminori Matsuyama, Ben Moll, Fidel Perez-Sebastian,
and audiences at Princeton and at the 2024 \textsc{uea} European Meeting
(Copenhagen) for feedback, comments, and suggestions. Mestieri acknowledges
financial support from the Spanish Ministry of Science, Innovation
and Universities (PID2022-139468NB-I00 and Europa Excelencia Grant).
The current paper supersedes our earlier working paper \citep*{BohrMestieriRobertNicoud2023},
which had different focus and title.}}
\date{December 04, 2024}

\maketitle
\medskip{}

\begin{abstract}
\begin{spacing}{0.9}
\noindent As countries develop, the relative importance of agriculture
declines and economic activity becomes spatially concentrated. We
develop a model integrating structural change and regional disparities
to jointly capture these phenomena. A key modeling innovation ensuring
analytical tractability is the introduction of non-homothetic Cobb-Douglas
preferences, which are characterized by constant unitary elasticity
of substitution and non-constant income elasticity. As labor productivity
increases over time, economic well-being rises, leading to a declining
expenditure share on agricultural goods. Labor reallocates away from
agriculture, and industry concentrates spatially, further increasing
aggregate productivity: structural change and regional disparities
are two mutually reinforcing outcomes and propagators of the growth
process. \vspace{1cm}

\noindent\textbf{Keywords:} New Economic Geography, Structural Change,
Non-Homothetic Preferences.

\medskip{}

\noindent\textbf{JEL Codes:} D11, F11, O40, R10.
\end{spacing}
\end{abstract}

\section{\protect\label{sec:Introduction}Introduction}

The geographic and sectoral distributions of economic activity jointly
evolve along the development path. Figure \ref{fig:intro} shows how,
as income per capita grows, the share of value added in agriculture
declines (Figure \ref{fig1a}) and the share of population in urban
areas increases (Figure \ref{fig1b}) for a panel of 38 countries
spanning very different income levels. Combining these two patterns,
Figure \ref{fig1c} shows a strong, negative correlation between agricultural
value added shares and spatial concentration of economic activity
as proxied by urbanization rates. What is the relationship between
structural change and spatial concentration? Does one drive the other
or are they jointly determined? Are there any feedback effects between
them?

The purpose of this paper is to provide a parsimonious framework to
analyze these questions, bringing together elements of economic geography
and structural change. In particular, we combine an economic geography
framework with a demand-driven theory of structural change in which
we introduce a novel non-homothetic demand system -- ``Heterothetic
Cobb-Douglas'' (henceforth \textsc{hcd}). These preferences feature
a variable income elasticity of demand while maintaining a constant,
unitary elasticity of substitution. This combination of elements (particularly,
the unitary elasticity of substitution) results in a tractable unified
framework that enables us to study a rich set of two-way interactions
between structural change and the evolution of regional disparities.\footnote{The observation that urbanization and structural change are broadly
correlated is well-known and different facets of this relationship
have been explored. See, among others, \citet*{CaselliColeman2001,EckertPeters2022,FajgelbaumRedding2022,MichaelsRauchRedding2012,Nagy2023}
and the discussion in the related literature. However, as we explain
below, the extant literature has not addressed our research question.} Our central result is to show how rising incomes yield spatial concentration
through structural change, and how rising spatial concentration increases
incomes, fueling structural change. Importantly, our framework allows
us to show how both structural change and spatial concentration arise
as an outcome of economic growth, without needing a fall in transportation
and trade costs. The theory can also shed light on the evolution of
earnings inequality.

\begin{figure}
\caption{The Joint Evolution of Agricultural and Urban Economic Activity}
\label{fig:intro}

\subfloat[Agricultural Shares and Income pc]{\label{fig1a}

\includegraphics[scale=0.36]{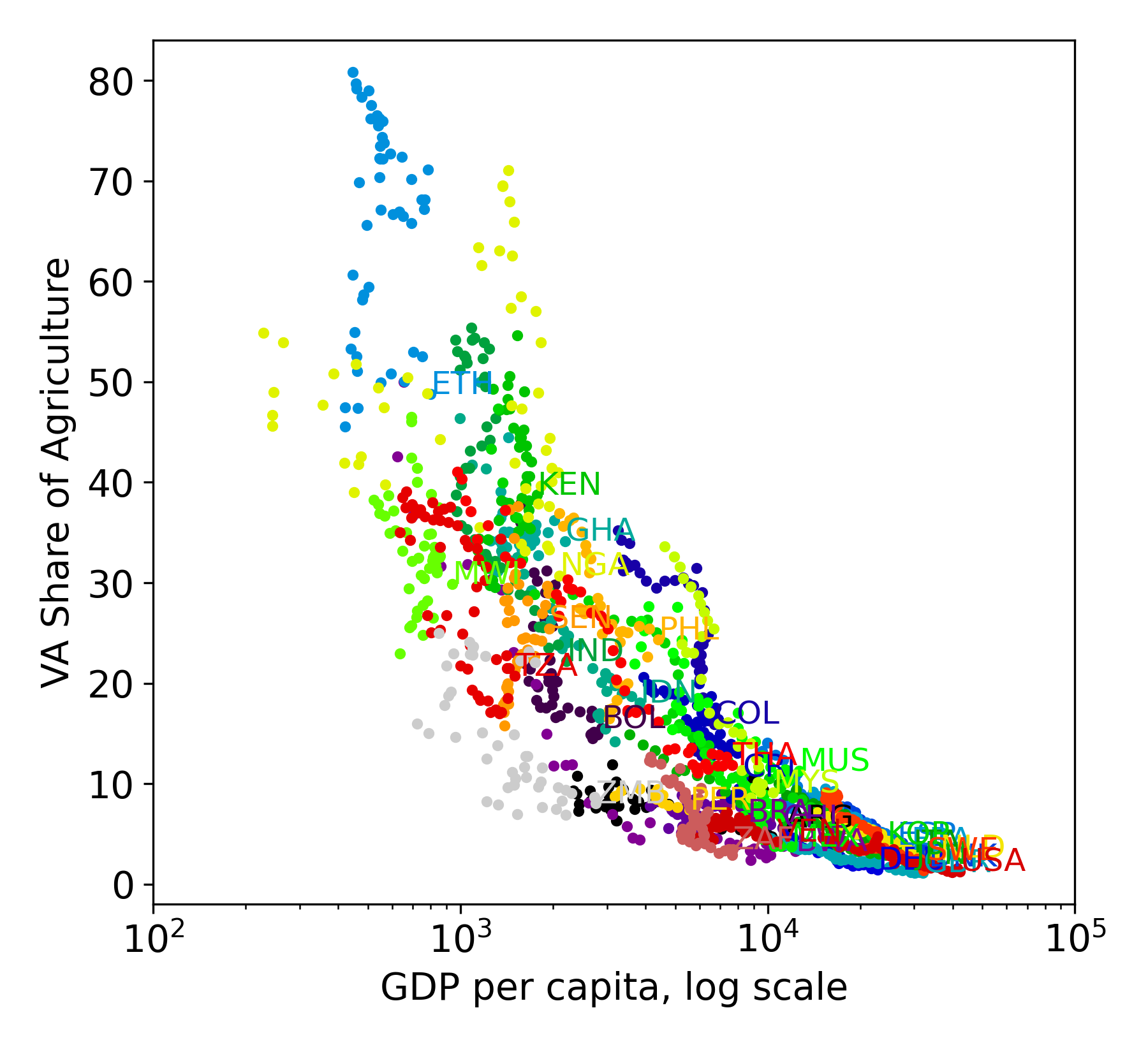}}\subfloat[Urbanization and Income pc]{\label{fig1b}

\includegraphics[scale=0.36]{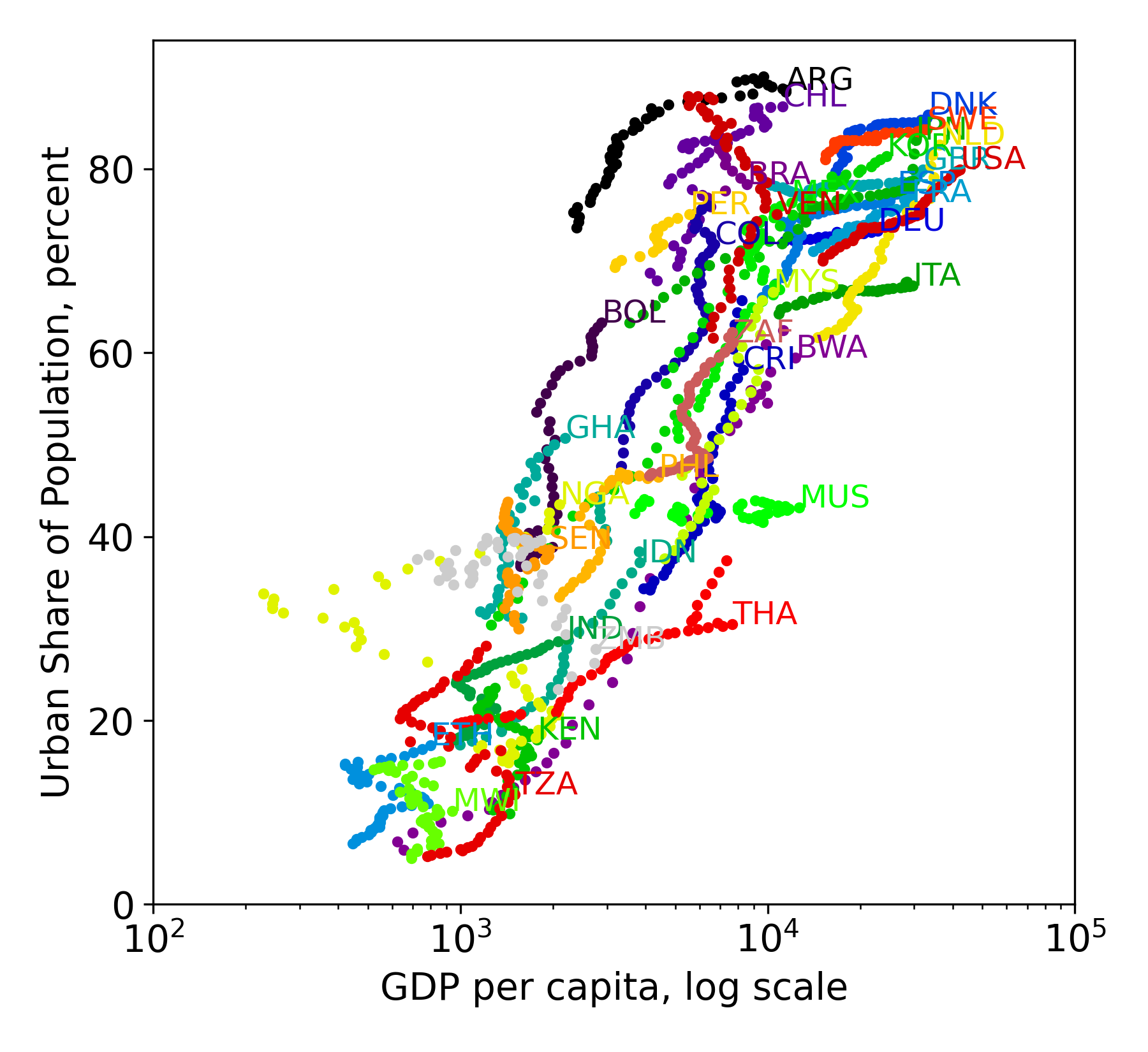}}\subfloat[Agri. Shares and Urbanization]{\label{fig1c}

\includegraphics[scale=0.35]{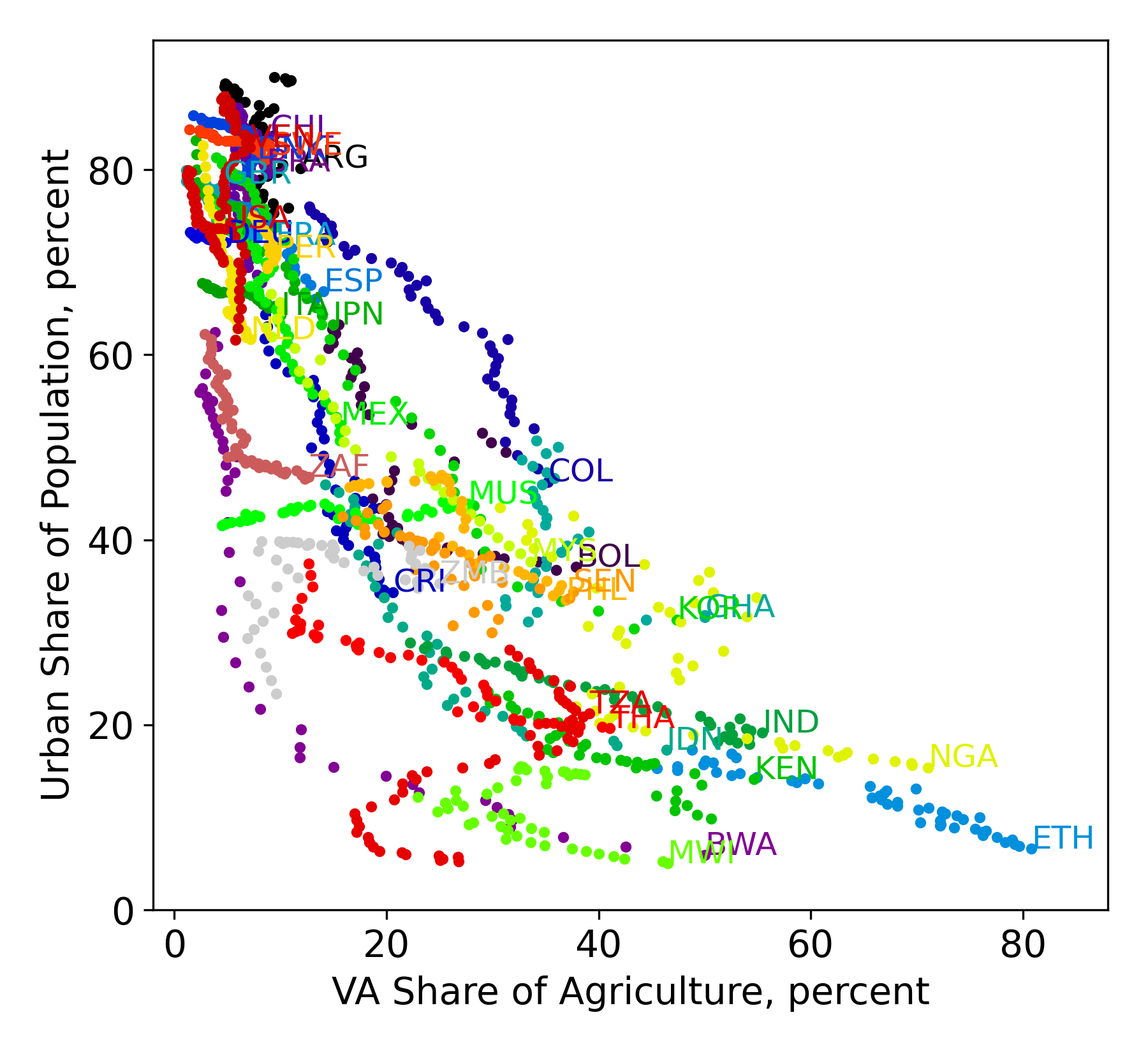}}

{\footnotesize\emph{Notes}: Data for agricultural value-added shares
comes from the Groningen 10-sector database as computed in \citet*{CominLashkariMestieri2021}.
Urbanization rates comes from the \href{https://data.unhabitat.org/}{United Nations Urban Indicators Database}.
Income per capita corresponds to 2017 US dollars (PPP-adjusted) from
the PWT. See Appendix \ref{sec:Data-Construction} for more details.}{\footnotesize\par}
\end{figure}

For simplicity, we consider only two factors of production, two sectors,
and two regions endowed with identical fundamentals, as in the seminal
``new'' economic geography (henceforth \textsc{neg}) paper by \citet{Krugman1991}.
The factors are labor (owned by ``workers'') and human capital (owned
by ``entrepreneurs''). We label the sectors agriculture and manufacturing,
where manufacturing lumps together the manufacturing and service industries.
In line with empirical evidence, the income elasticity of demand for
this composite good is larger than one, hence its share in national
income increases as real income per capita grows over time. Conversely,
agriculture is a necessity, and its share in employment and in national
income falls over time: the economy undergoes a structural transformation
(see, e.g., \citealp{CaselliColeman2001}, and \citealp*{Kongsamut2001}).\footnote{\citet*{Kongsamut2001} define structural change (or structural transformation)
as ``the massive reallocation of labor from agriculture into manufacturing
and services that accompanies the growth process {[}p. 869{]},''
a process to which \citet{Matsuyama2016} refers as the ``Generalized
Engel's Law.'' See \citet*{Herrendorf2014} for a comprehensive synthesis
of the field.}

In our model, agriculture is tied to the land, and the workers it
employs are a source of geographically immobile demand for the composite
good; this so-called dispersion force tends to disperse economic activity
across space. Conversely, entrepreneurs are mobile and the manufacturing
firms they manage tend to move to the region with the larger demand
to save on trade and transportation costs; in turn, the region with
the larger demand tends to be the one hosting the most entrepreneurs
since they also consume manufacturing goods. Thus, the larger their
expenditure share on manufacturing, the stronger this positive feedback
loop, and the more likely these self-enforcing agglomeration economies
dominate dispersion forces leading to the spatial concentration of
manufacturing in equilibrium. This expenditure share in manufacturing
is a key \emph{parameter} in Krugman's core-periphery model. In our
model, by contrast, it is an \emph{endogenous outcome} that depends
on the income level of the economy. As we explain below, unlike the
fall of transportation costs, whose impact on regional disparities
depends on subtle modeling assumptions, the rise of labor productivity
and of the equilibrium expenditure share on manufacturing has an unambiguous
impact on the emergence of regional disparities.\footnote{In \citet{Krugman1991}, the source of immobile expenditure comes
from agriculture, a competitive sector producing under \emph{constant}
returns to (spatially immobile) labor and whose output is \emph{freely
traded}; Krugman shows that regional disparities arise in equilibrium
only if transportation costs are \emph{low} enough. In \citet{helpman1998size},
the source of immobile expenditure comes from housing, a competitive
sector supplying a \emph{non-traded} service; Helpman shows that regional
disparities arise in equilibrium only if transportation costs are
\emph{high} enough. Finally, in \citet{KrugmanVenables1995}, the
source of immobile expenditure comes from agriculture, a competitive
sector producing under \emph{decreasing} returns to labor and whose
output is \emph{freely traded}; they show that regional disparities
arise for \emph{intermediate} levels of trade costs.} In particular, regional disparities are an equilibrium outcome regardless
of the level of transportation costs if labor productivity and the
equilibrium expenditure share on manufacturing are high enough. The
endogenous emergence of such a ``black hole'' cannot arise in original
\textsc{neg} models featuring homothetic preferences and constant
expenditure shares.

Our framework highlights the two-way relationship between regional
disparities and structural change: growth-led structural change generates
regional disparities, and the emergence of regional disparities induces
further structural change. Consider the following thought experiment.
Assume that labor productivity is initially low: the economy is predominantly
agricultural and manufacturing firms are evenly dispersed across regions
at the unique stable equilibrium. As labor productivity monotonically
increases over time, per capita incomes and economic well-being rises.
In turn, due to the non-homothetic demand system, the share of spending
on manufactures increases and that on agriculture falls. The first
implication of this process is that labor reallocates from agriculture
to manufacturing: \emph{structural change} unfolds.

As the share of spending on manufactures increases, a second implication
is that agglomeration forces strengthen and dispersion forces weaken
since entrepreneurs benefit from locating in the same region. Eventually,
the spatial concentration of manufacturing becomes the only stable
equilibrium outcome: \emph{regional disparities} emerge. This result
illustrates how structural change fuels regional disparities, and
it is thus unlikely to happen uniformly across space. Industrialization
may only take place in a few ``cores'' \citep{PugaVenables1996}.
Furthermore, the spatial concentration of manufacturing exerts a positive
feedback effect on income growth and structural change. As industry
concentrates geographically, the price of the consumption basket of
manufacturing falls for the entrepreneurs -- since by consuming in
the same location of production, they minimize the trade and transportation
costs. As a result, the utility levels of entrepreneurs increase and,
since manufacturing is a luxury, their expenditure share on manufacturing
increases. This mechanism illustrates how spatial concentration of
manufacturing fuels income growth and structural change.

Finally, structural change brought about by the combination of steady
growth of labor productivity and of manufacturing being a luxury good
also has an effect on income inequality. Though economic growth benefits
both types of factor owners, productivity growth disproportionately
benefits the owners of the factor used intensively in the sector growing
the most -- entrepreneurs. This logic is well-known since at least
\citet{Jones1965}, and here it extends to the case of a factor-specific
model featuring monopolistic competition.

\section{Relation to the Literature}

Our paper combines elements from the fields of economic geography
and from structural change.

The first major ingredient of our model borrows from the arguably
no-longer aptly named ``new'' economic geography. It features two
regions with identical fundamentals, some geographically mobile factors,
monopolistic competition, scale economies, and trade and transportation
costs, as in \citet{Krugman1991} and in the vast literature that
followed \citep*{BaldwinForslidMartinOttavianoRobertnicoud,FujitaKV1999}.
Under some conditions, the combination of these ingredients gives
rise to self-enforcing agglomeration economies and to the emergence
of regional disparities. The basic structure of Krugman's model is
simple but its analysis is complex, and it took fifteen years and
combined efforts of several authors to complete a comprehensive analysis
of standard \textsc{neg} models.\footnote{\citet{Krugman1991} provides necessary conditions for the core-periphery
pattern to be a stable equilibrium outcome but uses numerical simulations
otherwise. Even then, he writes that this necessary condition ``appears
to be a fairly unpromising subject for analytical results. However,
it yields to careful analysis {[}p. 495{]}.'' The same can be said
about the whole model. First, \citet{Puga1999} developed the analytical
methods to provide sufficient conditions for a core-periphery equilibrium
to be stable. Second, \citet{RobertNicoud2005} showed how the main
\textsc{neg} models are isomorphic, and used that result to establish
the overall stability properties of these models and to characterize
the universe of equilibriums. Finally, \citet{Mossay2006} established
the existence and uniqueness of the so-called short run equilibrium
of Krugman's model. \citet*{CharlotGRNT2006} characterize the normative
properties of Krugman's model.} Even though we add one source of complexity to this framework --
non-homothetic preferences -- our model does so in a relatively parsimonious
way. It encompasses the variant of Krugman's model introduced by \citet{Forslid2003}
as a special case (when preferences are homothetic Cobb-Douglas).
Our paper also complements the works of \citet*{BaldwinMartinOttaviano2001},
who study how innovation and growth interact in an \textsc{neg} framework,
and of \citet*{OttavianoTT2002} and \citet{Pflueger2004}, who introduce
\textsc{neg} models with quasi-linear (and hence non-homothetic) preferences.
Both models are analytically more amenable than Krugman's original.\footnote{\citet*{OttavianoTT2002} introduce a model of monopolistic competition
that features true pro-competitive effects, and they show how most
results derived by \citet{Krugman1991}, who uses the famous framework
developed by \citet{DixitStiglitz1977}, are robust to this alternative
functional form.} However, manufacturing is a necessity and agricultural goods are
a luxury in these models, which is counterfactual.

Krugman's original focus, as well as that of much of the \textsc{neg}
literature's, was on the role of trade and transportation costs in
fostering regional disparities. This focus resonated to Europeans
in particular, as it shed light on the European Single Market Program
from an original angle at the time. The economic mechanisms unveiled
by the \textsc{neg} suggest that globalization and European economic
integration may contribute to deepening of regional disparities rather
than fostering spatial convergence \citep{Puga2002}. In this paper,
we shift the narrative by demonstrating that falling transportation
costs are neither necessary nor even likely to be the main driver
of regional disparities. Rather, our analysis emphasizes that structural
change associated with economic growth induces regional disparities
on its own.

This approach addresses two critiques of Krugman\textquoteright s
original model. First, \citet{helpman1998size} includes consumption
of housing (a non-traded service) instead of agriculture (a freely
traded good in Krugman's model) in his alternative \textsc{neg} model.
Helpman shows that competition among mobile workers for this non-tradable
service contributes to regional convergence, and that falling transportation
costs actually amplify this dispersion force: trade integration does
not cause regional disparities; quite the opposite, it contributes
to regional convergence.\footnote{\citet{KrugmanVenables1995} develop a model with decreasing returns
to labor in manufacturing, which leads to regional divergence for
intermediate levels of transportation costs and to convergence for
high and low trade frictions. We extend our model in this direction
in Appendix \ref{Appx:Helpman}.} Our theory is robust to Helpman's critique, as increasing agglomeration
forces are driven by a rising expenditure share in the good that exhibits
increasing returns to scale -- manufacturing -- and this force persists
even when land is included, as long as housing is a necessity (which
is empirically the case).\footnote{Housing being a necessity, its expenditure share is declining in income.
A rise in labor productivity therefore reduces the dispersion force
as households choose to spend a smaller share of their income on it.}

Second, Krugman himself nuanced the findings of his paper, by emphasizing
that his model features a region of the parameter space (when the
expenditure share on the sector featuring economies of scale is sufficiently
large) that implies the emergence of regional disparities in equilibrium
regardless of the level of transportation costs. He addressed this
caveat by introducing the ``No Black-Hole Condition'' ruling out
this outcome by assumption. From the perspective of our paper with
endogenous expenditure shares, the black hole becomes an equilibrium
outcome, and therefore a feature of the model rather than a bug. As
labor productivity rises and the expenditure share on manufacturing
consequently increases, the economy transitions from having no regional
disparities, to having the potential of regional disparities, to inevitably
featuring regional disparities in equilibrium, i.e. the black hole.
An important implication of this results is that prohibitive transportation
costs in a sector are not inconsistent with its spatial clustering.
Our model can thus accommodate the agglomeration of non-traded services
as an equilibrium outcome.

The second major ingredient of our model is demand-driven structural
change: expenditure and resources reallocate away from agriculture
as economic well-being rises. \citet{dennisiscan09} document a preponderant
role of income effects in the United States' structural change out
of agriculture from 1800 to 1950.\footnote{For the postwar United States, when the agricultural share is already
small, \citet*{HerrendorfRogersonValentinyi2013}, \citet{Boppart2014}
and \citet*{CominLashkariMestieri2021} among others find that both
income and price effects are important for explaining structural transformation
using different partitions of the economy and demand specifications.} We deliberately switch-off the relative price-effect channel by setting
the elasticity of substitution between manufacturing and agriculture
to one. Thus, for any given fixed level of utility, any form of technical
progress on the supply side leaves expenditure shares and employment
shares unchanged. It would be conceptually straightforward to extend
the analysis of our paper to allow this channel to operate by using
the non-homothetic constant elasticity of substitution (\textsc{nh-ces})
to model structural change \citep*{CominLashkariMestieri2021} in
a spatial context as in \citet{FinlayWilliams2022} and \citet{Takeda2022}.
Such an extension is ancillary to the main point we make here: economic
growth is a source of both structural change and regional disparities
when preferences are non-homothetic.\footnote{A series of papers study aspects of structural change and economic
geography that work through the relative price-effect channel; these
authors use homothetic preferences. \citet*{desmet2014spatial} develop
a spatial model of innovation and technology diffusion. \citet*{FajgelbaumRedding2022}
analyze the role of access to internal and external markets in the
structural transformation of late 19th-century Argentina. \citet*{MichaelsRauchRedding2012}
and \citet{Nagy2023} analyze the joint process of urbanization and
structural transformation in the United States starting in the later
19th century. Using data from Brazil, \citet*{Bustos2016} show that
the effect of agricultural productivity growth on structural change
depends on the factor-bias of the technical change.}

Our paper contributes to the growing literature that studies structural
change and economic geography jointly. A few of these papers also
use non-homothetic preferences. The closest paper to ours is \citet{Murata2008},
who uses Stone-Geary preferences \citep{geary1950note,stone1954linear}
in an otherwise standard two-region \textsc{neg} model. His analysis
focuses on the role of trade and transportation costs, as do most
\textsc{neg} contributors. \citet{CaselliColeman2001} show that structural
transformation is a source of regional income \emph{convergence} in
neoclassical models in the presence of decreasing marginal returns
of mobile factors of production and in the absence of trade costs
among regions. They also use Stone-Geary preferences, which feature
non-unitary income and price elasticities of demand. By contrast,
\citet*{ChatterjeeGiannoneKleinebergKuno2023} document that regional
convergence has stalled since the 1980s for a panel of 34 countries
and relate it to the rise of the service sector. \citet{EckertPeters2022}
and \citet{BudiOrsPijoanMas2022} build quantitative spatial general
equilibrium models to study the spatial dimension of structural change
in the United States at the turn of the twentieth century \citep{EckertPeters2022}
and in Spain from 1940 to 2000 \citep{BudiOrsPijoanMas2022}. \citet{CaselliColeman2001},
\citet{BudiOrsPijoanMas2022} and \citet*{ChatterjeeGiannoneKleinebergKuno2023}
use Stone-Geary preferences, which, like all explicitly additively
separable preferences, display fundamental flaws \citep{Matsuyama2016}.
The \textsc{hcd} preferences we use to model structural change do
not suffer from these flaws. \citet{EckertPeters2022} follow \citet{Boppart2014}
in using price-independent generalized linear (\textsc{pigl}) preferences.
These preferences have more desirable properties than Stone-Geary
preferences while maintaining aggregation, but they only accommodate
two sectors. By contrast, our \textsc{hcd} preferences are a limiting
case of \textsc{nh-ces} preferences; they accommodate an arbitrary
number of goods and sectors and have a constant unitary price elasticity.
They also yield a ``generalized separable'' demand system \citep{Fally2022,Gorman_1995,Pollak1972}.
We anticipate that \textsc{hcd} preferences will be of independent
interest and applied to a wide range of contexts.\footnote{For a discussion on potential applications, see \citet*{BohrMestieriRobertNicoud2023}
and \citet*{Bohr2023}.}

\section{\protect\label{sec:hcd}Heterothetic Cobb-Douglas Preferences}

We begin by characterizing \textsc{hcd} preferences in their general
form after which we introduce them to the NEG model of spatial and
structural change. All proofs are contained in the appendix.

Consider the utility function $U:\times_{i\in I}\left[\gamma_{i},\infty\right)\mapsto\left(1,\infty\right)$
that maps the consumption vector $\left\{ C_{i}\right\} _{i\in I}$
into a level of utility $u$, where $u$ is implicitly defined by:\footnote{Note that it is conceptually straightforward, but ancillary to the
main point of this paper, to extend the analysis to \textsc{nh-ces}
preferences over $C_{i}$, with an elasticity of substitution $\eta\equiv\frac{1}{1-\rho}$
that is different from one and smaller than $\sigma$. Specifically,
assume that $u$ is defined implicitly as follows:
\[
\frac{u^{\rho}-1}{\rho}=\sum_{i}\omega_{i}\left(u\right)\frac{\left(C_{i}\right)^{\rho}-\gamma^{\rho}}{\rho},
\]
where the functions $\omega_{i}$ satisfy assumptions \ref{assu:omega}.
Then, we can write the Heterothetic Cobb-Douglas preferences in equation
\eqref{eq:HCD} as the limiting case of $\rho\rightarrow0$ for these
\textsc{nh-ces} preferences.}

\begin{equation}
\ln u=\sum_{i}\omega_{i}\left(u\right)\ln\left(\frac{C_{i}}{\gamma_{i}}\right)-\sum_{i}\omega_{i}\left(u\right)\ln\left(\frac{\omega_{i}\left(u\right)}{\omega(u)}\right),\label{eq:HCD-1}
\end{equation}
and where $\omega(u)\equiv\sum_{i}\omega_{i}(u)$. The parameter $\gamma_{i}$
defines some minimal amount of good $i$ over which $U$ is defined.
The distinction from homothetic Cobb-Douglas is that the weights $\omega_{i}$
are a no longer a parameter, but a function of the utility level $u$.
We impose the following regularity conditions on these weight functions,
which ensure interior solutions for the utility maximization program,
and in turn that the expenditure and indirect utility functions are
well-behaved:\footnote{See \citealp{Fally2022}, for a crisp and exhaustive analysis of the
conditions that ensure that the expenditure and indirect utility functions
are well behaved}
\begin{assumption}
\label{assu:omega}\textbf{Self-Consistency}. There exist real numbers
$\{\text{\underbar{s}}_{i}\}\in(0,1)$, such that $\forall i\in I,\forall u\geq1$:
(i) $\text{\underbar{s}}_{i}\leq\frac{\omega_{i}(u)}{\omega(u)}$,
(ii) $\partial\omega_{i}/\partial u\leq0$ (with strict inequality
for at least one good), and (iii) the vector of prices and income
$\left(\left\{ P_{i}\right\} ,y\right)$ obeys 
\begin{equation}
P_{i}\gamma_{i}<\uline{s}_{i}y.\label{eq:sufficient=000020condition-1}
\end{equation}
\end{assumption}

Parts (i) and (iii) of this assumption together ensure that the share
of income that the preferences imply should be allocated to any given
good exceeds the minimum required level of expenditures for that good
$P_{i}\gamma_{i}$. Without the assumption, the preferences may be
internally inconsistent, requiring a level of consumption on a good
that is higher than the expenditure share the preferences imply should
be allocated toward it. The assumption also implies a limit on the
realized variance of prices, which are an endogenous outcome. Generally,
one would have to make assumptions on fundamentals that ensure that
this inequality is verified in equilibrium, and we will highlight
the needed parameter restrictions below (see Lemma \ref{lem:Sufficient=000020condition}).
In the limiting case of $\gamma_{i}\rightarrow0$, however, this inequality
is verified as long as prices remain finite. This case imposes the
most minimalist residual restrictions, and we therefore make the following
assumption:
\begin{assumption}
\label{assu:subsistence}\textbf{Regularity conditions}. $\forall i\in I:\gamma_{i}\in\mathbb{R}_{++}$.
We choose units of measurement for these goods so that $\gamma_{i}=\gamma$
for all $i\in I$, and let $\gamma\rightarrow0$, in which case $U$
is defined almost everywhere. For simplicity, we will also assume
that $\uline{s}_{i}=\uline{s}$ for all $i\in I$.
\end{assumption}

\paragraph{Existence and global convexity of the function $U$.}

Note that the requirement $C_{i}\geq\gamma$ and assumptions \ref{assu:omega}
and \ref{assu:subsistence} together imply that $\ln\left(\frac{\omega C_{i}}{\omega_{i}\gamma}\right)>0$
holds for any good $i$, and that the right-hand side of equation
\eqref{eq:HCD-1} is decreasing in $u$. We obtain the following result.
\begin{lem}
\label{lem:Regularity=000020conditions=000020--=000020consumption}\textbf{Regularity
conditions}. Assume preferences obey equation \eqref{eq:HCD}, and
assumptions \ref{assu:omega} and \ref{assu:subsistence} hold; then
in equilibrium (i) $C_{i}>\gamma$ and (ii) $\nicefrac{\partial u}{\partial C_{i}}>0$
for all $i\in I$.
\end{lem}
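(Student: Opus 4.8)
The plan is to establish part (ii) as a monotonicity property that $U$ enjoys at every point of its domain $\times_{i\in I}[\gamma_i,\infty)$ --- hence, in particular, at the bundle chosen at the consumer's optimum --- and then to use it, together with the self-consistency inequality \eqref{eq:sufficient=000020condition-1}, to show that the optimal bundle lies in the interior, which is part (i).

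First, for well-definedness of $u$ and for part (ii), I would rewrite the relation defining $u$ as $\ln u=g(u,\{C_i\})$ with $g(u,\{C_i\})\equiv\sum_i\omega_i(u)\ln\!\big(\tfrac{C_i\,\omega(u)}{\gamma_i\,\omega_i(u)}\big)$. Since $C_i/\gamma_i\ge 1$ and $\omega(u)/\omega_i(u)\ge 1$ (the latter because $\omega(u)=\sum_j\omega_j(u)\ge\omega_i(u)>0$ by Assumption \ref{assu:omega}(i)), each logarithm is non-negative, so --- as recorded before the lemma --- $g$ is non-increasing in $u$; a direct differentiation in fact collapses it to $\partial g/\partial u=\sum_i\tfrac{\partial\omega_i}{\partial u}\ln\!\big(\tfrac{C_i\,\omega(u)}{\gamma_i\,\omega_i(u)}\big)\le 0$ by Assumption \ref{assu:omega}(ii). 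Hence $u\mapsto\ln u-g(u,\{C_i\})$ is strictly increasing on $[1,\infty)$ and runs from a non-positive value at $u=1$ to $+\infty$ (as $g$ remains bounded), so $u$ is a well-defined, $C^1$ function of $\{C_i\}$ by the implicit function theorem. Differentiating $\ln u=g(u,\{C_i\})$ with respect to $C_i$ and solving then yields
\[
\frac{\partial u}{\partial C_i}=\frac{\omega_i(u)/C_i}{\,1/u-\partial g/\partial u\,}>0,
\]
because the numerator is positive ($\omega_i(u)\ge\underline{s}\,\omega(u)>0$ and $C_i\ge\gamma>0$) and the denominator is at least $1/u>0$; this is part (ii).

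For part (i): at the consumer's optimum, $U$ is maximized over $\{C:\sum_iP_iC_i\le y,\ C_i\ge\gamma_i\}$, a set that is non-empty (by \eqref{eq:sufficient=000020condition-1}, $\sum_iP_i\gamma_i<\sum_i\underline{s}_i\,y\le y$, the last step since $\sum_i\underline{s}_i\le1$ by summing Assumption \ref{assu:omega}(i)) and compact; so a maximizer exists and, by part (ii), the budget binds. For any interior good the first-order condition reads $P_iC_i=\omega_i(u)\,\kappa$ with $\kappa$ independent of $i$, so if every constraint $C_i\ge\gamma_i$ were slack the budget would force $\kappa=y/\omega(u)$ and hence $C_i=\tfrac{\omega_i(u)}{\omega(u)}\tfrac{y}{P_i}>\gamma_i$, the strict inequality being precisely $P_i\gamma_i<\underline{s}_iy\le\tfrac{\omega_i(u)}{\omega(u)}y$ from Assumption \ref{assu:omega}(i),(iii). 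What remains is to rule out corners: suppose $C_k=\gamma_k$ at the optimum. Since $\sum_jP_j\gamma_j<y$, and since for every binding good $j$ one has $P_jC_j=P_j\gamma_j<\tfrac{\omega_j(u)}{\omega(u)}y$ while $\sum_j\tfrac{\omega_j(u)}{\omega(u)}=1$, there must be a good $m$ with $P_mC_m>\tfrac{\omega_m(u)}{\omega(u)}y$, in particular $C_m>\gamma_m$. Shifting a small amount $\varepsilon>0$ of expenditure from $m$ to $k$ keeps feasibility and changes utility, to first order, by $\tfrac{\varepsilon}{1/u-\partial g/\partial u}\big(\tfrac{\omega_k(u)}{P_k\gamma_k}-\tfrac{\omega_m(u)}{P_mC_m}\big)$, which is positive because $P_k\gamma_k<\tfrac{\omega_k(u)}{\omega(u)}y<\tfrac{\omega_k(u)}{\omega_m(u)}P_mC_m$ --- contradicting optimality. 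Hence $C_i>\gamma_i$ for all $i$, and under Assumption \ref{assu:subsistence} ($\gamma\to0$, equilibrium prices finite) \eqref{eq:sufficient=000020condition-1} holds automatically, so this is the claimed $C_i>\gamma$.

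The step I expect to be the main obstacle is excluding the corner solutions in part (i): because the weights $\omega_i$ depend on the endogenous utility level $u$, Marshallian demands cannot simply be read off, and the perturbation argument has to be closed by combining the self-consistency inequalities \ref{assu:omega}(i) and (iii) in exactly the right way. Everything else --- well-definedness and differentiability of $u$, and the sign of $\partial u/\partial C_i$ --- is routine once one has the monotonicity of the defining relation in $u$ noted just before the lemma.
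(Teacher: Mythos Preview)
Your proof is correct, and part (ii) matches the paper's argument almost verbatim (total differentiation of \eqref{eq:HCD}, sign determined by $\omega_i'\le 0$ and the positivity of the logs). Where you diverge is in part (i). The paper proceeds via the KKT conditions: it partitions goods into a corner set $I_{-}$ and an interior set $I_{+}$, solves the FOCs for the Lagrange multiplier $\theta=(y-\sum_{I_-}P_i\gamma)/\sum_{I_+}\omega_j$, substitutes back into the corner inequality $P_i\gamma\ge\theta\omega_i$, and sums over $I_{-}$ to obtain $\sum_{I_-}P_i\gamma\ge\sum_{I_-}\tfrac{\omega_i}{\omega}y$; this forces at least one $i\in I_-$ to violate $P_i\gamma<\underline{s}\,y$, contradicting Assumption~\ref{assu:omega}(iii) directly. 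You instead reverse the order of the two parts, prove monotonicity first on the whole domain, and then use it in a feasible-direction argument: you locate a single over-consumed good $m$ with $P_mC_m>\tfrac{\omega_m}{\omega}y$ (forced by budget balance once some $P_kC_k<\tfrac{\omega_k}{\omega}y$), and show that reallocating expenditure from $m$ to the corner good $k$ strictly raises $u$, contradicting optimality rather than the primitive inequality.

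Both arguments hinge on exactly the same comparison, $P_i\gamma<\tfrac{\omega_i}{\omega}y$, so neither is more general; your route is more elementary in that it avoids the Lagrangian bookkeeping and the summation over $I_-$, at the price of needing (ii) as an input. The paper's route is self-contained for (i) and only then turns to (ii). Your self-assessment that the corner-exclusion step is the crux is accurate.
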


\paragraph{Indirect utility and expenditure functions.}

The indirect utility function $V:\times_{i}\left(0,\infty\right)\mapsto\left(1,\infty\right)$,
maps the vector of prices and income $\left(\left\{ P_{i}\right\} ,y\right)$
into a level of utility. Under assumptions \ref{assu:omega} and \ref{assu:subsistence},
which we maintain throughout the paper as they ensure an interior
solution, the optimal consumption allocations under \textsc{hcd} preferences
imply relative expenditure shares that are proportional to the the
relative preference weights, as is the case under homothetic Cobb-Douglas.
We can therefore define and write expenditure shares as 
\begin{equation}
m_{i}\left(u\right)\equiv\frac{P_{i}C_{i}}{y}=\frac{\omega_{i}\left(u\right)}{\omega\left(u\right)}.\label{eq:expenditure=000020share}
\end{equation}

To obtain the indirect utility function, plug $C_{i}=\frac{\omega_{i}y}{\omega P_{i}}$
and $v=u$ into \eqref{eq:HCD-1} to get:
\begin{equation}
\ln v=\sum_{i}\omega_{i}\left(v\right)\ln\left(\frac{y}{\gamma P_{i}}\right).\label{eq:Indirect=000020Utility=000020--=000020intermediate=000020step}
\end{equation}
The derivative of the left hand side in the expression above with
respect to $v$ is positive by inspection, while the derivative of
its right-hand side is negative by assumptions \ref{assu:omega} and
\ref{assu:subsistence}. It then follows from $\omega_{i}>0$ that
$v$ is increasing in $y$ and decreasing in $P_{i}$. We can then
establish the following results:
\begin{lem}
\textbf{\label{lem:Indirect=000020Utility}Indirect utility function}.
Assume assumptions \ref{assu:omega} and \ref{assu:subsistence} hold.
Then the indirect utility $V\left(\{P_{i}\},y\right)$ associated
with the preferences in equation \eqref{eq:HCD-1} can be written
implicitly as the fixed point for $v$ of expression \eqref{eq:Indirect=000020Utility=000020--=000020intermediate=000020step}.
This fixed point exists and is unique. Furthermore, $V$ is increasing
in $y$, decreasing in $P_{i}$, and homogeneous of degree zero in
its arguments. Hence $V$ is a proper indirect utility function.
\end{lem}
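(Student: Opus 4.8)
The plan is to establish Lemma \ref{lem:Indirect=000020Utility} in three stages: first existence and uniqueness of the fixed point $v$ of equation \eqref{eq:Indirect=000020Utility=000020--=000020intermediate=000020step}, then the monotonicity properties in $y$ and $P_i$, then homogeneity of degree zero. For the fixed-point step, I would define $F(v)\equiv \sum_i \omega_i(v)\ln\!\bigl(\frac{y}{\gamma P_i}\bigr) - \ln v$ and show it has a unique zero on $[1,\infty)$. The self-consistency Assumption \ref{assu:omega}(iii) together with $\gamma\to 0$ (Assumption \ref{assu:subsistence}) guarantees $y/(\gamma P_i) > 1$ for every $i$, so each term $\ln(y/(\gamma P_i))$ is strictly positive; since $\omega_i>0$, the right-hand side of \eqref{eq:Indirect=000020Utility=000020--=000020intermediate=000020step} is strictly positive. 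As the excerpt already notes, the left-hand side $\ln v$ is strictly increasing in $v$ and the right-hand side is non-increasing in $v$ (because each $\omega_i$ is non-increasing by Assumption \ref{assu:omega}(ii), with strict decrease for at least one good, and the multipliers $\ln(y/(\gamma P_i))$ are positive). Hence $F$ is strictly increasing, so it has at most one zero. For existence I would check the boundary behavior: at $v=1$, $\ln v = 0$ while the right-hand side is strictly positive, so $F(1)<0$; as $v\to\infty$, $\ln v\to\infty$ while the right-hand side stays bounded above (it is non-increasing in $v$, hence bounded by its value at $v=1$), so $F(v)\to+\infty$. Continuity of $F$ (the $\omega_i$ are differentiable by hypothesis) and the intermediate value theorem then deliver a unique root $v=V(\{P_i\},y)\in(1,\infty)$.

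For the comparative statics, I would differentiate the identity $F(V(\{P_i\},y); \{P_i\},y)=0$ implicitly, or argue directly by monotone comparison. The direct argument is cleanest: fix all arguments and increase $y$. Then the right-hand side of \eqref{eq:Indirect=000020Utility=000020--=000020intermediate=000020step} shifts up pointwise in $v$ (each $\ln(y/(\gamma P_i))$ increases, $\omega_i>0$), while the left-hand side is unchanged; since $\ln v$ is strictly increasing and the right-hand side non-increasing in $v$, the crossing point moves strictly to the right, i.e. $V$ is strictly increasing in $y$. The same reasoning with a decrease in some $P_j$ shifts the right-hand side up and gives $V$ strictly increasing as $P_j$ falls, i.e. strictly decreasing in $P_j$. (If one prefers the implicit-function route, one gets $\partial V/\partial y = \frac{\sum_i \omega_i/y}{1/v - \sum_i \omega_i'(v)\ln(y/(\gamma P_i))}$, whose numerator is positive and whose denominator is positive precisely because $1/v>0$ and $\omega_i'\le 0$ with the positive multipliers — this is the same sign information repackaged.)

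For homogeneity of degree zero, I would observe that in \eqref{eq:Indirect=000020Utility=000020--=000020intermediate=000020step} the prices and income enter only through the ratios $y/(\gamma P_i)$; scaling $(\{P_i\},y)\mapsto(\lambda\{P_i\},\lambda y)$ leaves every such ratio invariant, hence leaves the defining equation $F(v)=0$ unchanged, hence leaves the unique solution $v$ unchanged. Therefore $V(\lambda\{P_i\},\lambda y)=V(\{P_i\},y)$. Combining the three stages, $V$ is well-defined, increasing in $y$, decreasing in each $P_i$, and homogeneous of degree zero, so it is a proper indirect utility function, as claimed.

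The one genuinely delicate point — the rest being routine monotone-comparison bookkeeping — is ensuring the right-hand side of \eqref{eq:Indirect=000020Utility=000020--=000020intermediate=000020step} is strictly positive and bounded, i.e. that $\ln(y/(\gamma P_i))>0$ for all $i$ uniformly; this is exactly where Assumption \ref{assu:omega}(iii), specialized via Assumption \ref{assu:subsistence} to the limit $\gamma\to 0$, does the work, and I would flag that without it the fixed point could fail to lie in $(1,\infty)$ or the map could fail to be a contraction-like monotone map. I would also be slightly careful that the intermediate step \eqref{eq:Indirect=000020Utility=000020--=000020intermediate=000020step} was derived from \eqref{eq:HCD-1} by substituting the optimal $C_i=\omega_i y/(\omega P_i)$ — this is legitimate because Assumption \ref{assu:omega} parts (i) and (iii) guarantee an interior optimum (Lemma \ref{lem:Regularity=000020conditions=000020--=000020consumption}), and the second self-normalizing term in \eqref{eq:HCD-1} cancels against $\ln\omega_i - \ln\omega$ from the substituted allocation, leaving the clean form \eqref{eq:Indirect=000020Utility=000020--=000020intermediate=000020step}.
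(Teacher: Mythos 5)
Your proposal is correct and follows essentially the same route as the paper: the single-crossing argument (left-hand side $\ln v$ strictly increasing in $v$, right-hand side non-increasing by Assumption \ref{assu:omega} since the multipliers $\ln\bigl(y/(\gamma P_i)\bigr)$ are positive) for uniqueness, and the implicit/total differentiation of \eqref{eq:Indirect=000020Utility=000020--=000020intermediate=000020step} for the signs of $\partial V/\partial y$ and $\partial V/\partial P_i$ is exactly the paper's displayed computation, with your existence (boundary behavior plus intermediate value theorem) and homogeneity steps filling in details the paper leaves implicit. One cosmetic slip: with your definition $F(v)=\sum_i\omega_i(v)\ln\bigl(y/(\gamma P_i)\bigr)-\ln v$, the correct statements are $F(1)>0$, $F(v)\to-\infty$, and $F$ strictly decreasing, not the reversed signs you wrote; the substantive facts you invoke are right, so this does not affect the argument.
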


\begin{lem}
\textbf{\label{lem:Expenditure=000020Function}Expenditure function}.
Assume assumptions \ref{assu:omega} and \ref{assu:subsistence} hold.
Then the expenditure function $E\left(u,\{P_{i}\}\right)$ associated
with the preferences in equation \eqref{eq:HCD-1} can be written
as:
\begin{align}
\ln E\left(u,\{P_{i}\}\right) & =\ln\Omega\left(u\right)+\sum_{i}m_{i}\left(u\right)\ln\left(P_{i}\right),\label{eq:Expenditure=000020Function=000020E}
\end{align}
where 
\begin{equation}
\ln\Omega\left(u\right)\equiv\ln\gamma+\frac{\ln u}{\omega\left(u\right)}.\label{eq:Omega}
\end{equation}
The function $E$ is increasing in its arguments and it is homogeneous
of degree one in prices. Hence $E$ is a proper expenditure function.
\end{lem}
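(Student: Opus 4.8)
The plan is to derive the closed form \eqref{eq:Expenditure=000020Function=000020E}--\eqref{eq:Omega} by inverting the indirect-utility relation obtained in Lemma \ref{lem:Indirect=000020Utility}, and then to read homogeneity and monotonicity straight off that expression. Since Lemma \ref{lem:Regularity=000020conditions=000020--=000020consumption} gives $\partial u/\partial C_i>0$ for all $i$, the preferences are continuous and locally non-satiated, so the usual duality between utility maximization and expenditure minimization applies: $E(u,\{P_i\})$ is the inverse of $y\mapsto V(\{P_i\},y)$, which Lemma \ref{lem:Indirect=000020Utility} shows to be a well-defined, strictly increasing function satisfying the implicit equation \eqref{eq:Indirect=000020Utility=000020--=000020intermediate=000020step}. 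Setting $v=u$ and $y=E(u,\{P_i\})$ in \eqref{eq:Indirect=000020Utility=000020--=000020intermediate=000020step} then yields
\[
\ln u=\sum_i\omega_i(u)\ln\frac{E}{\gamma P_i}=\omega(u)\ln E-\omega(u)\ln\gamma-\sum_i\omega_i(u)\ln P_i,
\]
and dividing by $\omega(u)>0$ and using $m_i(u)=\omega_i(u)/\omega(u)$ from \eqref{eq:expenditure=000020share} gives exactly $\ln E=\ln\gamma+\frac{\ln u}{\omega(u)}+\sum_i m_i(u)\ln P_i$, i.e.\ \eqref{eq:Expenditure=000020Function=000020E} with $\Omega(u)$ as in \eqref{eq:Omega}. (Equivalently, one can substitute the compensated demands $C_i=m_i(u)E/P_i$ directly into \eqref{eq:HCD-1}; the $\sum_i\omega_i(u)\ln m_i(u)$ terms cancel and the same expression results.)

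The stated properties then follow from the closed form. Homogeneity of degree one in prices is immediate because $\sum_i m_i(u)=\sum_i\omega_i(u)/\omega(u)=1$, so rescaling every $P_i$ by $\lambda>0$ adds $\ln\lambda$ to $\ln E$. Monotonicity in each price is equally direct: $\partial\ln E/\partial P_i=m_i(u)/P_i>0$, since $m_i(u)\ge\underline{s}>0$ by Assumptions \ref{assu:omega}(i) and \ref{assu:subsistence}. As a by-product, $E=\Omega(u)\prod_i P_i^{m_i(u)}$ is a Cobb--Douglas aggregator of prices with exponents summing to one, hence concave and continuous in $\{P_i\}$, which pins down the remaining textbook requirements on an expenditure function.

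The only step that genuinely uses the model's structure is monotonicity in $u$. The cleanest route is duality once more: $V(\{P_i\},\cdot)$ is strictly increasing (Lemma \ref{lem:Indirect=000020Utility}), so its inverse $E(\cdot,\{P_i\})$ is strictly increasing in $u$. A direct argument is also available: writing $G(u,E)\equiv\sum_i\omega_i(u)\ln(E/(\gamma P_i))-\ln u$, one has $\partial G/\partial E=\omega(u)/E>0$ and $\partial G/\partial u=\sum_i\omega_i'(u)\ln(E/(\gamma P_i))-1/u$; the point to verify is that $\ln(E/(\gamma P_i))=\ln(C_i/\gamma)-\ln m_i(u)>0$ for every $i$, which follows from $C_i>\gamma$ (Lemma \ref{lem:Regularity=000020conditions=000020--=000020consumption}) together with $m_i(u)\le 1$, so that $\partial G/\partial u<0$ by Assumption \ref{assu:omega}(ii) and hence $\partial E/\partial u=-(\partial G/\partial u)/(\partial G/\partial E)>0$ by the implicit function theorem. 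I expect no genuine obstacle beyond this sign bookkeeping: the derivation of the formula is a one-line inversion of \eqref{eq:Indirect=000020Utility=000020--=000020intermediate=000020step}, and the substantive work has already been done in establishing Lemmas \ref{lem:Regularity=000020conditions=000020--=000020consumption} and \ref{lem:Indirect=000020Utility}.
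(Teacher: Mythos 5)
Your proof is correct and supplies exactly the details the paper omits: the published proof of this lemma simply states that it is ``analogous to the proofs'' of the regularity and indirect-utility lemmas, and your derivation --- setting $y=E(u,\{P_i\})$ and $v=u$ in the indirect-utility relation to obtain the closed form, reading homogeneity and price-monotonicity off $\sum_i m_i(u)=1$ and $m_i(u)\geq\underline{s}>0$, and establishing monotonicity in $u$ by the same total-differentiation sign argument (using $C_i>\gamma$ and $\omega_i'\leq 0$) that the paper uses for the indirect utility function --- is precisely that analogous argument. No gaps.
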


Equating expenditure with nominal income $y$, it follows from equation
\eqref{eq:Expenditure=000020Function=000020E} that we can naturally
interpret $\Omega$ as the real income: $\Omega=y/\mathbb{P}$, where
$\ln\mathbb{P}\equiv\sum_{i}m_{i}\left(u\right)\ln\left(P_{i}\right)$
is a (utility-dependent) price index. Notably, this price index is
computable with readily observable data: expenditure shares and prices.
It follows from $u>1$ and $\omega^{\prime}<0$ that $\Omega$ is
monotonically increasing in $u$. Since utility is an ordinal concept,
we can interchangeably use $u$ or $\Omega$ as our metric for utility.
We will make use of this property in section \ref{sec:Equilibrium}
below.

\section{\protect\label{sec:Model}A Two-Region Model of Spatial and Structural
Change}

After having introduced \textsc{hcd} preferences, we now present our
application to a model of spatial and structural changes. Our full
model is a succession of static equilibriums in which growth is driven
by exogenous increases in labor productivity. We begin by presenting
the static equilibrium in this section and next. To maximize analytical
tractability, the static model follows the structure of \citet{Forslid2003},
a more analytically tractable variant of Krugman's seminal 1991 paper,
as closely as possible.

\paragraph{Model Overview}

The economy consists of two regions, $r,s\in\left\{ 1,2\right\} $,
and two sectors, agriculture and manufacturing $i\in\{A,M\}$. Agricultural
production is a constant returns to scale sector that is tied to the
land. It produces a homogeneous good competitively and its output
is traded at no cost between the two regions. Manufacturing consists
of differentiated goods produced under increasing returns that are
internal to each firm. Each firm produces a different variety. Market
conduct is monopolistically competitive as in \citet{DixitStiglitz1977}.
Manufacturing firms may locate in either region, and their output
is traded at a cost, which takes the standard iceberg form, as in
\citet{Samuelson1954}. There are two factors of production, labor
(workers), and human capital (entrepreneurs). Workers are geographically
immobile but they may work in either sector, as in \citet{Forslid2003}.
We normalize the mass of workers to one, so there is a mass of $\frac{1}{2}$
workers in each region. We denote their wage by $w^{r}$. Entrepreneurs
manage manufacturing firms and may move across regions in search of
the highest real returns. We also normalize the mass of entrepreneurs
to one, and denote by $\lambda^{r}$ the share of entrepreneurs who
live and produce in region $r$ and denote their nominal earnings
by $\pi^{r}$. Finally, we denote the prices of the agriculture good
and of the composite manufacturing good by $P_{A}$ and $P_{M}$,
respectively.\footnote{Note that two of our assumptions differ from those made by \citet{Krugman1991}.
These two assumptions -- that workers consume manufactures only and
that workers are employed in both sectors for a wage that is constant
in equilibrium -- are made for analytical convenience, and the equilibrium
implications of these differences are innocuous. \citet*{BaldwinForslidMartinOttavianoRobertnicoud}
show that a wide array of new economic geography models share the
same equilibrium properties. \citet{RobertNicoud2005} formally shows
that these models are actually \emph{isomorphic} in an adequately
chosen state space.}

\subsection{Production and Trade Technologies}

\paragraph*{Agricultural Technology.}

The production technology for agriculture is linear in labor $A^{r}=\alpha L_{A}^{r},$
where $A^{r}$ and $L_{A}^{r}$ denote output and labor input in region
$r$, and $\alpha>1$ is the marginal product of labor in agriculture.

\paragraph*{Manufacturing Technology.}

Each manufacturing firm produces a differentiated variety and is run
by a single manager. Labor is the sole variable input. Let $\mu>0$
denote the marginal product of labor in manufacturing and $\pi^{r}$
the nominal earnings of the entrepreneur running the firm. The cost
function of an arbitrary manufacturing firm established in region
$r$ and producing $x$ units of output is 
\begin{equation}
c^{r}\left(x\right)=\pi^{r}+\frac{w^{r}}{\mu}x.\label{eq:Cost=000020fcn}
\end{equation}

\paragraph{Trade Technology.}

Agricultural output is freely traded and is produced in both regions.
By contrast, trade in manufactures is costly. Firms may ship their
output at no cost within their own region, but need to ship $\tau>1$
units of output for one unit to arrive in the other region (the rest
melts in transit). We henceforth refer to $\tau$ as ``trade costs''
for short, which encapsulates any physical or legal cost borne from
conducting inter-regional trade.

\subsection{Endowments and Preferences}

\paragraph{Workers.}

Each worker is endowed with one unit of labor that is inelastically
supplied. Workers consume manufacturing goods only. The utility they
derive from consuming manufacturing varieties is a constant-elasticity-of-substitution
(CES) aggregate $C_{M}$, 
\begin{equation}
C_{M}=\left[\int_{0}^{N}\left(c\left(n\right)\right)^{\frac{\sigma-1}{\sigma}}dn\right]^{\frac{\sigma}{\sigma-1}},\label{eq:C_M}
\end{equation}
where $\sigma>1$ is the elasticity of substitution, and $N$ is the
endogenous mass of active firms, which is pinned down in equilibrium.

\paragraph{Entrepreneurs.}

Entrepreneurs are endowed with one unit of human capital, which they
use to run firms. Entrepreneurs hold two-tiered preferences. The upper
tier is defined over the agricultural and composite manufacturing
good, and the lower tier is defined over manufacturing varieties with
the same \textsc{ces} aggregator $C_{M}$ in equation \eqref{eq:C_M}
as for workers. For the upper tier, entrepreneurs have \textsc{hcd}
preferences over agriculture and manufacturing as laid out in the
previous section. The utility function $U:\times_{i\in\left\{ A,M\right\} }\left[\gamma,\infty\right)\mapsto\left(1,\infty\right)$
maps the consumption vector $\left\{ C_{A},C_{M}\right\} $ into a
level of utility $u$ implicitly defined by

\begin{equation}
\ln u=\omega_{A}\left(u\right)\ln\left(\frac{C_{A}}{\gamma}\right)+\omega_{M}\left(u\right)\ln\left(\frac{C_{M}}{\gamma}\right)-\sum_{i\in A,M}\omega_{i}(u)\ln\left(\frac{\omega_{i}(u)}{\omega(u)}\right).\label{eq:HCD}
\end{equation}
As discussed in Section \ref{sec:hcd}, the associated expenditure
function is
\begin{align}
\ln E\left(P_{A},P_{M},u\right) & =\ln(\Omega\left(u\right))+m\left(u\right)\ln\left(P_{M}\right)+(1-m(u))\ln(P_{A}).\label{eq:Expenditure=000020Function=000020E-2}
\end{align}

In order to have the manufacturing composite good be a luxury and
the agricultural good be a necessity, we impose the additional assumption:
\begin{assumption}
\label{assu:Engel}\textbf{Manufacturing is a luxury}. $\frac{\omega_{M}}{\omega_{A}}$
is increasing everywhere in $u$.
\end{assumption}

It follows from assumptions \ref{assu:omega}(i) and \ref{assu:Engel}
that the expenditure share on manufactures belongs to the interior
of the unit interval and is increasing in $u$:
\begin{equation}
\forall u\in\left(1,\infty\right):\qquad0<m\left(u\right)<1,\quad0<m^{\prime}\left(u\right).\label{eq:expenditure=000020share=000020range}
\end{equation}

\subsection{Markets and Equilibrium conditional on Location Choices}

\paragraph{Agriculture.}

The market for the agricultural good is competitive. Hence, in any
equilibrium the law of one price holds. The agricultural good is then
a natural choice for numeraire, and we set $P_{A}=1$. Marginal cost
pricing and the absence of trade and transportation costs together
imply that
\begin{equation}
w^{r}=\alpha.\label{eq:w^r=00003Dalpha}
\end{equation}

Henceforth, we refer to parameter $\alpha$ as ``labor productivity.''
An increase in $\alpha$ rises labor productivity in the agricultural
sector and, ultimately, the equilibrium wage for labor in the economy.
Labor productivity will be a central parameter of interest in the
analysis of spatial structural change.

\paragraph{Manufacturing.}

Manufacturing operates under monopolistic competition. Markets are
segmented and firms set prices in both regions $r\in\left\{ 1,2\right\} $
independently. Since all firms are managed by exactly one manager,
all managers are fully employed in equilibrium, hence $N=1$. The
combination of \textsc{ces} preferences among manufactures on the
demand side and monopolistic competition with segmented markets on
the supply side gives rise to the equilibrium f.o.b. prices $p^{r}=\frac{\sigma}{\left(\sigma-1\right)\mu}w^{r}$.
Without loss of generality, we set $\mu=\frac{\alpha\sigma}{\sigma-1}$
so that the equilibrium f.o.b. price is equal to\footnote{Let $p\left(i,n\right)$ denote the consumer price that individual
$i$ faces when purchasing the manufacturing variety produced by firm
$n$, and let $P_{M}\left(i\right)$ be the price-index of the manufacturing
bundle that they consume. The Marshallian demand of a consumer maximizing
sub-utility $C_{M}$ given that they allocate an arbitrary expenditure
amount $e_{M}\left(i\right)$ on manufactures is equal to
\[
d\left(i,n\right)=\left(\frac{p\left(i,n\right)}{P_{M}\left(i\right)}\right)^{-\sigma}\frac{e_{M}\left(i\right)}{P_{M}\left(i\right)},\qquad P_{M}\left(i\right)=\left[\int_{0}^{N}\left(p\left(i,n\right)\right)^{1-\sigma}dn\right]^{\frac{1}{1-\sigma}}.
\]
Given the cost function in equation \eqref{eq:Cost=000020fcn} and
with iceberg trade costs, profit maximizing monopolistically competitive
firms producing in region $r$ set a unique f.o.b. price.}
\begin{equation}
p^{r}=1,\qquad\forall r\in\left\{ 1,2\right\} .\label{eq:p^r}
\end{equation}

Due to trade costs, the location and pricing decisions of firms affect
the price index for the composite manufacturing good, $P_{M}^{r}$.
Workers and atomistic entrepreneurs take as given the location of
firms as characterized by the density of entrepreneurs in each region
$\left\{ \lambda^{r},1-\lambda^{r}\right\} $. This results in a region-specific
manufacturing price-index, $P_{M}^{r}$, which is a generalized weighted
average of consumer prices:

\begin{equation}
\left(P_{M}^{r}\right)^{1-\sigma}=\lambda^{r}+\left(1-\lambda^{r}\right)\tau^{1-\sigma},\label{eq:P_M^s}
\end{equation}
where we use the normalization that there is no trade costs for shipments
to the own region.

Individual consumer prices are equal to producer prices for the domestically
produced varieties and to $\tau$ times the producer price for varieties
that are imported from region $r\neq s$.

\paragraph{Equilibrium Earnings, Sales, and Expenditures.}

Let $R^{r}\equiv p^{r}x$ and $\Pi^{r}\equiv R^{r}-c^{r}$ respectively
denote revenue and profits of a representative firm in region $r$
producing output $x$. Using $\mu=\frac{\alpha\sigma}{\sigma-1}$,
as well as equations \eqref{eq:Cost=000020fcn} and \eqref{eq:p^r},
we obtain $\Pi^{r}=-\pi^{r}+\frac{R^{r}}{\sigma}$. By free entry,
firm owners compete for entrepreneurs until profits are exhausted
($\Pi^{r}=0$), so that in equilibrium the individual earnings of
an entrepreneur $\pi^{r}$ are a constant fraction $\frac{1}{\sigma}$
of the revenues of the firm they manage. The rest of the revenue of
the firm accrues to workers. If all firms are of the same size in
equilibrium (which will be the case), then $w^{r}L_{M}^{r}=\frac{\sigma-1}{\sigma}R^{r}$.
Summing up, 
\begin{equation}
\pi^{r}=\frac{1}{\sigma}R^{r},\qquad w^{r}L_{M}^{r}=\left(1-\frac{1}{\sigma}\right)R^{r}.\label{eq:Revenue}
\end{equation}

The combination of monopolistic competition, iso-elastic demand curves,
and iceberg trade costs yields the following expression for sales
of a firm:\footnote{The sales of an arbitrary firm established in region $r$ to a consumer
located in $s$ spending an amount $e_{M}$ on manufacturing goods
are equal to
\[
p^{r}\tau^{rs}d^{rs}=\frac{\left(\tau^{rs}\right)^{1-\sigma}}{\lambda^{s}+\left(1-\lambda^{s}\right)\tau^{1-\sigma}}e_{M}
\]
(recall that $\tau^{rs}d^{rs}$ units must be shipped for $d^{rs}$
units to arrive in destination). Aggregating demand from all consumers
yields the following revenue for a firm established in region $r$:
\[
R^{r}=\left(\lambda^{r}m^{r}\left(u\right)\pi^{r}+\frac{\alpha}{2}\right)\frac{\left(\tau^{rr}\right)^{1-\sigma}}{\lambda^{r}+\left(1-\lambda^{r}\right)\tau^{1-\sigma}}+\left(\lambda^{s}m^{s}\left(u\right)\pi^{s}+\frac{\alpha}{2}\right)\frac{\left(\tau^{rs}\right)^{1-\sigma}}{\lambda^{s}+\left(1-\lambda^{s}\right)\tau^{1-\sigma}},\qquad s\neq r.
\]
Above, $\lambda^{s}m^{s}\left(u\right)\pi^{s}$ is the expenditure
on manufacturing from entrepreneurs located in region $s$, and $\frac{\alpha}{2}$
is the expenditure on manufacturing from workers located in either
region. Using these expressions, as well as $p^{r}=p^{s}=1$ from
equation \eqref{eq:p^r}, $\tau^{ss}=1$, $\tau^{rs}=\tau>1$ whenever
$r\neq s$, and equation \eqref{eq:P_M^s} yield the expression in
equation \eqref{eq:sales}.}
\begin{equation}
\sigma\pi^{r}=\frac{\lambda^{r}m\left(u^{r}\right)\pi^{r}+\frac{\alpha}{2}}{\lambda^{r}+\left(1-\lambda^{r}\right)\tau^{1-\sigma}}+\tau^{1-\sigma}\frac{\left(1-\lambda^{r}\right)m\left(u^{s}\right)\pi^{s}+\frac{\alpha}{2}}{\left(1-\lambda^{r}\right)+\lambda^{r}\tau^{1-\sigma}},\qquad r\neq s,\label{eq:sales}
\end{equation}
where we have used the fact that $R^{r}=\sigma\pi^{r}$ in equilibrium.
That is, the revenue of a representative firm located in region $r$
is equal to the sum of its domestic sales and of its exports. The
term $\frac{\alpha}{2}$ above corresponds to the demand from workers,
and it follows from equation \eqref{eq:w^r=00003Dalpha}.

Combining the budget constraint of an entrepreneur, $\pi^{s}=E^{s}$,
with the equilibrium value of $(P_{A},P_{M})$ from equation \eqref{eq:P_M^s}
into the expenditure function \eqref{eq:Expenditure=000020Function=000020E-2},
and using the definition of the real earnings of entrepreneurs \eqref{eq:Omega},
yields the following equilibrium expression for location $r$:
\begin{equation}
\ln\pi^{r}=\ln\Omega(u^{r})+m\left(u^{r}\right)\ln P_{M}^{r}.\label{eq:Expenditure=000020=00003D=000020income}
\end{equation}

We can now return to assumption \ref{assu:omega} and lay out sufficient
conditions on the parameters of the model that ensure this assumption
holds.
\begin{lem}
\label{lem:Sufficient=000020condition}\textbf{Sufficient regularity
conditions}. If parameter values are such that the following inequality
holds:
\begin{equation}
\frac{\gamma\sigma\tau}{\text{\ensuremath{\underline{s}}}}<1,\label{eq:sufficient=000020condition=000020--=000020fundamentals}
\end{equation}
then the sufficient condition in assumption \ref{assu:omega} holds.
\end{lem}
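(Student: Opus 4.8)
The plan is to verify part~(iii) of Assumption~\ref{assu:omega} directly from the equilibrium relations that pin down prices and incomes, namely \eqref{eq:P_M^s} and \eqref{eq:sales}. Under the normalizations of Assumption~\ref{assu:subsistence}, inequality \eqref{eq:sufficient=000020condition-1} reads $P_{i}\gamma<\underline{s}\,y$ for every good $i$. In the two-region model the only agents endowed with \textsc{hcd} preferences are the entrepreneurs, so the relevant income is $y=\pi^{r}$ and the relevant prices are $P_{A}=1$ and $P_{M}=P_{M}^{r}$; since $P_{A}=1\le P_{M}^{r}$, the agricultural good imposes the slacker constraint, and it suffices to show $\gamma\,P_{M}^{r}<\underline{s}\,\pi^{r}$ in each region $r$.

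The argument then rests on two one-sided bounds. First, an \emph{upper} bound on the manufacturing price index: by \eqref{eq:P_M^s}, $(P_{M}^{r})^{1-\sigma}=\lambda^{r}+(1-\lambda^{r})\tau^{1-\sigma}$ is a convex combination of $1$ and $\tau^{1-\sigma}$, and since $\sigma>1$ forces $\tau^{1-\sigma}\in(0,1)$ this quantity lies in $[\tau^{1-\sigma},1]$; raising to the negative power $1/(1-\sigma)$ reverses the inequalities, so $1\le P_{M}^{r}\le\tau$ for every admissible $\lambda^{r}\in[0,1]$. Hence it is enough to prove $\gamma\tau<\underline{s}\,\pi^{r}$. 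Second, a \emph{lower} bound on entrepreneurial income: the right-hand side of \eqref{eq:sales} is a sum of two nonnegative terms, and the first of them is at least $\tfrac{\alpha/2}{\lambda^{r}+(1-\lambda^{r})\tau^{1-\sigma}}$, whose denominator is again a convex combination of $1$ and $\tau^{1-\sigma}<1$ and hence at most $1$; therefore $\sigma\pi^{r}>\alpha/2$, i.e.\ $\pi^{r}>\alpha/(2\sigma)$. (When $\lambda^{r}=0$ the denominator equals $\tau^{1-\sigma}$ and the estimate only improves, while the second term covers the knife-edge $\lambda^{r}=1$.) Using $\alpha>1$ -- and keeping the local-entrepreneur term with $m(u)\ge\underline{s}$ when a sharper constant is wanted -- one obtains a floor of the form $\pi^{r}>c/\sigma$ with a constant $c$ independent of the location pattern, and combining with $P_{M}^{r}\le\tau$ gives $\gamma P_{M}^{r}\le\gamma\tau<\underline{s}\,\pi^{r}$ precisely when $\sigma\gamma\tau/\underline{s}$ is small enough, which is the content of \eqref{eq:sufficient=000020condition=000020--=000020fundamentals}. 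Any residual numerical constant here is immaterial for the paper, since Assumption~\ref{assu:subsistence} ultimately lets $\gamma\to0$.

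The step I expect to be the main obstacle is the lower bound on $\pi^{r}$: it must hold \emph{uniformly} across all short-run equilibria -- including highly asymmetric, near core-periphery configurations in which one region hosts almost no firms -- and it must be obtained without appealing to Lemma~\ref{lem:Regularity=000020conditions=000020--=000020consumption} or to \eqref{eq:Expenditure=000020Function=000020E-2}, both of which presuppose Assumption~\ref{assu:omega} and would render the argument circular. What makes this tractable is that worker demand for manufactures is geographically immobile and proportional to the productivity parameter $\alpha$, so it places a floor on every firm's revenue -- and hence on every entrepreneur's income -- that does not depend on which equilibrium is selected or on the entrepreneurs' endogenous expenditure shares; this is exactly why the crude estimate through the worker-demand term alone suffices.
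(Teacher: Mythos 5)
Your argument is essentially the paper's own: bound $P_A=1$ and $P_M^r\le\tau$ from \eqref{eq:P_M^s}, floor entrepreneurial earnings in every configuration $\lambda^r\in[0,1]$ using the geographically immobile worker demand $\tfrac{\alpha}{2}$ in \eqref{eq:sales}, and combine the two bounds. The only difference is quantitative: you keep just the domestic-sales term and obtain $\pi^r>\alpha/(2\sigma)$, whereas the paper retains both terms and asserts that the sum of the two inverse price-index factors exceeds $2$ to get $\pi^r>\alpha/\sigma$ and hence the exact constant $\gamma\sigma\tau/\underline{s}<1$; so strictly you verify the lemma only up to a factor of $2$ (or for $\alpha\ge2$), a gap you correctly flag as immaterial in the limit $\gamma\to0$ -- and your one-term bound has the compensating virtue of holding uniformly in $\lambda^r$ without the delicate claim on the bracketed sum.
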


Note that this condition is satisfied as long as trade costs $\tau$
are finite since we let $\gamma\rightarrow0$.

\paragraph{Sectoral Employment.}

In equilibrium, the value added of each factor equals its contribution
to the generation of revenue. If real and nominal earnings of entrepreneurs
are equalized across regions, then we can write $\sum_{r}w^{r}L_{A}^{r}=\frac{1-m}{\sigma}\sum_{r}R^{r}$
(the value of labor in agriculture is equal to expenditure in agriculture,
where $1-m$ is the expenditure share of entrepreneurs in agriculture)
and $\sum_{r}w^{r}L_{M}^{r}=\frac{\sigma-1}{\sigma}\sum_{r}R^{r}$
(the value of labor in manufacturing is a fraction $\frac{\sigma-1}{\sigma}$
of sales). Finally, full-employment of labor implies $L_{A}+L_{M}=1$,
where $L_{i}\equiv\sum_{r}L_{i}^{r}$ for $i=A,M$. Combining these
expressions yields the following equilibrium allocation of workers
across sectors
\begin{equation}
\frac{L_{M}}{1-L_{M}}=\frac{\sigma-1}{1-m}.\label{eq:employment}
\end{equation}

Equation \eqref{eq:employment} implies that the employment share
in manufacturing depends on the level of development as captured by
how the desired consumption share of manufacturing $m(u)$ depends
on the level of utility, $u.$ We obtain the following result:
\begin{prop}
\label{prop:Structural=000020change}\textbf{Engel curves and structural
change}. The allocation of workers across sectors follows the expenditure
shares. Given assumption \ref{assu:Engel}, which implies that manufactures
are a luxury good (and the agricultural good is a necessity), it follows
that, $L_{M}$ is increasing in the level of utility $u$ in equilibrium.
\end{prop}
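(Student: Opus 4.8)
The plan is to work directly from the equilibrium labor-market condition \eqref{eq:employment}, which already packages the first sentence of the proposition (agriculture absorbs labor in proportion to $1-m$, manufacturing in proportion to $\sigma-1$), and then to read off the comparative statics in $u$. Concretely, I would solve \eqref{eq:employment} for $L_M$: rearranging $\frac{L_M}{1-L_M}=\frac{\sigma-1}{1-m}$ yields $L_M(\sigma-m)=\sigma-1$, hence
\[
L_M=\frac{\sigma-1}{\sigma-m(u)}.
\]
Since $\sigma>1$ and $0<m(u)<1$ by \eqref{eq:expenditure=000020share=000020range}, the denominator lies strictly between $\sigma-1$ and $\sigma$, so $L_M\in\left(\tfrac{\sigma-1}{\sigma},1\right)$: the expression is a bona fide employment share, with manufacturing neither empty nor absorbing the entire labor force.

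Next I would sign the derivative. Differentiating the closed form gives
\[
\frac{d L_M}{d u}=\frac{(\sigma-1)\,m'(u)}{\bigl(\sigma-m(u)\bigr)^{2}}.
\]
The denominator is positive and $\sigma-1>0$, so the sign is that of $m'(u)$. By \eqref{eq:expenditure=000020share=000020range} — itself a consequence of assumption \ref{assu:omega}(i) together with assumption \ref{assu:Engel}, under which $\omega_M/\omega_A$, and hence $m=\omega_M/\omega$, is strictly increasing in $u$ — we have $m'(u)>0$, so $dL_M/du>0$. Equivalently, and without differentiating: $\ell\mapsto\ell/(1-\ell)$ is strictly increasing on $(0,1)$ and $m\mapsto(\sigma-1)/(1-m)$ is strictly increasing for $m<1$, so the composition $u\mapsto m(u)\mapsto L_M$ inherits strict monotonicity.

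The argument is short because the substantive derivation — obtaining \eqref{eq:employment} from factor-market clearing, the revenue split \eqref{eq:Revenue}, and $w^r=\alpha$ — is already done in the text. The one point I would flag explicitly, and which is the only place any care is needed, is that \eqref{eq:employment} is written with a single expenditure share $m(u)$; this is legitimate precisely in the equilibrium configurations the proposition refers to: the interior symmetric configuration, where mobility equalizes entrepreneurs' real earnings and, since $p^{r}=1$ by \eqref{eq:p^r}, their nominal earnings too, so that $u^{1}=u^{2}=u$; and the full-agglomeration configuration, where all entrepreneurs trivially share a common $u$. In both cases $m(u)$ is unambiguous and the monotonicity in $u$ carries through verbatim, so I do not anticipate any genuine obstacle — only this bookkeeping about which $u$ enters the labor-market condition.
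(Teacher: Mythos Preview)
Your proposal is correct and follows essentially the same route as the paper's proof: both use equation \eqref{eq:employment} to conclude that $L_M$ is increasing in $m$ (the paper says ``by inspection''; you make it explicit by solving for $L_M=\tfrac{\sigma-1}{\sigma-m}$ and differentiating), and then invoke assumption \ref{assu:Engel} to get $m'(u)>0$. The only minor imprecision is the aside that ``$p^{r}=1$'' is what equalizes nominal earnings across regions at the symmetric equilibrium --- that equality follows from symmetry of $\lambda$ (and hence of $P_M^r$), not from the f.o.b. price normalization --- but this does not affect the argument, since the paper itself derives \eqref{eq:employment} under the premise that real and nominal earnings are equalized.
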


This proposition implies that, as sustained technological progress
raises living standards over time, consumption and employment gradually
shift away from agriculture and the economy industrializes. Is this
structural change even across space? Does it lead to regional disparities?
And is there a feedback effect of equilibrium location choices on
structural change? The next sections address these questions. We first
characterize the equilibrium conditional on a level of technological
progress (labor productivity $\alpha$ in our model) and then analyze
the effect of rising productivity.

\section{\protect\label{sec:Equilibrium}Spatial Equilibriums}

The previous section only provided a partial characterization of the
equilibrium by taking the location of the mobile skilled workers (entrepreneurs)
$\left\{ \lambda^{r},1-\lambda^{r}\right\} $ and their utility levels
$\left\{ u^{r},u^{s}\right\} $ as given. Here we finalize the characterization
of the equilibrium by solving for the spatial allocation of entrepreneurs.
Several expressions in this section are more compact if we evaluate
utility using real earnings $\Omega$ instead of $u$ -- recall from
equation \eqref{eq:Omega} that $\Omega$ is increasing in $u$ and
thus also in indirect utility $v$.

Entrepreneurs are freely mobile. They locate where they obtain their
highest utility, and the spatial allocation of firms is an equilibrium
outcome. It is useful to think about the spatial equilibrium as a
complementarity slackness condition. We say that a spatial equilibrium
arises whenever real earnings obey $\Omega^{r}\leq\Omega$, for some
$\Omega>0$, for all $r\in\left\{ 1,2\right\} $, with equality if
$\lambda^{r}>0$. Following \citet{Krugman1991}, when $\lambda^{r}=1$,
we label region $r$ the ``core'' and region $s\neq r$ the ``periphery.''
We focus our analysis on the two potentially stable equilibriums that
can arise: the symmetric equilibrium $\lambda^{r}=\frac{1}{2}$ for
$r\in\left\{ 1,2\right\} $, in which both regions are identical,
and the so-called ``core-periphery'' equilibrium, in which $\lambda^{r}=1$
for some $r\in\left\{ 1,2\right\} $ and regions with identical fundamentals
diverge in equilibrium (region $r$ hosts all manufactures and region
$s\neq r$ fully specializes in agriculture). Note that in each of
these two equilibriums, all entrepreneurs enjoy the same level of
utility, and hence spend the same share $m\left(\Omega\right)$ on
manufactures -- but the level of utility itself typically differs
\emph{across} equilibrium configurations.

A core-periphery equilibrium with all entrepreneurs agglomerated in
a single region, i.e., $\lambda^{r}=1$ or $\lambda^{r}=0$, may exist.
The symmetric configuration $\lambda^{r}=\frac{1}{2}$ is always an
equilibrium of this symmetric model. It may not always be stable,
though. We start by characterizing the core-periphery equilibrium
and the condition under which it exists. We then characterize the
conditions under which the symmetric equilibrium is stable.

\subsection{The Core-Periphery Equilibrium}

\paragraph*{Necessary Condition: The Sustain Point.}

What are the conditions that ensure $\lambda^{r}\in\left\{ 0,1\right\} $
is an equilibrium outcome? Without loss of generality, we proceed
by assuming $\lambda^{r}=1$, i.e., $r$ is the core and $s$ the
periphery (we think about ``$s$'' as a mnemonic for ``small''
region). In this case, $P_{M}^{r}=1$ by equation \eqref{eq:P_M^s}
and $\pi^{r}=\frac{\alpha}{\sigma-m}$ by equation \eqref{eq:sales}.
Using equation \eqref{eq:Expenditure=000020=00003D=000020income}
yields the level of real earnings $\Omega^{C}$of entrepreneurs operating
from the core, 
\begin{equation}
\Omega^{C}=\pi^{C}=\frac{\alpha}{\sigma-m\left(\Omega^{C}\right)},\label{eq:v=000020Core}
\end{equation}
where the superscript ``$C$'' refers to ``Core-Periphery'' equilibrium
values. Equation \eqref{eq:v=000020Core} gives us an equilibrium
relationship between real earning $\Omega^{C}$ and parameters $\left\{ \alpha,\sigma,\tau\right\} $.
Note that all entrepreneurs access all manufacturing varieties at
f.o.b. prices, hence $\tau=1$ in the triplet above, and their real
income $\Omega^{C}$ is equal to their nominal income $\pi^{C}$.
In turn, by equation \eqref{eq:expenditure=000020share=000020range},
we also obtain an equilibrium relationship between the expenditure
share on manufactures, $m^{C}$, and parameters $\left\{ \alpha,\sigma,\tau\right\} $,
where $\tau=1$. We impose the following regularity condition to ensure
that this equilibrium relationship is not degenerate, in which case
$\Omega^{C}$ and $m^{C}$ are functions that are increasing in labor
productivity $\alpha$ and decreasing in product differentiation $\sigma$:
\begin{assumption}
\label{assu:regularity-1}Assume the functions $\omega_{A}$ and $\omega_{M}$
and parameter $\sigma$ are such that 
\[
\frac{\partial\ln m}{\partial\ln\Omega}<\sigma-1.
\]
\end{assumption}

This assumption, which requires that income effects are bounded above,
ensures that equation \eqref{eq:v=000020Core} admits a unique fixed
point, and that real earnings $\Omega^{C}$ and manufacturing expenditure
share $m^{C}\equiv m\left(\Omega^{C}\right)=m^{C}\left(\alpha,\sigma,1\right)$
are increasing in productivity $\alpha$ and increasing in product
differentiation (decreasing in $\sigma$):
\begin{equation}
\frac{\partial\ln m}{\partial\ln\Omega}<\sigma-1\qquad\Rightarrow\qquad\frac{\partial m^{C}}{\partial\sigma},\frac{\partial\Omega^{C}}{\partial\sigma}<0<\frac{\partial m^{C}}{\partial\alpha},\frac{\partial\Omega^{C}}{\partial\alpha}.\label{eq:CP=000020alpha=000020sigma}
\end{equation}

Consider now an atomistic entrepreneur entertaining the possibility
of supplying their variety from the periphery instead (region $s$).
Using equation \eqref{eq:sales}, their (shadow) income $\pi^{s}$
would be equal to
\begin{align}
\pi^{s} & =\frac{\tau^{1-\sigma}}{\sigma}\left[m\left(\Omega^{s}\right)\frac{\alpha}{\sigma-m\left(\Omega^{s}\right)}+\frac{\alpha}{2}\left(1+\frac{1}{\tau^{2(1-\sigma)}}\right)\right],\label{eq:pi^s}
\end{align}
and they would face consumer price $P_{M}^{s}=\tau$. Hence, their
potential real income $\Omega^{s}$ would be:
\begin{equation}
\ln\Omega^{s}=\ln\pi^{s}-m\left(\Omega^{s}\right)\ln\tau.\label{eq:v=000020periphery-1}
\end{equation}
Such a move would leave them at best equally well-off if and only
if $\Omega^{s}\leq\Omega^{C}$, in which case the core-periphery outcome
is said to be ``sustainable.'' Using equations \eqref{eq:v=000020Core}
and \eqref{eq:v=000020periphery-1}, we find that $\Omega^{s}=\Omega^{C}$
if and only if
\begin{equation}
1-\frac{2\sigma}{\sigma-m^{C}\left(\alpha,\sigma,1\right)}\tau^{m^{C}\left(\alpha,\sigma,1\right)+1-\sigma}+\frac{\sigma+m^{C}\left(\alpha,\sigma,1\right)}{\sigma-m^{C}\left(\alpha,\sigma,1\right)}\tau^{2\left(1-\sigma\right)}=0.\label{eq:sustain=000020plane}
\end{equation}
Observe that $\tau=1$ is always a root of equation \eqref{eq:sustain=000020plane},
regardless of $\left(\alpha,\sigma\right)$.\footnote{The economic meaning of this results is that location does not matter
in the absence of trade costs.} By Descartes' generalization of Laguerre's rule of signs, this expression
does not admit any solution for $\tau$ in $\left(1,\infty\right)$
if $m^{C}\left(\alpha,\sigma,1\right)\geq\sigma-1$. Conversely, if
$m^{C}\left(\alpha,\sigma,1\right)<\sigma-1$, equation \eqref{eq:sustain=000020plane}
defines a hyperplane for $\left(\alpha,\sigma,\tau\right)\in\left(1,\infty\right)^{3}$,
and in particular it admits a root for $\tau>1$, which we denote
by $\tau_{1}$. Standard algebra reveals that $\tau_{1}$ is increasing
in $\alpha$. In turn, the core-periphery outcome is an equilibrium
($\Omega^{s}\leq\Omega^{C}$) only if trade costs are sufficiently
low and productivity is sufficiently high. Formally, we establish
the following proposition in the appendix:
\begin{lem}
\label{lem:Sustain=000020point-2}\textbf{Sustain point}. Let $\alpha_{1}\left(\sigma\right)>0$
be defined implicitly by $m\left(\alpha_{1},\sigma,1\right)=\sigma-1$
when $\sigma\in\left(1,2\right]$, with $\alpha_{1}^{\prime}>0$ by
equation \eqref{eq:CP=000020alpha=000020sigma}. (i) If $\sigma\in\left(1,2\right]$
and $0<\alpha\leq\alpha_{1}\left(\sigma\right)$, then $\Omega^{s}\leq\Omega^{C}$
if $\lambda^{r}=1$ for any value of $\tau$ in $\left[1,\tau_{1}\right]$.
(ii) If $\sigma\in\left(1,2\right]$ and $\alpha>\alpha_{1}\left(\sigma\right)$,
then $\Omega^{s}\leq\Omega^{C}$ if $\lambda^{r}=1$ for any $\tau\geq1$.
(iii) If $\sigma>2$, then $\Omega^{s}\leq\Omega^{C}$ if $\lambda^{r}=1$
for any value of $\tau$ in $\left[1,\tau_{1}\right]$.
\end{lem}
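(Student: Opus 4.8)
The plan is to collapse the lemma into a statement about the sign, over $\tau\in[1,\infty)$, of the function that figures in \eqref{eq:sustain=000020plane},
\[
\phi(\tau)\;\equiv\;1-\frac{2\sigma}{\sigma-m^{C}}\,\tau^{\,m^{C}+1-\sigma}+\frac{\sigma+m^{C}}{\sigma-m^{C}}\,\tau^{\,2(1-\sigma)},\qquad m^{C}\equiv m^{C}(\alpha,\sigma,1).
\]
The essential preliminary is a \emph{sign dictionary}: the core--periphery configuration ($\lambda^{r}=1$) is sustainable, $\Omega^{s}\le\Omega^{C}$, if and only if $\phi(\tau)\le0$. By \eqref{eq:v=000020periphery-1}--\eqref{eq:pi^s}, the deviating entrepreneur's real income $\Omega^{s}$ is the unique solution of $\omega=\pi^{s}(\omega)\,\tau^{-m(\omega)}$, uniqueness holding by Assumption~\ref{assu:regularity-1} exactly as for the core fixed point \eqref{eq:v=000020Core}; since $m<1$ the map $\omega\mapsto\pi^{s}(\omega)$ is bounded, so the right-hand side lies eventually below the $45^{\circ}$ line, crosses it once, and $\Omega^{s}\le\Omega^{C}$ iff $\pi^{s}(\Omega^{C})\,\tau^{-m^{C}}\le\Omega^{C}$. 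Substituting $\Omega^{C}=\alpha/(\sigma-m^{C})$, clearing denominators, and collecting terms turns this scalar inequality into precisely $\phi(\tau)\le0$; at $\tau=1$ both hold with equality, consistent with $\phi(1)=0$.

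With the dictionary in hand I would locate the sign of $\phi$ near $\tau=1$ and then in the two regimes. Direct differentiation gives $\phi'(1)=\frac{2m^{C}(1-2\sigma)}{\sigma-m^{C}}<0$ because $\sigma>1$, so $\phi<0$ on some interval $(1,1+\varepsilon)$ (a small rise in trade costs above free prices already makes the core sustainable). Suppose $m^{C}\ge\sigma-1$; since $m^{C}<1$ by \eqref{eq:expenditure=000020share=000020range} this forces $\sigma<2$. Then by the generalized rule of signs already invoked before the lemma, $\phi$ has no root in $(1,\infty)$, and $\tau=1$ being a simple root of $\phi$, we conclude $\phi<0$ throughout $(1,\infty)$, hence $\Omega^{s}<\Omega^{C}$ for every $\tau\ge1$. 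Because $m^{C}$ is strictly increasing in $\alpha$ by \eqref{eq:CP=000020alpha=000020sigma} and $m^{C}(\alpha_{1},\sigma,1)=\sigma-1$ by definition of $\alpha_{1}(\sigma)$, this regime is exactly $\sigma\in(1,2)$ together with $\alpha\ge\alpha_{1}(\sigma)$, which is part~(ii).

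Suppose instead $m^{C}<\sigma-1$. Then both exponents $m^{C}+1-\sigma$ and $2(1-\sigma)$ are negative, so $\phi(\tau)\to1>0$ as $\tau\to\infty$; together with $\phi(1)=0$, $\phi<0$ just above $1$, and the existence of a unique root $\tau_{1}\in(1,\infty)$ (the same sign count also excludes a root in $(0,1)$ and shows $\tau=1$ is simple), this yields $\phi<0$ on $(1,\tau_{1})$ and $\phi>0$ on $(\tau_{1},\infty)$, i.e.\ $\Omega^{s}\le\Omega^{C}$ iff $\tau\in[1,\tau_{1}]$. This regime obtains whenever $\sigma\in(1,2]$ and $\alpha\le\alpha_{1}(\sigma)$ — at $\sigma=2$ one has $\sigma-1=1>m^{C}$ for all $\alpha$, so $\alpha_{1}(2)=+\infty$ and the restriction is vacuous — giving part~(i); and whenever $\sigma>2$, where $m^{C}<1<\sigma-1$ always, giving part~(iii). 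The monotonicity $\alpha_{1}'>0$ then follows by differentiating $m^{C}(\alpha_{1}(\sigma),\sigma,1)=\sigma-1$ and using $\partial m^{C}/\partial\alpha>0>\partial m^{C}/\partial\sigma$ from \eqref{eq:CP=000020alpha=000020sigma}, which gives $\alpha_{1}'(\sigma)=(1-\partial m^{C}/\partial\sigma)/(\partial m^{C}/\partial\alpha)>0$; and $\tau_{1}'(\alpha)>0$ follows by the implicit function theorem applied to $\phi(\tau_{1};m^{C})=0$, using $\partial\phi/\partial\tau>0$ at the crossing, $\partial m^{C}/\partial\alpha>0$, and the sign of $\partial\phi/\partial m^{C}$ along the root — the ``standard algebra'' asserted in the text.

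The main obstacle is the sign dictionary of the first paragraph. Because $\Omega^{s}$ is defined only implicitly, with the deviator's own expenditure share $m(\Omega^{s})$ entering both the sales term $\pi^{s}$ and the consumption deflator $\tau^{-m(\Omega^{s})}$, it is not a priori clear that whether $\Omega^{s}\le\Omega^{C}$ is governed by a single scalar inequality in $\tau$ rather than by something more delicate; obtaining the clean equivalence with $\phi(\tau)\le0$ is exactly where Assumption~\ref{assu:regularity-1} (bounded income effects, $\partial\ln m/\partial\ln\Omega<\sigma-1$) does its work, guaranteeing uniqueness of $\Omega^{s}$ and the single-crossing property. A secondary point needing care is continuity across the boundary $m^{C}=\sigma-1$, i.e.\ $\alpha=\alpha_{1}(\sigma)$, where $\tau_{1}\uparrow\infty$, so that the descriptions in parts~(i) and~(ii) agree in the limit.
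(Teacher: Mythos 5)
Your proof is correct and takes essentially the same route as the paper's: treat the sustain-plane equation \eqref{eq:sustain=000020plane} as a generalized polynomial in $\tau$, note that $\tau=1$ is always a root, apply the Descartes--Laguerre sign count according to the sign of the exponent $m^{C}+1-\sigma$, and use boundary behavior to sign the expression on $\left(1,\infty\right)$. Your write-up is in fact tighter than the paper's own: the sign dictionary ($\Omega^{s}\le\Omega^{C}$ if and only if $\phi\left(\tau\right)\le0$, obtained via single-crossing of the periphery fixed point) and the computation $\phi^{\prime}\left(1\right)=\tfrac{2m^{C}\left(1-2\sigma\right)}{\sigma-m^{C}}<0$ are steps the paper merely asserts, and your limits $\phi\left(1\right)=0$ and $\phi\left(\tau\right)\rightarrow1$ as $\tau\rightarrow\infty$ in the $m^{C}<\sigma-1$ case are the correct ones for the displayed function.
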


In words, we have that for low values of elasticity of substitution,
$\sigma\in(1,2],$ when productivity is low, $0<\alpha\leq\alpha_{1}\left(\sigma\right)$,
the core-periphery equilibrium equilibrium exists only for transportation
costs below $\tau_{1}\left(\alpha\right)$. Instead, when labor productivity
is high, $\alpha>\alpha_{1}\left(\sigma\right),$ the core-periphery
equilibrium exists regardless of the level of transportation costs.
By contrast, for high values of the elasticity of substitution $\sigma>2,$
the equilibrium exists for transportation costs below $\tau_{1}$
regardless of the level of labor productivity. We next turn to the
so-called ``break point'' which provides sufficient conditions for
the existence of a core-periphery equilibrium.

\subsection{The Symmetric Equilibrium}

A symmetric equilibrium with $\lambda^{r}=\frac{1}{2}$ always exists
in this model -- this can be shown by direct inspection of the equilibrium
conditions. In this case, prices and earnings are equalized across
regions, with $\left(P_{M}\right)^{1-\sigma}=\frac{1+\tau^{1-\sigma}}{2}$
by equation \eqref{eq:P_M^s} and $\pi=\frac{\alpha}{\sigma-m}$ by
equation \eqref{eq:sales}. In turn, using indirect utility equation
\eqref{eq:Indirect=000020Utility=000020--=000020intermediate=000020step},
the equilibrium real income level $\Omega^{B}$ obeys:
\begin{equation}
\ln\Omega^{B}=\ln\frac{\alpha}{\sigma-m\left(\Omega^{B}\right)}-\frac{m\left(\Omega^{B}\right)}{\sigma-1}\ln\left(\frac{2}{1+\tau^{1-\sigma}}\right).\label{eq:v=000020symmetric-1}
\end{equation}
This expression admits at most one interior fixed point for real earnings
$\Omega^{B}$ as a function of parameters $\left(\alpha,\sigma,\tau\right)$.\footnote{The derivative of the left-hand side of expression \eqref{eq:v=000020symmetric-1}
with respect to $\ln\Omega^{B}$ is equal to one. The derivative of
its right-hand side with respect to $m\left(\Omega^{B}\right)$ is
smaller than $\frac{1}{\sigma-1}$. Hence, under assumption \ref{assu:regularity-1},
the derivative of the right-hand side with respect to $\ln\Omega^{B}$
is smaller than $m$, which is itself smaller than one. Thus, expression
\eqref{eq:v=000020symmetric-1} admits at most one interior fixed
point for $\Omega^{B}$.}

\paragraph*{Sufficient Condition: The Break Point}

This symmetric equilibrium is said to be unstable if, following an
exogenous migration shock $\hat{\lambda}$, the change in income,
$\hat{\pi}$, is larger than the change in expenditure, $\hat{E}$,
required to maintain utility, and to be stable otherwise (here we
use ``hats'' to denote log changes). We are interested in the parameter
configuration such that the two effects are exactly equal at the margin,
$\hat{\pi}=\hat{E}=m^{B}\hat{P}_{M}$, in which case utility (and
hence expenditure shares) are invariant. This condition admits at
most one root for $\tau\neq1$, and this root is the solution to:\footnote{As was the case for sufficient conditions for a core-periphery equilibrium,
$\tau=1$ is a root for any combination of $\alpha$ and $\sigma$,
as location does not matter in the absence of trade costs.}
\begin{equation}
\tau^{\mathrm{1-\sigma}}=\frac{\sigma-1-m^{B}\left(\alpha,\sigma,\tau\right)}{\sigma-1+m^{B}\left(\alpha,\sigma,\tau\right)}\frac{\sigma-m^{B}\left(\alpha,\sigma,\tau\right)}{\sigma+m^{B}\left(\alpha,\sigma,\tau\right)}.\label{eq:break=000020point}
\end{equation}
This expression does not admit any root for $\tau$ in $\left(1,\infty\right)$
if $m^{B}\left(\alpha,\sigma,\infty\right)\geq\sigma-1$. Conversely,
if $m^{B}\left(\alpha,\sigma,\infty\right)<\sigma-1$, equation \eqref{eq:sustain=000020plane}
defines a hyperplane for $\left(\alpha,\sigma,\tau\right)\in\left(1,\infty\right)^{3}$,
and in particular it admits a root for $\tau>1$, which we denote
by $\tau_{0}$.\footnote{The left-hand side of this expression decreases from one when $\tau=1$
to zero when $\tau\rightarrow\infty$. Its right-hand side increases
from a number smaller than one (possibly negative) when $\tau=1$
to a larger number when $\tau\rightarrow\infty$ (recall that $m^{B}$
is decreasing in $\tau$). This larger number belongs to the interior
of the unit interval if and only if $m^{B}\left(\alpha,\sigma,\infty\right)<\sigma-1$.
Thus, this expression admits a unique solution for $\tau\in\left(1,\infty\right)$
if and only if $m^{B}\left(\alpha,\sigma,\infty\right)<\sigma-1$,
as was to be shown.} Standard algebra reveals that $\tau_{0}$ is increasing in $\alpha$.
In turn, the symmetric equilibrium is unstable ($\hat{\pi}\geq m^{B}\hat{P}_{M}$)
if trade costs are sufficiently low and productivity is sufficiently
high.

We can establish the following result:
\begin{lem}
\label{lem:Break=000020point-2}\textbf{Break point}. Let $\alpha_{\infty}\left(\sigma\right)>0$
be defined implicitly by $m\left(\alpha_{\infty},\sigma,\infty\right)=\sigma-1$
when $\sigma\in\left(1,2\right]$, with $\alpha_{\infty}^{\prime}>0$
by equation \eqref{eq:CP=000020alpha=000020sigma}. (i) If $\sigma\in\left(1,2\right]$
and $0<\alpha<\alpha_{\infty}\left(\sigma\right)$, then, the symmetric
equilibrium is unstable for any value of $\tau$ in $\left[1,\tau_{0}\right]$.
(ii) If $\sigma\in\left(1,2\right]$ and $\alpha>\alpha_{\infty}\left(\sigma\right)$,
then the symmetric equilibrium is unstable for any $\tau\geq1$. (iii)
If $\sigma>2$, then the symmetric equilibrium is unstable for any
value of $\tau$ in $\left[1,\tau_{0}\right]$.
\end{lem}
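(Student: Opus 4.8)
The plan is to follow the proof of Lemma~\ref{lem:Sustain=000020point-2} almost verbatim, exploiting that the break-point equation \eqref{eq:break=000020point} has the same structure as the sustain-point equation \eqref{eq:sustain=000020plane}: a left-hand side $\tau^{1-\sigma}$ that is strictly decreasing in $\tau$, from $1$ at $\tau=1$ down to $0$ as $\tau\to\infty$, and a right-hand side that is a monotone function of $\tau$ through its dependence on $m^{B}=m^{B}(\alpha,\sigma,\tau)$. First I would record that $m^{B}$ is strictly decreasing in $\tau$: in \eqref{eq:v=000020symmetric-1} the term multiplied by $-m(\Omega^{B})/(\sigma-1)$, namely $\ln\big(2/(1+\tau^{1-\sigma})\big)$, is increasing in $\tau$, so the right-hand side of \eqref{eq:v=000020symmetric-1} shifts down; since Assumption~\ref{assu:regularity-1} makes that right-hand side rise in $\ln\Omega^{B}$ with slope below one (the footnote argument), the unique fixed point $\Omega^{B}$ falls, and hence $m^{B}=m(\Omega^{B})$ falls by Assumption~\ref{assu:Engel}. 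Both factors on the right of \eqref{eq:break=000020point}, $\tfrac{\sigma-1-m^{B}}{\sigma-1+m^{B}}$ and $\tfrac{\sigma-m^{B}}{\sigma+m^{B}}$, are strictly decreasing in $m^{B}$ over $m^{B}\in(0,\sigma-1)$, so on that range the right-hand side of \eqref{eq:break=000020point} is increasing in $\tau$ and non-negative.

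Next I would settle existence, uniqueness and monotonicity of $\tau_{0}$. At $\tau=1$ the right-hand side of \eqref{eq:break=000020point} is strictly below $1$, so ``LHS $-$ RHS'' is positive there; as $\tau\to\infty$ it tends to $-f\big(m^{B}(\alpha,\sigma,\infty)\big)$ with $f(m)\equiv\tfrac{(\sigma-1-m)(\sigma-m)}{(\sigma-1+m)(\sigma+m)}$. Two cases arise. If $m^{B}(\alpha,\sigma,\infty)\ge\sigma-1$, then since $m^{B}$ decreases in $\tau$ we have $m^{B}(\alpha,\sigma,\tau)\ge\sigma-1$ for every finite $\tau$, so the right-hand side of \eqref{eq:break=000020point} stays $\le0<\tau^{1-\sigma}$ and ``LHS $-$ RHS'' remains positive on all of $[1,\infty)$: no break point exists. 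If $m^{B}(\alpha,\sigma,\infty)<\sigma-1$, then $m^{B}(\alpha,\sigma,\tau)$ eventually drops below $\sigma-1$, on which range ``LHS $-$ RHS'' is strictly decreasing; being positive for small $\tau$ and negative as $\tau\to\infty$, it has a unique zero $\tau_{0}\in(1,\infty)$, with ``LHS $-$ RHS'' $>0$ on $[1,\tau_{0})$ and $<0$ on $(\tau_{0},\infty)$. Implicit differentiation of \eqref{eq:break=000020point}, using $\partial m^{B}/\partial\alpha>0$, gives $\partial\tau_{0}/\partial\alpha>0$. Finally, the derivation leading to \eqref{eq:break=000020point} identifies the instability condition $\hat\pi\ge m^{B}\hat P_{M}$ with the sign ``LHS $\ge$ RHS'', so the symmetric equilibrium is unstable exactly for $\tau\in[1,\tau_{0}]$ when $\tau_{0}$ exists and for all $\tau\ge1$ otherwise.

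With these facts the three parts follow from the position of $m^{B}(\alpha,\sigma,\infty)$ relative to $\sigma-1$. When $\sigma>2$, expenditure shares satisfy $m^{B}<1<\sigma-1$ by \eqref{eq:expenditure=000020share=000020range}, so $\tau_{0}$ always exists and instability holds on $[1,\tau_{0}]$: this is (iii). When $\sigma\in(1,2]$, I would first note that the $\tau\to\infty$ limit of \eqref{eq:v=000020symmetric-1} reads $\ln\Omega^{B}=\ln\tfrac{\alpha}{\sigma-m(\Omega^{B})}-\tfrac{m(\Omega^{B})}{\sigma-1}\ln2$, i.e. \eqref{eq:v=000020Core} plus an extra term; repeating the comparative-statics argument behind \eqref{eq:CP=000020alpha=000020sigma} shows $m^{B}(\alpha,\sigma,\infty)$ is continuous, strictly increasing in $\alpha$ and decreasing in $\sigma$. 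Strict monotonicity in $\alpha$ and the intermediate value theorem then define a unique $\alpha_{\infty}(\sigma)>0$ with $m(\alpha_{\infty},\sigma,\infty)=\sigma-1$ (provided $\sigma-1$ lies in the range of $m(\cdot,\sigma,\infty)$, which the maintained assumptions guarantee), and implicit differentiation $m_{\alpha}\alpha_{\infty}'+m_{\sigma}=1$ with $m_{\alpha}>0>m_{\sigma}$ gives $\alpha_{\infty}'>0$. For $\alpha<\alpha_{\infty}(\sigma)$ we get $m^{B}(\alpha,\sigma,\infty)<\sigma-1$, so $\tau_{0}$ exists and instability holds on $[1,\tau_{0}]$: (i). For $\alpha>\alpha_{\infty}(\sigma)$ we get $m^{B}(\alpha,\sigma,\infty)>\sigma-1$, so no break point exists and the symmetric equilibrium is unstable for every $\tau\ge1$: (ii).

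The one genuinely delicate step is the equivalence invoked above, that instability of the symmetric equilibrium reduces to the algebraic inequality ``LHS $\ge$ RHS'' in \eqref{eq:break=000020point} with the correct orientation. Establishing it requires computing, from \eqref{eq:sales}, \eqref{eq:P_M^s} and \eqref{eq:Expenditure=000020=00003D=000020income}, the log-linearized responses $\hat\pi$ and $\hat P_{M}$ to a marginal migration shock $\hat\lambda$ around $\lambda^{r}=\tfrac12$, while simultaneously tracking the change in the endogenous share $m^{B}$ through the utility feedback $m'(\Omega)>0$ — the term absent from the homothetic benchmark of \citet{Forslid2003}. The work is in checking that this extra term does not flip the sign pattern; this is precisely where Assumption~\ref{assu:regularity-1} is used, ensuring $m^{B}$ reacts slowly enough to both $\tau$ and the shock that ``LHS $-$ RHS'' still crosses zero exactly once. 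Everything else — monotonicity of $\tau^{1-\sigma}$ and of the two factors, the dichotomy on $m^{B}(\alpha,\sigma,\infty)$, and the existence and sign of $\alpha_{\infty}'$ — is routine once the sustain-point proof is in hand.
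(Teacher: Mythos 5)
Your proposal is correct and follows essentially the same route as the paper: obtain the break-point condition by log-linearizing around $\lambda^{r}=\tfrac{1}{2}$, use the monotonicity of $\tau^{1-\sigma}$ and of $m^{B}$ in $\tau$ to establish existence and uniqueness of $\tau_{0}$, and split cases according to whether $m^{B}\left(\alpha,\sigma,\infty\right)$ exceeds $\sigma-1$. The only point worth noting is that the step you single out as delicate is handled more simply in the paper: since the break point is by definition the configuration at which the utility change from a marginal migration is exactly zero, $m^{\prime}\left(\Omega\right)>0$ implies $\mathrm{d}m=0$ there, so the feedback term you propose to bound via Assumption~\ref{assu:regularity-1} vanishes identically at that configuration and the standard \textsc{neg} break-point formula applies with $m$ held fixed at $m^{B}\left(\alpha,\sigma,\tau\right)$.
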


\subsection{Equilibrium Configurations}

The symmetric and core-periphery equilibrium configurations share
similarities. In both equilibriums, given the expenditure share $m$,
the nominal earnings of each type of agent are related to parameters
$\left\{ \alpha,\sigma,\tau\right\} $ and they are determined by
the same equations, $w=\alpha$ and $\pi=\frac{\alpha}{\sigma-m}$.
The key difference is that the equilibrium expenditure shares and
levels of utility differ. Using equations \eqref{eq:v=000020periphery-1}
and \eqref{eq:v=000020symmetric-1}, we obtain the following result.
\begin{lem}
\label{lem:break=000020sustain=000020points=000020ranking}\textbf{Equilibriums
ranking}. (i) Given parameter values $\left\{ \alpha,\sigma,\tau\right\} $,
the real earnings of entrepreneurs are strictly higher at the core-periphery
equilibrium than at the symmetric equilibrium:
\begin{equation}
\forall\left\{ \alpha,\sigma,\tau\right\} :\quad\Omega^{B}\left(\alpha,\sigma,\tau\right)<\Omega^{C}\left(\alpha,\sigma,1\right).\label{eq:utility=000020ranking}
\end{equation}
This result implies in turn $m^{B}<m^{C}$, and $\pi^{B}<\pi^{C}$.
(ii) Given parameter values $\left\{ \alpha,\sigma\right\} $, $\tau_{1}>\tau_{0}$.
(iii) Any spatial equilibrium other than $\lambda\in\left\{ 0,\frac{1}{2},1\right\} $,
if it exists, is not stable.
\end{lem}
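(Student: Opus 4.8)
The plan is to establish the three parts in the order (i), (iii), (ii), since the ranking of the sustain and break points will rely on the instability of interior configurations proved in~(iii).

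\emph{Part (i).} Introduce the auxiliary function $f(\Omega)\equiv\ln\Omega-\ln\alpha+\ln\bigl(\sigma-m(\Omega)\bigr)$, writing the manufacturing expenditure share as a function of real earnings (legitimate since $\Omega$ is increasing in $u$). First I check that $f$ is strictly increasing: $\frac{df}{d\ln\Omega}=1-\frac{\Omega m'(\Omega)}{\sigma-m(\Omega)}=1-\frac{m}{\sigma-m}\frac{d\ln m}{d\ln\Omega}$, and Assumption~\ref{assu:regularity-1} together with $0<m<1$ (from \eqref{eq:expenditure=000020share=000020range}) gives $\frac{m}{\sigma-m}\frac{d\ln m}{d\ln\Omega}<\frac{m}{\sigma-m}(\sigma-1)<1$, so this derivative is positive. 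By \eqref{eq:v=000020Core}, $\Omega^{C}$ is the unique root of $f=0$; rearranging \eqref{eq:v=000020symmetric-1} gives $f(\Omega^{B})=-\frac{m(\Omega^{B})}{\sigma-1}\ln\!\bigl(\tfrac{2}{1+\tau^{1-\sigma}}\bigr)<0$, since $\tau>1$ and $\sigma>1$ force $\tau^{1-\sigma}<1$. Monotonicity of $f$ then yields $\Omega^{B}<\Omega^{C}$. Since $m$ is increasing in utility (Assumption~\ref{assu:Engel}) and hence in $\Omega$, and $\pi=\alpha/(\sigma-m)$ is increasing in $m$, the inequalities $m^{B}<m^{C}$ and $\pi^{B}<\pi^{C}$ follow at once.

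\emph{Part (iii).} I would study the ``wiggle'' differential $D(\lambda)\equiv\ln\Omega^{r}-\ln\Omega^{s}$ as a function of $\lambda\equiv\lambda^{r}\in(0,1)$, where for given $\lambda$ the pair $(\pi^{r},\pi^{s})$ solves the $2\times2$ linear system implied by \eqref{eq:sales}, the price indices are given by \eqref{eq:P_M^s}, and the identity $\ln\Omega^{r}=\ln\pi^{r}+\frac{m(u^{r})}{\sigma-1}\ln\!\bigl(\lambda+(1-\lambda)\tau^{1-\sigma}\bigr)$ obtained from \eqref{eq:Expenditure=000020=00003D=000020income} (with its analogue for $s$) closes the system by pinning down $u^{r},u^{s}$. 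By construction $D$ is odd about $\lambda=\tfrac12$, so $D(\tfrac12)=0$; an interior spatial equilibrium $\lambda^{\ast}\neq\tfrac12$ is a zero of $D$, and it is stable if and only if $D'(\lambda^{\ast})<0$, so the goal is $D'(\lambda^{\ast})>0$ at every such zero. At $\lambda^{\ast}$ one has $u^{r}=u^{s}$, hence a common share $m^{\ast}$, and I would decompose (via the linearised equilibrium system) $D'(\lambda^{\ast})$ into the short-run term obtained by freezing the expenditure shares at $m^{\ast}$ --- exactly the object behind the footloose-entrepreneur computation of \citet{Forslid2003} that yields \eqref{eq:break=000020point} --- plus a feedback term carrying $dm^{r}/d\lambda$ and $dm^{s}/d\lambda$. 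The feedback term is strictly positive: raising $\lambda$ lowers $P_{M}^{r}$, raises $\Omega^{r}$ hence $u^{r}$, hence $m^{r}$ by Assumption~\ref{assu:Engel}, reinforcing the own-market effect in $r$, while symmetrically $m^{s}$ falls and weakens it in $s$. For the short-run term I would reproduce the standard sign analysis of the footloose-entrepreneur special case --- reducing the zero-condition to a one-dimensional equation, noting that an interior zero can exist only in the no-black-hole range $m^{\ast}<\sigma-1$, and applying a Descartes-type sign count as in the proofs of the sustain and break points --- to show it is non-negative at any zero of $D$; combined with the strictly positive feedback this gives $D'(\lambda^{\ast})>0$.

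\emph{Part (ii).} With (iii) in hand, $\tau_{1}>\tau_{0}$ is immediate. For any $\tau<\tau_{0}$ the symmetric equilibrium is unstable, so $D'(\tfrac12)>0$ and $D>0$ just above $\tfrac12$; since by (iii) $D$ has no downward zero-crossing on $(\tfrac12,1)$, it remains positive up to $\lambda=1$, i.e.\ $\Omega^{s}<\Omega^{C}$, so the core-periphery outcome is sustainable at that $\tau$ and hence $\tau<\tau_{1}$. Letting $\tau\uparrow\tau_{0}$ gives $\tau_{0}\leq\tau_{1}$; and $\tau_{0}=\tau_{1}$ is impossible, because for trade costs just to one side of that common value $D$ would be forced to cross zero downward somewhere in $(\tfrac12,1)$, producing a stable interior equilibrium and contradicting (iii). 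The delicate step of the whole argument is the short-run part of~(iii): showing that the price-index and market-size differential, at a fixed expenditure share, never crosses zero ``from above'' on $(\tfrac12,1)$, uniformly over the admissible range of $m$ --- the analogue of the (already intricate) classification of equilibria in standard \textsc{neg} models. The income-elasticity feedback only helps rather than hinders, which is the economically substantive content of the statement.
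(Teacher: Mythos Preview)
Your argument for part~(i) is correct and, in fact, cleaner than the paper's. The paper shows $\Omega^{B}=\Omega^{C}$ at $\tau=1$ and then totally differentiates \eqref{eq:v=000020symmetric-1} to establish that $\Omega^{B}$ is decreasing in $\tau$; you instead observe directly that $\Omega^{C}$ is the unique root of the monotone function $f$ while $f(\Omega^{B})<0$, which gets to the conclusion in one step without any differentiation in $\tau$.

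For parts~(ii) and~(iii), your route and the paper's diverge substantially, and your version has a real gap. You try to prove~(iii) first by a direct stability analysis of interior equilibria (short-run term plus income-elasticity feedback), and then obtain~(ii) from~(iii) by a topological argument on the sign of $D(\lambda)$. The paper does the opposite: it proves~(ii) first, and then invokes \citet{RobertNicoud2005} to get~(iii) as a structural corollary of the inequality $\tau_{0}<\tau_{1}$. The paper's trick for~(ii) is worth knowing because it sidesteps entirely the ``delicate step'' you flag. Define the frozen-share break point
\[
\bigl(\tilde{\tau}_{0}\bigr)^{1-\sigma}\equiv\frac{\sigma-1-m^{C}}{\sigma-1+m^{C}}\cdot\frac{\sigma-m^{C}}{\sigma+m^{C}},
\]
i.e.\ the break point of the homothetic Forslid--Ottaviano model with $m$ treated as a parameter and set equal to $m^{C}$. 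For that fixed-$m$ model, \citet{RobertNicoud2005} already gives $\tilde{\tau}_{0}<\tau_{1}$. Then use part~(i): since $m^{B}<m^{C}$ and the right-hand side of \eqref{eq:break=000020point} is increasing in $m$, the true break point satisfies $\tau_{0}<\tilde{\tau}_{0}$. Chaining the two inequalities yields $\tau_{0}<\tau_{1}$. The whole argument thus reduces the non-homothetic case to the homothetic one using nothing more than $m^{B}<m^{C}$.

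The gap in your proposal is precisely the short-run piece of~(iii): you assert that the frozen-$m$ differential ``never crosses zero from above'' on $(\tfrac12,1)$ and gesture at a Descartes-type count, but you do not carry it out, and this is not the same object as the polynomial in \eqref{eq:sustain=000020plane} or \eqref{eq:break=000020point}. Your decomposition into a short-run term and a feedback term is also not obviously additive, since $m^{r}$ and $m^{s}$ enter the sales system \eqref{eq:sales} and hence the solution for $(\pi^{r},\pi^{s})$ itself; showing that the net feedback contribution to $D'(\lambda^{\ast})$ is positive requires solving the full linearised fixed point, not just the heuristic chain ``lower $P_{M}^{r}\Rightarrow$ higher $\Omega^{r}\Rightarrow$ higher $m^{r}$.'' The paper avoids all of this by the sandwich $\tau_{0}<\tilde{\tau}_{0}<\tau_{1}$ and by outsourcing the structural statement~(iii) to the literature.
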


Part (iii) of this lemma allows us to study only two potential equilibrium
configurations -- the core-periphery pattern, and the symmetric equilibrium.
We are now in position to establish the main results of this section.
\begin{prop}
\label{prop:Break=000020and=000020Sustain=000020points}If $\sigma\in\left(1,2\right]$,
then $\exists\alpha_{1}\left(\sigma\right),\alpha_{\infty}\left(\sigma\right)>0$
defined implicitly as $m^{C}\left(\alpha_{1},\sigma,1\right)=\sigma-1$
and $m^{B}\left(\alpha_{\infty},\sigma,\infty\right)=\sigma-1$, respectively,
with $0<\alpha_{1}\left(\sigma\right)<\alpha_{\infty}\left(\sigma\right)$.
(i) The symmetric equilibrium is the unique stable equilibrium configuration
in either of the following cases: if $\sigma>2$ and $\tau>\tau_{1}\left(\alpha,\sigma\right)$;
or if $\sigma\in\left(1,2\right]$, $0<\alpha<\alpha_{1}\left(\sigma\right)$,
and $\tau>\tau_{1}\left(\alpha,\sigma\right)$. (ii) The core-periphery
configuration is the unique stable equilibrium in any of the following
cases: if $\sigma>2$ and $\tau\leq\tau_{0}\left(\alpha,\sigma\right)$;
if $\sigma\in\left(1,2\right]$, $0<\alpha<\alpha_{\infty}\left(\sigma\right)$,
and $\tau\leq\tau_{0}\left(\alpha,\sigma\right)$; if $\sigma\in\left(1,2\right]$
and $\alpha\geq\alpha_{\infty}$$\left(\sigma\right)$. (iii) The
symmetric equilibrium and the core-periphery pattern are both stable
equilibrium configurations in any of the following cases: if $\sigma>2$
and $\tau_{0}\left(\alpha,\sigma\right)<\tau\leq\tau_{1}\left(\alpha,\sigma\right)$;
or if $\sigma\in\left(1,2\right]$, $0<\alpha<\alpha_{1}\left(\sigma\right)$,
and $\tau_{0}\left(\alpha,\sigma\right)<\tau\leq\tau_{1}\left(\alpha,\sigma\right)$;
or if $\sigma\in\left(1,2\right]$, $\alpha_{1}\left(\sigma\right)<\alpha<\alpha_{\infty}\left(\sigma\right)$,
and $\tau>\tau_{0}\left(\alpha,\sigma\right)$.
\end{prop}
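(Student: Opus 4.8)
The plan is to assemble Proposition~\ref{prop:Break=000020and=000020Sustain=000020points} from the three lemmas already in hand---Lemmas~\ref{lem:Sustain=000020point-2} (sustain point), \ref{lem:Break=000020point-2} (break point), and \ref{lem:break=000020sustain=000020points=000020ranking} (equilibriums ranking)---plus the stability dichotomy recorded implicitly in the definitions of the sustain and break points. The strategy is: (1) establish that $0<\alpha_1(\sigma)<\alpha_\infty(\sigma)$, (2) translate each of parts (i)--(iii) of Lemma~\ref{lem:Sustain=000020point-2} and Lemma~\ref{lem:Break=000020point-2} into statements about which configurations are, respectively, ``sustainable'' (core-periphery is an equilibrium) and ``unstable at the symmetric point,'' (3) combine these with part~(iii) of Lemma~\ref{lem:break=000020sustain=000020points=000020ranking} (only $\lambda\in\{0,\tfrac12,1\}$ can be stable) to read off the stable configuration(s) in each region of $(\alpha,\sigma,\tau)$-space.

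First I would prove $\alpha_1(\sigma)<\alpha_\infty(\sigma)$. By part~(i) of Lemma~\ref{lem:break=000020sustain=000020points=000020ranking}, $m^B(\alpha,\sigma,\tau)<m^C(\alpha,\sigma,1)$ for all $\tau$, hence $m^B(\alpha,\sigma,\infty)\le m^C(\alpha,\sigma,1)$ (taking $\tau\to\infty$). Since both $m^C(\cdot,\sigma,1)$ and $m^B(\cdot,\sigma,\infty)$ are strictly increasing in $\alpha$ by \eqref{eq:CP=000020alpha=000020sigma}, and since $\alpha_1,\alpha_\infty$ are the respective preimages of the common value $\sigma-1$, the inequality $m^B(\alpha,\sigma,\infty)\le m^C(\alpha,\sigma,1)$ forces $\alpha_1(\sigma)\le\alpha_\infty(\sigma)$; strictness follows from the strict inequality $m^B<m^C$. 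This also re-derives part~(ii) of Lemma~\ref{lem:break=000020sustain=000020points=000020ranking}, $\tau_1>\tau_0$, since $\tau_1$ is the root of the sustain equation~\eqref{eq:sustain=000020plane} evaluated with $m^C$ and $\tau_0$ the root of~\eqref{eq:break=000020point} with the smaller $m^B$, and both roots are increasing in the relevant $m$.

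Next I would handle the three cases. For part~(i): the symmetric equilibrium always exists; by Lemma~\ref{lem:Break=000020point-2}, in the stated ranges ($\sigma>2$ with $\tau>\tau_1\ge\tau_0$, or $\sigma\in(1,2]$ with $\alpha<\alpha_1\le\alpha_\infty$ and $\tau>\tau_1\ge\tau_0$) we have $\tau>\tau_0$ so the symmetric equilibrium is stable; by Lemma~\ref{lem:Sustain=000020point-2}, $\tau>\tau_1$ means the core-periphery outcome is not sustainable, hence not an equilibrium; combined with Lemma~\ref{lem:break=000020sustain=000020points=000020ranking}(iii) ruling out interior asymmetric equilibria, the symmetric one is the unique stable configuration. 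For part~(ii): when $\tau\le\tau_0$ (and in the $\sigma\in(1,2]$, $\alpha\ge\alpha_\infty$ sub-case, for \emph{all} $\tau$ by Lemma~\ref{lem:Break=000020point-2}(ii)), the symmetric equilibrium is unstable, while $\tau\le\tau_0<\tau_1$ implies core-periphery is sustainable, hence an equilibrium and---being the only remaining candidate by Lemma~\ref{lem:break=000020sustain=000020points=000020ranking}(iii)---stable; one should note the endpoint $\alpha=\alpha_\infty$ requires checking that instability of the symmetric equilibrium persists, which is exactly Lemma~\ref{lem:Break=000020point-2}(ii). For part~(iii): in the overlap region $\tau_0<\tau\le\tau_1$ (or $\alpha_1<\alpha<\alpha_\infty$ with $\tau>\tau_0$), $\tau>\tau_0$ gives stability of the symmetric equilibrium while $\tau\le\tau_1$ gives sustainability of core-periphery, so both coexist as stable configurations.

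The main obstacle is bookkeeping rather than mathematical depth: one must verify that the parameter subdivisions listed in the proposition exhaust and do not overlap inconsistently with the $\sigma\gtrless2$ branching, and in particular must be careful at the boundaries $\alpha=\alpha_1$ and $\alpha=\alpha_\infty$, where ``sustainable for all $\tau$'' (Lemma~\ref{lem:Sustain=000020point-2}(ii)) and ``unstable for all $\tau$'' (Lemma~\ref{lem:Break=000020point-2}(ii)) kick in and must be reconciled with the weak versus strict inequalities in the statement. A secondary subtlety is confirming that ``sustain point exists / break point exists'' indeed translates to the stability claims as I use them---i.e., that $\tau\le\tau_1\Leftrightarrow\Omega^s\le\Omega^C$ and $\tau>\tau_0\Leftrightarrow\hat\pi<m^B\hat P_M$---which follows from the monotonicity of the left- and right-hand sides of~\eqref{eq:sustain=000020plane} and~\eqref{eq:break=000020point} in $\tau$ established in the footnotes accompanying those equations, but should be stated explicitly so the logical chain is airtight.
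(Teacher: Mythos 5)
Your proposal is correct and follows exactly the route the paper takes: its proof of this proposition is a one-line statement that the result is an immediate consequence of Lemmas \ref{lem:Sustain=000020point-2}, \ref{lem:Break=000020point-2}, and \ref{lem:break=000020sustain=000020points=000020ranking}. Your additional derivation of $\alpha_{1}(\sigma)<\alpha_{\infty}(\sigma)$ from $m^{B}<m^{C}$ and the monotonicity in \eqref{eq:CP=000020alpha=000020sigma}, and your explicit case-by-case bookkeeping, simply supply detail the paper leaves implicit.
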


Figure \ref{fig:equil} graphically depicts the type of equilibriums
that we can obtain depending on the level of productivity, $\alpha,$
and transportation costs, $\tau$ for low and high values of $\sigma,$
respectively $\sigma\in(1,2]$ and $\sigma>2.$ We return to these
alternative equilibrium configurations in the next section, where
we consider an economy in which long-run growth is driven by labor
productivity growth.

\begin{figure}
\caption{Equilibrium Configurations}
\label{fig:equil}\subfloat[{Case $\sigma\in(1,2]$}]{

\includegraphics[scale=0.4]{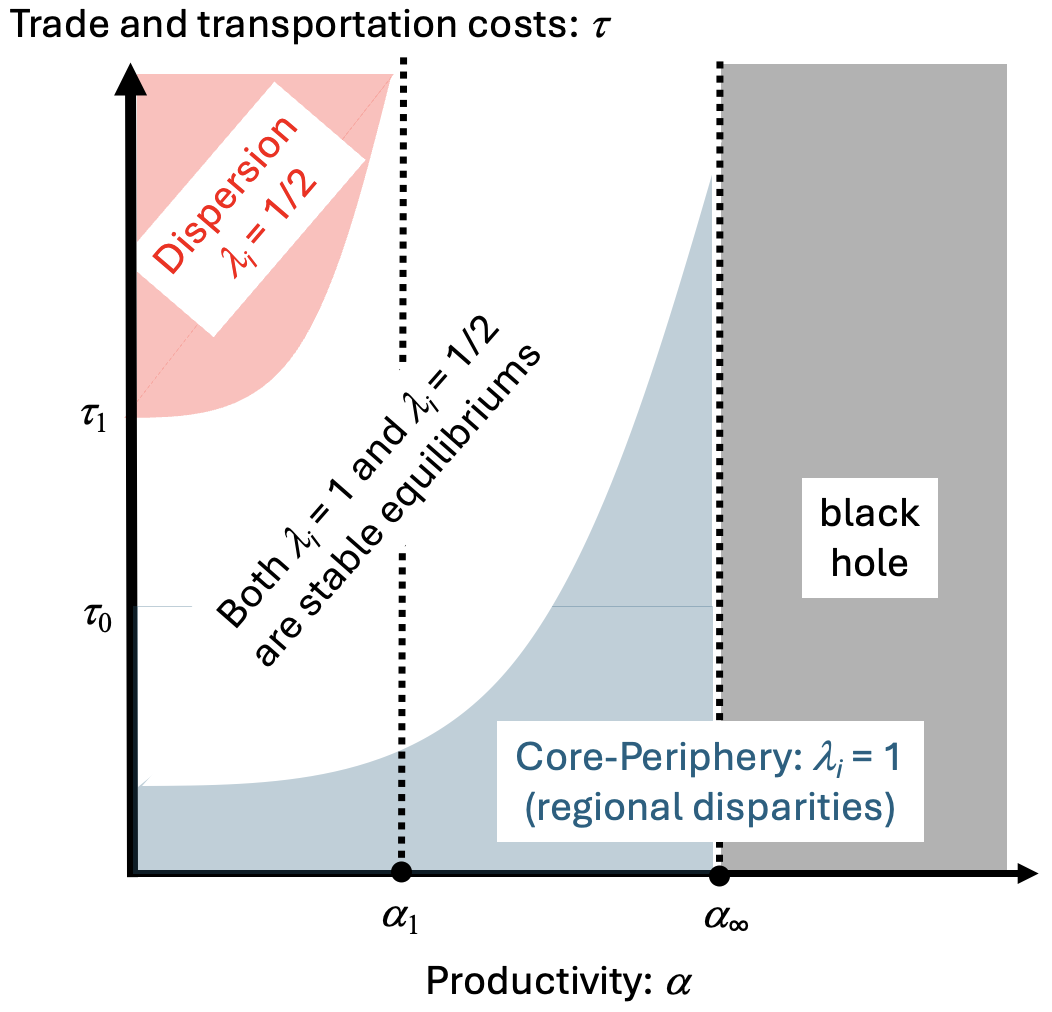}}\subfloat[Case $\sigma>2$]{

\includegraphics[scale=0.4]{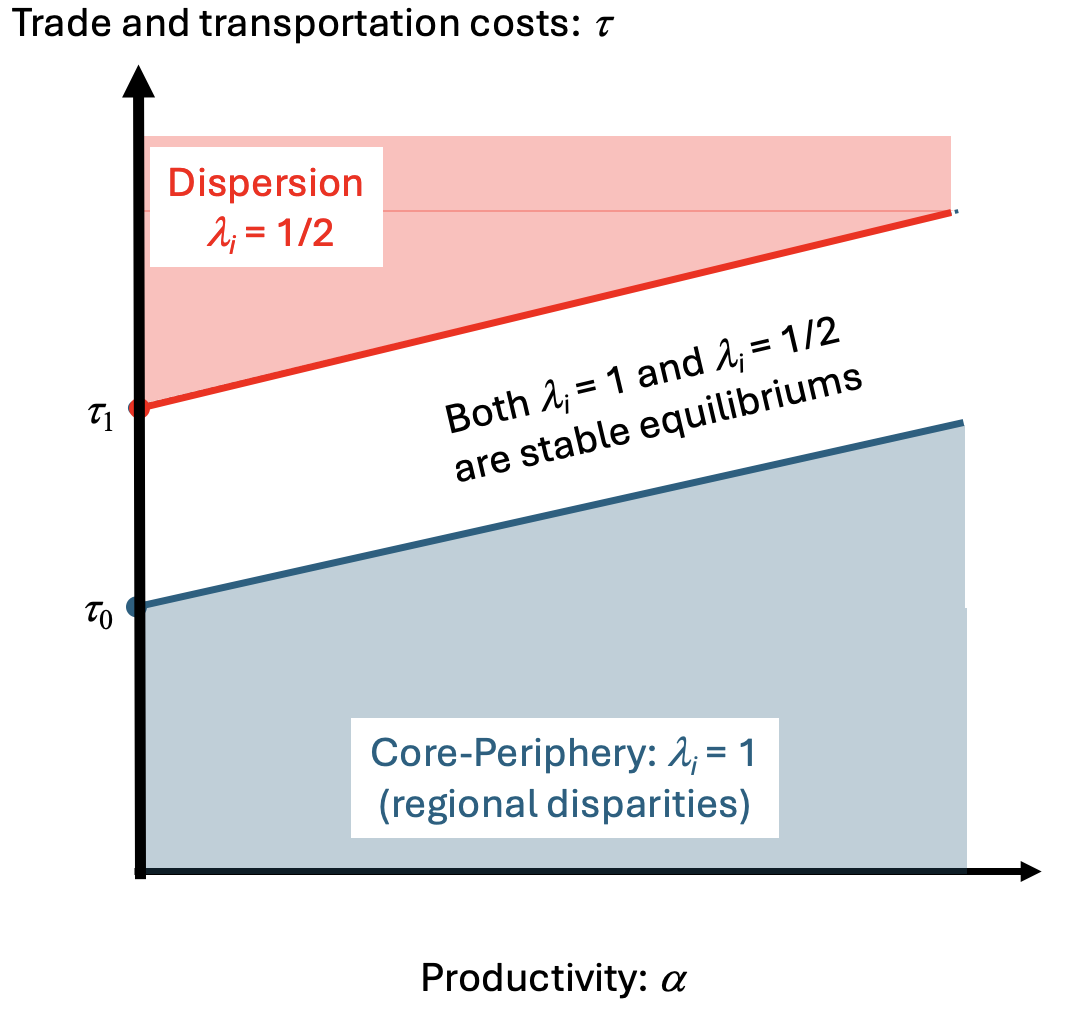}}
\end{figure}

\paragraph{The Emergence of Black Holes.}

Note that the core-periphery configuration emerges as the unique stable
equilibrium for any level of trade costs $\tau>1$ if $\sigma\in\left(1,2\right]$
and $\alpha>\alpha_{\infty}\left(\sigma\right)$ because, in this
case, agglomeration forces, as captured by the share of mobile expenditure
$m^{B}$, always dominate dispersion forces, as captured by $\sigma-1$,
and hence $m^{B}\geq\sigma-1$. \citet{Krugman1991} refers to this
case as a ``black hole.'' In his model, $m$ is a preference parameter
and he rules out this case by imposing his famous ``no black-hole
condition'', as in this case his model fails to feature the richness
of equilibrium configurations that the complementary parameter space
provides. In our model, $m$ is an endogenous variable and the black
hole configuration emerges as an equilibrium outcome when labor productivity
is high enough. An important implication of this results is that prohibitive
transportation costs in a sector are not inconsistent with its spatial
clustering. Our model can thus accommodate the agglomeration of non-traded
services as an equilibrium outcome.

\paragraph*{Robustness to the Helpman Critique.}

The qualitative results in this section are robust to Helpman's critique
of Krugman's model \citep{helpman1998size}. In Krugman's original
model, the source of immobile demand is agriculture. This good produces
under constant returns to labor and it is freely traded. Its price
depends only on model parameters in equilibrium. \citet{helpman1998size}
points out the fragility of the result by reversing the model assumptions.
He introduces housing as the source of immobile demand. Housing is
among the most inherently non-tradable of services and its equilibrium
return depends on the spatial distribution of economic activity. As
a result, Helpman shows that in this alternative setting dispersion
forces dominate agglomeration forces when inter-regional trade and
transportation costs in manufacturing are low enough, turning this
specific result of Krugman's on its head. We show formally in Appendix
\ref{Appx:Helpman} that our results are robust to Helpman's criticism.
We develop an extension of the model with a non-traded factor --
structures -- which yields a dispersion force that is independent
of transportation costs in manufacturing. Thus, the symmetric equilibrium
is stable for arbitrarily small levels of transportation costs in
this alternative setting. Nevertheless, we show that the symmetric
equilibrium is unstable for arbitrarily large levels of trade costs
if $m>\sigma-1$, which is the exact same condition as in our current
model.

\section{\protect\label{sec:Structural=000020Change=000020Spatial=000020Disparities}Structural
Change and Spatial Disparities}

We are now ready to close the analysis of the interplay between structural
change and the emergence of spatial disparities. Building on the tradition
of the neoclassical growth model, we consider an economy in which
growth is driven by exogenous labor-augmenting technical progress.
As in \citet*{Kongsamut2001}, this uniform productivity growth drives
the reallocation of labor from agriculture to manufacturing due to
non-homothetic demand. For simplicity, there is no capital accumulation
in our model, and different periods of time are a succession of static
equilibriums. Given the spatial equilibrium configuration $j\in\left\{ B,C\right\} $,
the national (nominal) income is equal to 
\begin{equation}
Y^{j}\left(\alpha,\sigma,\tau\right)\equiv w+\pi^{j}\left(\alpha,\sigma,\tau\right)=\alpha\left(1+\frac{1}{\sigma-m\left(\alpha,\sigma,\tau\right)}\right).\label{eq:National=000020income}
\end{equation}
Per capita nominal income is equal to $\frac{Y}{2}$.The share of
manufacturing in national income is equal to
\begin{equation}
\tilde{m}^{j}\left(\alpha,\sigma,\tau\right)\equiv\frac{\pi^{j}\left(\alpha,\sigma,\tau\right)}{\pi^{j}\left(\alpha,\sigma,\tau\right)+w}=\frac{1}{1+\sigma-m^{j}\left(\alpha,\sigma,\tau\right)},\label{eq:m=000020share=000020income}
\end{equation}
and the earnings premium of entrepreneurs is equal to: 
\begin{equation}
\tilde{\pi}^{j}\left(v\right)\equiv\frac{\pi^{j}\left(v\right)}{w}=\frac{1}{\sigma-m^{j}\left(\alpha,\sigma,\tau\right)}.\label{eq:premium}
\end{equation}
By inspection, all these variables are increasing in the manufacturing
expenditure share $m$.\footnote{Specifically, as $m$ increases from zero to one, $Y$ increases from
$\alpha\frac{1+\sigma}{\sigma}$ to $\alpha\frac{\sigma}{\sigma-1}$
, the share $\tilde{m}$ increases from $\frac{1}{1+\sigma}$ to $\frac{1}{\sigma}$
, and the earnings premium $\tilde{\pi}$ increases from $\frac{1}{\sigma}$
to $\frac{1}{\sigma-1}$. Note that this ``premium'' may be smaller
than unity by our choice of units for labor and human capital. It
is straightforward to make a different choice of units that would
ensure that the premium is at larger than $1$, but at the cost of
carrying an additional parameter throughout. The important point for
our purposes is that this premium is increasing in $m$.} Thus, labor-augmenting technical progress -- modeled as a continuous
increase of $\alpha$ over time -- raises national and per-capita
incomes (by equation \ref{eq:National=000020income}), the share of
manufacturing in the economy (by equation \ref{eq:m=000020share=000020income}),
and the earnings premium of entrepreneurs (by equation \ref{eq:premium}).
Formally:
\begin{prop}
\label{prop:Growth=000020and=000020Structural=000020change}\textbf{Growth
and structural change}. Assume labor productivity $\alpha$ increases
over time. Then, given any spatial equilibrium $j\in\left\{ B,C\right\} $,
national income $Y$, per-capita income $\frac{Y}{2}$, the share
of manufacturing $\tilde{m}$, and the earnings premium of entrepreneurs
$\tilde{\pi}$ all increase over time.
\end{prop}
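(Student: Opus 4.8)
The plan is to reduce the four claims to a single monotonicity statement: that the equilibrium manufacturing expenditure share $m^{j}$ is strictly increasing in labor productivity $\alpha$ for each fixed configuration $j\in\{B,C\}$. Granting this, the proposition follows directly from the closed forms \eqref{eq:National=000020income}--\eqref{eq:premium}. Since $\sigma>1$ and $m^{j}\in(0,1)$ we have $\sigma-m^{j}>0$, so both $\tilde{\pi}^{j}=1/(\sigma-m^{j})$ and $\tilde{m}^{j}=1/(1+\sigma-m^{j})$ are strictly increasing in $m^{j}$, hence in $\alpha$; and $Y^{j}=\alpha\,(1+\tilde{\pi}^{j})$ is a product of two positive, strictly increasing functions of $\alpha$, hence strictly increasing, with $Y^{j}/2$ inheriting the same. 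Composing with the hypothesis that $\alpha$ rises over time then gives the statement for all four objects.

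It therefore remains to show $dm^{j}/d\alpha>0$. For the core--periphery configuration this is already in hand: equation \eqref{eq:CP=000020alpha=000020sigma}, obtained by applying Assumption \ref{assu:regularity-1} to the fixed-point relation \eqref{eq:v=000020Core}, asserts $\partial m^{C}/\partial\alpha>0$ (together with $\partial\Omega^{C}/\partial\alpha>0$) outright. For the symmetric configuration I would differentiate the fixed-point equation \eqref{eq:v=000020symmetric-1} implicitly. Writing it as $\ln\Omega^{B}=\ln\alpha+G(\Omega^{B})$ with $G$ collecting the remaining (productivity-independent) terms, the footnote accompanying \eqref{eq:v=000020symmetric-1} already records that, because $m'>0$ by \eqref{eq:expenditure=000020share=000020range}, because $2/(1+\tau^{1-\sigma})\geq1$ for $\tau\geq1$, and because Assumption \ref{assu:regularity-1} bounds $\partial\ln m/\partial\ln\Omega$ below $\sigma-1$, the derivative of $G$ with respect to $\ln\Omega$ is strictly less than $m<1$. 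Hence $H(\ln\Omega,\ln\alpha)\equiv\ln\Omega-\ln\alpha-G(\Omega)$ has $\partial H/\partial\ln\Omega>1-m>0$ at the fixed point while $\partial H/\partial\ln\alpha=-1$, so the implicit function theorem gives $d\ln\Omega^{B}/d\ln\alpha=1/(\partial H/\partial\ln\Omega)>0$. Thus $\Omega^{B}$ is strictly increasing in $\alpha$, and since $m^{B}=m(\Omega^{B})$ with $m'>0$, so is $m^{B}$.

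With $dm^{j}/d\alpha>0$ established for both $j$, the chain rule delivers $dY^{j}/d\alpha$, $d\tilde{m}^{j}/d\alpha$ and $d\tilde{\pi}^{j}/d\alpha$ all strictly positive, and since $\alpha$ grows over time by hypothesis, all four quantities grow over time. One remark: the argument is internal to a fixed configuration, which is exactly what the statement requires, so there is no need to track a transition between the symmetric and core--periphery regimes; and Lemma \ref{lem:break=000020sustain=000020points=000020ranking}(i), which yields $m^{B}<m^{C}$, shows that such a transition, were it to occur, would only reinforce the direction of change.

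I do not anticipate a deep obstacle. The one step that requires genuine care is the sign of $\partial H/\partial\ln\Omega$ in the symmetric case, that is, checking that the positive term $\tfrac{m'}{\sigma-1}\ln\!\left(\tfrac{2}{1+\tau^{1-\sigma}}\right)$ produced by differentiating $G$ cannot overturn the $1/\Omega$ term. This is precisely the content of Assumption \ref{assu:regularity-1}, and it is the same bound that was already invoked to establish uniqueness of $\Omega^{B}$, so nothing beyond the hypotheses already in force is needed.
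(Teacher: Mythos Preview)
Your proof is correct and follows the same approach as the paper, whose appendix proof simply records ``by inspection of equations \eqref{eq:National=000020income}, \eqref{eq:m=000020share=000020income}, and \eqref{eq:premium}'' after the text preceding the proposition has already noted that these expressions are increasing in $m$ and that $m^{j}$ rises with $\alpha$. Your write-up is more explicit---particularly in verifying $dm^{B}/d\alpha>0$ via the implicit function theorem applied to \eqref{eq:v=000020symmetric-1}, which the paper carries out in equivalent form in the proof of Lemma \ref{lem:break=000020sustain=000020points=000020ranking}---but the ingredients and logic are identical.
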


Observe that the effect of productivity growth on structural change
and earnings inequality is exclusively mediated through the manufacturing
expenditure share $m$. That is, $m$ is increasing in labor productivity
$\alpha$ because manufactures are a luxury good. By contrast, labor
productivity has both a direct effect on income and an indirect effect
through $m$. As we have extensively discussed in the previous section,
technical progress also drives the spatial reallocation of manufacturing
firms. To understand how our model brings together the notions of
structural change and spatial disparities, it is useful to return
to the abstract of the seminal \citet{Krugman1991} paper:
\begin{quotation}
This paper develops a simple model that shows how a country can endogenously
become differentiated into an industrialized ``core'' and an agricultural
``periphery.'' In order to realize scale economies while minimizing
transport costs, manufacturing firms tend to locate in the region
with larger demand, but the location of demand itself depends on the
distribution of manufacturing. Emergence of a core-periphery pattern
depends on transportation costs, economies of scale, and the share
of manufacturing in national income {[}p.483{]}.
\end{quotation}
This abstract mentions three important drivers of regional disparities
in the model. The first one is transportation costs ($\tau$ in the
model). The typical comparative statics exercise of most contributions
in the \textsc{neg} literature is to consider the consequences of
a steady reduction of trade and transportation costs. This particular
focus generated a lot of interest particularly in Europe. Summarizing
this body of work, \citet{Puga2002} concludes that the rise of regional
disparities in Europe is a likely consequence of massive investments
in transportation infrastructures and of the implementation of the
Single Market Program.\footnote{\citet{Faber2014} provides direct empirical evidence of this effect
in the context of the development of the National Trunk Highway System
in China.} The second one is scale economies. In equilibrium, unexploited scale
economies are increasing in product differentiation (decreasing in
$\sigma$) by free-entry. The lower $\sigma$, the larger the equilibrium
size of firms. A higher $\sigma$ also makes demand for individual
varieties more sensitive to trade costs, as is captured by the combination
of parameters, $\tau^{1-\sigma}$, that enters the equilibrium conditions.
The final ingredient in Krugman's abstract is the relative importance
of the sector featuring scales economies -- manufacturing -- in
the national economy. The larger it is, the stronger are self-reinforcing
agglomeration economies, and the more likely regional disparities
emerge in equilibrium. This share is a $parameter$ in Krugman's core-periphery
model. It is an endogenous $outcome$ in ours.

Consider, then, the following thought experiment. Assume that labor
productivity is initially low, with $\alpha<\alpha_{1}$, and trade
costs are initially high, with $\tau>\tau_{1}$, so that manufacturing
is evenly dispersed across both regions at the unique stable equilibrium.
Assume further that labor productivity monotonically increases over
time, holding other parameters fixed (in particular $\tau$ and $\sigma$).
Per capita incomes and economic well-being rise. In turn, the share
of spending on manufacturing increases and that on agriculture falls.
The first implication of this process is that labor reallocates from
agriculture to manufacturing: \emph{structural change} takes place
(see proposition \ref{prop:Growth=000020and=000020Structural=000020change}).
As the share of spending on manufactures increases, a second implication
is that agglomeration forces strengthen and dispersion forces weaken.
Eventually, the spatial concentration of manufacturing becomes a possibility
(as $\alpha$ rises above $\alpha_{1}$), and, then, from a possibility,
it becomes the only stable equilibrium outcome (as $\alpha$ rises
above $\alpha_{\infty}$): \emph{regional disparities} emerge (see
proposition \ref{prop:Break=000020and=000020Sustain=000020points}).

This evolution is graphically depicted in Figure \ref{fig:trajectory}.
As productivity grows over time and utility increases, the expenditure
share in manufacturing rises. Eventually, the core-periphery equilibrium
becomes possible since the sustain point is crossed. As productivity
keeps growing, the core-periphery equilibrium is the only stable equilibrium
and agglomeration of manufacturing at the core is a foregone conclusion.

\begin{figure}
\caption{Dynamics of Spatial Structural Change}
\label{fig:trajectory}

\includegraphics[scale=0.45]{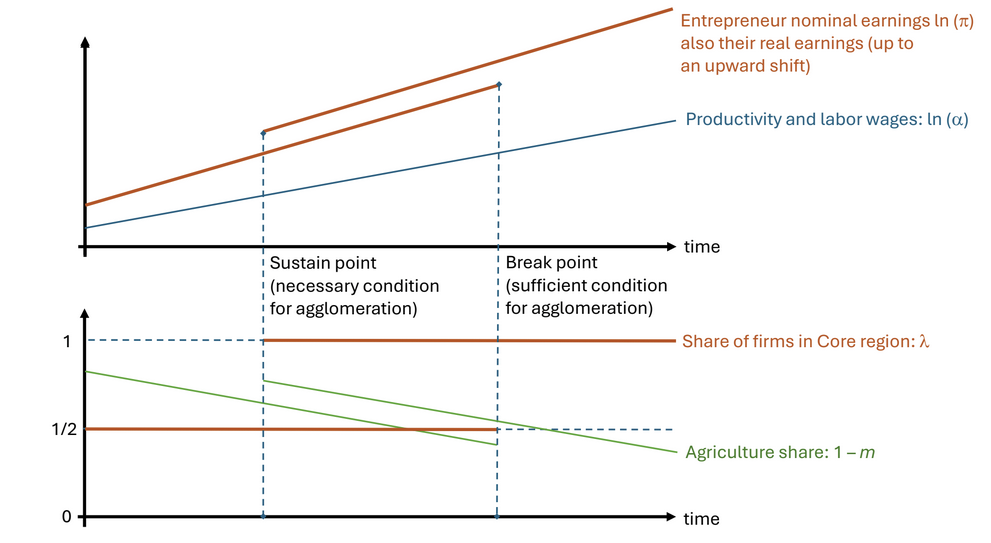}
\end{figure}

The relationship between regional disparities and structural change
goes both ways. The emergence of regional disparities has a positive
feedback effect on structural change because, as industry concentrates
in a region, the economic well being of entrepreneurs increases, everything
else equal, and with it their expenditure on manufacturing (see lemma
\ref{lem:break=000020sustain=000020points=000020ranking}). As a result,
the national expenditure share in manufacturing further increases.
This two-way interaction between agglomeration and structural change
is absent from standard structural change models that abstract from
the study of the spatial distribution of economic activity. Our model
is stylized in that there are only two locations and only two potential
stable equilibriums: symmetric or core-periphery. We believe that
extending the model to more locations would magnify the importance
of the two-way interaction we described, and we plan to study it in
subsequent work.

Structural change brought about by the combination of the steady growth
of labor productivity and of manufacturing being a luxury good also
has an effect on individual income inequality. Though it benefits
both types of factor owners, productivity growth disproportionately
benefits the owners of the factor used intensively in the sector growing
the most -- entrepreneurs (see proposition \ref{prop:Growth=000020and=000020Structural=000020change}).
The logic is well-known since at least \citet{Jones1965}, and here
it extends to the case of a factor specific to a monopolistically
competitive sector. Summarizing:
\begin{prop}
\label{prop:Structural=000020change=000020and=000020spatial=000020disparities}\textbf{Structural
change and spatial disparities}. Assume labor productivity $\alpha$
increases over time, all other things equal. Then, starting from an
initially low value $\alpha<\alpha_{1}$, (i) structural change takes
place: expenditure and employment are reallocated from agriculture
to manufacturing; (ii) regional disparities may (if $\alpha_{1}<\alpha<\alpha_{\infty}$)
and will (if $\alpha_{\infty}<\alpha$) emerge in equilibrium; (iii)
regional disparities reinforce structural change; and (iv) structural
change leads to an increase in the earnings premium of entrepreneurs.
\end{prop}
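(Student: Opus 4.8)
The statement is essentially a synthesis of Propositions \ref{prop:Structural=000020change}, \ref{prop:Growth=000020and=000020Structural=000020change}, and \ref{prop:Break=000020and=000020Sustain=000020points} together with Lemma \ref{lem:break=000020sustain=000020points=000020ranking}, so the plan is to assemble these pieces along a single trajectory indexed by the rising parameter $\alpha$, restricting attention to $\sigma\in(1,2]$ (the case in which $\alpha_1(\sigma)$ and $\alpha_\infty(\sigma)$ are defined) and holding $\sigma$ and $\tau$ fixed with $\tau>\tau_1$ at the initial $\alpha<\alpha_1$. The one preliminary I would state explicitly is an equilibrium-selection convention: the economy occupies the symmetric configuration $B$ whenever it is the unique stable one, and switches to the core-periphery configuration $C$ once $B$ loses stability. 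Under this convention, ``may emerge'' in part (ii) is read as ``becomes a stable equilibrium,'' and ``will emerge'' as ``becomes the unique stable equilibrium.''

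For part (i): while $\alpha<\alpha_1$, Proposition \ref{prop:Break=000020and=000020Sustain=000020points}(i) (using $\tau>\tau_1$) pins the configuration to $B$, so that Proposition \ref{prop:Growth=000020and=000020Structural=000020change} gives that the manufacturing expenditure share $m=m^B$ rises with $\alpha$, and Proposition \ref{prop:Structural=000020change} together with \eqref{eq:employment} gives that $L_M$ rises with $m$; hence both expenditure and employment reallocate toward manufacturing. If one prefers to argue this from first principles rather than cite Proposition \ref{prop:Growth=000020and=000020Structural=000020change}, I would differentiate the symmetric-equilibrium fixed-point equation \eqref{eq:v=000020symmetric-1} in $\alpha$: the left-hand side has unit elasticity in $\Omega^B$ while, by the footnote argument to that equation, the right-hand side has elasticity in $\Omega^B$ below $m<1$, and since the right-hand side is strictly increasing in $\alpha$ the implicit function theorem gives $\partial\Omega^B/\partial\alpha>0$, hence $\partial m^B/\partial\alpha>0$ by \eqref{eq:expenditure=000020share=000020range}. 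The same monotonicity along the $C$-branch is \eqref{eq:CP=000020alpha=000020sigma}, so structural change continues past the transition and, by part (iii), jumps up at it.

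For part (ii): apply Proposition \ref{prop:Break=000020and=000020Sustain=000020points} directly. Using $\tau_1>\tau_0$ from Lemma \ref{lem:break=000020sustain=000020points=000020ranking}(ii) and the hypothesis $\tau>\tau_1>\tau_0$, part (iii) of that proposition shows both $B$ and $C$ are stable once $\alpha_1(\sigma)<\alpha<\alpha_\infty(\sigma)$, while part (ii) shows $C$ is the unique stable configuration once $\alpha\ge\alpha_\infty(\sigma)$; Lemma \ref{lem:break=000020sustain=000020points=000020ranking}(iii) rules out any interior stable equilibrium along the way. For part (iii): Lemma \ref{lem:break=000020sustain=000020points=000020ranking}(i) gives $\Omega^B(\alpha,\sigma,\tau)<\Omega^C(\alpha,\sigma,1)$ and hence $m^B<m^C$ at any fixed $\alpha$, so the switch from $B$ to $C$ prescribed by the selection convention raises $m$ discretely, and through \eqref{eq:employment} and \eqref{eq:m=000020share=000020income} it raises $L_M$ and the national-income share $\tilde m$ — i.e.\ the emergence of disparities triggers a further round of structural change. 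For part (iv): the earnings premium is $\tilde\pi^j=1/(\sigma-m^j)$ by \eqref{eq:premium}, strictly increasing in $m^j$, so combining its within-configuration monotonicity in $\alpha$ (Proposition \ref{prop:Growth=000020and=000020Structural=000020change}) with the upward jump $m^C>m^B$ at the transition yields that $\tilde\pi$ increases along the whole path.

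The argument is almost entirely bookkeeping; the only place where something could in principle go wrong is the comparative statics of the symmetric configuration in $\alpha$, since the cross-term $-\tfrac{m(\Omega^B)}{\sigma-1}\ln\!\big(\tfrac{2}{1+\tau^{1-\sigma}}\big)$ in \eqref{eq:v=000020symmetric-1} depends on $\Omega^B$ through $m$ and could a priori fight the direct effect of $\alpha$. Assumption \ref{assu:regularity-1} is exactly what rules this out — it bounds the income elasticity of $m$ so that the whole right-hand side of \eqref{eq:v=000020symmetric-1} has elasticity in $\Omega^B$ below one — so the monotonicity survives. The secondary point to flag is that parts (ii)--(iv) describe the \emph{selected} equilibrium path rather than the full equilibrium correspondence, which is why I would make the selection convention explicit at the outset rather than leave it implicit in the word ``emerge.''
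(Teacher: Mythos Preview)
Your proposal is correct and follows essentially the same approach as the paper: the paper's proof consists entirely of the narrative in the paragraphs preceding the proposition, which, like your argument, synthesizes Propositions \ref{prop:Structural=000020change}, \ref{prop:Break=000020and=000020Sustain=000020points}, \ref{prop:Growth=000020and=000020Structural=000020change} and Lemma \ref{lem:break=000020sustain=000020points=000020ranking} along a trajectory in $\alpha$ starting from $\alpha<\alpha_1$ with $\tau>\tau_1$. Your version is more explicit than the paper's on two points the paper leaves implicit---the restriction to $\sigma\in(1,2]$ (so that $\alpha_1,\alpha_\infty$ exist) and the equilibrium-selection convention underlying the word ``emerge''---and your remark that Assumption \ref{assu:regularity-1} is what secures $\partial\Omega^B/\partial\alpha>0$ is a useful clarification; but these are refinements of the same argument, not a different route.
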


\section{Conclusion}

In this paper, we develop a parsimonious model featuring non-homothetic
demand to bring together the notions of structural change and regional
disparities. To this aim, we introduce Heterothetic Cobb-Douglas preferences,
which combine income effects with a unitary price elasticity of demand.
This latter property keeps the model extremely tractable, and we expect
it to be useful to applications in many other contexts. We incorporate
a two-region \textsc{neg} model into a two-sector structural change
framework. As labor productivity exogenously and monotonically increases
over time, economic well-being rises and the expenditure share on
agricultural goods falls. As a result, labor reallocates away from
agriculture, and industry concentrates spatially: structural change
and regional disparities are two outcomes of the growth process. Moreover,
the industrial concentration increases productivity, which in turn
leads to higher real incomes and a further reallocation away from
agriculture. As such, structural change and regional disparities are
also two mutually reinforcing propagators of the growth process.

The fundamental reason structural change strengthens self-enforcing
agglomeration forces in the model (by increasing the share of mobile
expenditure in the economy) is robust to the specific assumptions
of Krugman's core-periphery model \citep{Krugman1991}. By contrast,
the role of trade and transportation costs on the rise and fall of
regional disparities is highly sensitive to some modeling choices
\citep{Davis1998,KrugmanVenables1995,Puga1999}. In Krugman's original
model, the source of immobile demand is agriculture. \citet{helpman1998size}
points out the fragility of the result by reversing the model assumptions.
He introduces housing as the source of immobile demand. and shows
that in this alternative setting dispersion forces dominate agglomeration
forces when inter-regional trade and transportation costs in manufacturing
are low enough, reversing Krugman's original result.

By contrast, the chain of logic that we emphasize---from productivity
growth to structural change and from structural change to regional
disparities---works in both Krugman's and Helpman's cases.\footnote{As we have discussed, we make this point formally in Appendix \ref{Appx:Helpman}.
Specifically, we show that the symmetric equilibrium is unstable for
arbitrarily large transportation costs if the manufacturing expenditure
share $m$ is large enough.} Specifically, it works under the assumption that the composite good
combining manufacturing and services is a luxury and the other good,
be it agriculture or housing, a necessity. Empirically, the income
elasticity of demand for both agriculture \citep*{Herrendorf2014}
and housing \citep*{combes2019} is lower than one.

Our model is also amenable to extensions with more than two sectors
since \textsc{hcd }preferences can accommodate any number of sectors
and goods. In particular, we view studying the full transition from
agriculture, to manufacturing, and to services as a promising avenue
for studying structural changes and economic geography jointly. For
a quantitative analysis, the model can be extended to multiple locations
and non-unitary price effects can be introduced, e.g., through a \textsc{nh-ces}
demand system. We anticipate that in both cases, the two-way interaction
between structural change and agglomeration that we uncover may become
more relevant as it triggers further structural change to services
or to additional locations.

\bibliographystyle{plainnat}
\bibliography{biblio_bmrn}

\clearpage{}

\part*{Appendix}

\appendix

\section{\protect\label{proofs:=000020existence=000020and=000020uniqueness}Proofs
for Section \ref{sec:Model}}

\subsection{Lemma \ref{lem:Regularity=000020conditions=000020--=000020consumption}.}
\begin{lem*}
\label{lem:Regularity=000020conditions=000020--=000020consumption-1}\textbf{Regularity
conditions}. Assume preferences obey equation \eqref{eq:HCD}, and
assume assumptions \ref{assu:omega} and \ref{assu:subsistence} hold;
then in equilibrium (i) $C_{i}>\gamma$ and (ii) $\nicefrac{\partial u}{\partial C_{i}}>0$
for all $i\in I$.
\end{lem*}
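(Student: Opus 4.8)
The plan is to prove (ii) first and then (i): part (ii) --- through strict monotonicity of $U$ --- is what forces the budget constraint to bind in the utility-maximization problem underlying (i). Throughout I would write \eqref{eq:HCD-1} compactly as $\ln u=R(u,\{C_i\})$, where $R(u,\{C_i\})\equiv\sum_i\omega_i(u)\ln\!\left(\frac{\omega(u)\,C_i}{\omega_i(u)\,\gamma_i}\right)$.

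For (ii), I would differentiate $R$ in $u$ at fixed $\{C_i\}$; the ``entropy'' pieces $\sum_i\omega_i\ln\omega_i$ and $\omega\ln\omega$ collapse, leaving $\partial_uR=\sum_i\omega_i'(u)\ln\!\left(\frac{\omega(u)C_i}{\omega_i(u)\gamma_i}\right)$. Since $C_i\geq\gamma_i$ and $\omega_i(u)\leq\omega(u)$, every logarithm is non-negative, while $\omega_i'(u)\leq0$ by Assumption~\ref{assu:omega}(ii); hence $\partial_uR\leq0$, which is precisely the fact noted just before the lemma that the right-hand side of \eqref{eq:HCD-1} decreases in $u$. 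Therefore $\partial_u(\ln u-R)=\tfrac1u-\partial_uR\geq\tfrac1u>0$, so $v\mapsto\ln v-R(v,\cdot)$ is strictly increasing, has at most one root (which defines $U$ uniquely), and by the implicit function theorem $\partial u/\partial C_i=\dfrac{\omega_i(u)/C_i}{\,\tfrac1u-\partial_uR\,}>0$, proving (ii).

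For (i), I would use complementary slackness. A maximizer $\{C_i^{*}\}$ of $U$ on the compact set $\{C_i\geq\gamma_i,\ \sum_iP_iC_i\leq y\}$ exists; by (ii) the budget binds, so there are multipliers $\Lambda>0$, $\mu_i\geq0$ with $\partial U/\partial C_i=\Lambda P_i-\mu_i$ and $\mu_i(C_i^{*}-\gamma_i)=0$. Writing $u=U(\{C_j^{*}\})$ and $D\equiv\tfrac1u-\partial_uR>0$ (one positive number, shared across $i$, by the computation in (ii)), this reads $\omega_i(u)/(C_i^{*}D)=\Lambda P_i-\mu_i$. Let $B=\{i:C_i^{*}=\gamma_i\}$; summing $P_i\gamma_i<\underline{s}_iy\leq\frac{\omega_i(u)}{\omega(u)}y$ (Assumptions~\ref{assu:omega}(i),(iii) and~\ref{assu:subsistence}) over $i\in B$ shows $B\neq I$, so suppose for contradiction that $B$ is a nonempty proper subset. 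The interior conditions give $P_iC_i^{*}=\omega_i(u)/(\Lambda D)$ for $i\notin B$, so the binding budget pins down $\Lambda D=\big(\sum_{i\notin B}\omega_i(u)\big)/\big(y-\sum_{i\in B}P_i\gamma_i\big)$, and summing $P_i\gamma_i<\frac{\omega_i(u)}{\omega(u)}y$ over $i\in B$ forces $\Lambda D<\omega(u)/y$; but $\mu_i\geq0$ for $i\in B$ gives $\omega_i(u)\leq\Lambda D\,P_i\gamma_i<\Lambda D\,\frac{\omega_i(u)}{\omega(u)}y$, i.e.\ $\Lambda D>\omega(u)/y$ --- a contradiction. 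Hence $B=\emptyset$, i.e.\ $C_i^{*}>\gamma_i$ for all $i$.

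The hard part is this last step: ruling out binding subsistence floors is exactly the internal-consistency issue flagged after Assumption~\ref{assu:omega}, and it goes through only because the self-consistency inequality $P_i\gamma_i<\underline{s}_iy$ is used in tandem with $\underline{s}_i\leq\omega_i(u)/\omega(u)$ --- and, under Assumption~\ref{assu:subsistence} with $\gamma\to0$, this inequality holds automatically as long as equilibrium prices are finite. The one other place needing care is the cancellation in (ii) that yields $\partial_uR=\sum_i\omega_i'(u)\ln(\omega(u)C_i/(\omega_i(u)\gamma_i))$.
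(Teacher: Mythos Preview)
Your proof is correct and essentially the same as the paper's: both arguments write down the KKT/complementary-slackness conditions for the consumer problem, pin down the multiplier from the interior goods, and then obtain a contradiction with the self-consistency inequality $P_i\gamma_i<\underline{s}_i y\leq\tfrac{\omega_i(u)}{\omega(u)}y$ whenever any subsistence floor binds. The only cosmetic differences are that the paper uses the dual (expenditure minimization, so its multiplier $\theta$ is your $1/(\Lambda D)$) and proves (i) before (ii), whereas you reverse the order and use (ii) to justify that the budget binds --- a slightly cleaner logical flow.
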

\begin{proof}
Part (i) is equivalent to showing that $C_{i}=\gamma$ implies the
violation of the conditions imposed in \ref{assu:omega}, namely,
that $P_{i}\gamma\geq\uline{s}y$. Minimizing total expenditure $\sum_{i}P_{i}C_{i}$
to reach an arbitrary level of utility $u>0$ yields the following
first-order condition for $C_{i}$:
\begin{equation}
P_{i}-\frac{\theta\omega_{i}}{Ci}\geq0,\quad C_{i}\geq\gamma,\quad\left(P_{i}-\frac{\theta\omega_{i}}{C_{i}}\right)\left(C_{i}-\gamma\right)=0,\label{eq:foc}
\end{equation}
where $\theta$ is the Lagrange multiplier associated with the constraint
in equation \eqref{eq:HCD}. Thus, $C_{i}=\gamma$ requires $P_{i}\gamma\geq\theta\omega_{i}$,
and we are set to showing $P_{i}\gamma\geq\theta\omega_{i}\Rightarrow P_{i}\gamma\geq\uline{s}y$
to establish part (i). Note first that $P_{i}\gamma<\uline{s}y$ for
all $i\in{\cal N}$ implies that $y$ is above the level of expenditure
required to cover the subsistence level
\begin{align*}
\sum_{i}P_{i}\gamma & <\sum_{i}\uline{s}y<y,
\end{align*}
where the second inequality follows from the fact that the sum of
all goods minimum expenditure shares must be less than to the sum
of their actual expenditure shares which sum to one. Thus, there exists
a subset of goods $I_{+}\subseteq I$ such that $C_{j}>\gamma_{}$
and $P_{j}C_{j}=\theta\omega_{j}$ for all $j\in I_{+}$. Denote the
complementary subset of goods where $C_{i}=\gamma$ in equilibrium
by $I_{-}$. We can then write:
\begin{align*}
y & =\sum_{J\in I}P_{J}C_{J}\\
 & =\sum_{i\in I_{-}}P_{i}\gamma+\sum_{j\in I_{+}}P_{j}C_{j}\\
 & =\sum_{i\in I_{-}}P_{i}\gamma+\theta\sum_{j\in I_{+}}\omega_{j},
\end{align*}
where the first equality equalizes income with expenditure, the second
exploits the partition into goods consumed at the subsistence level
and those consumed above it, and the third exploits the first order
condition for the intensive margin of the latter category. Solving
for $\theta$, we obtain:
\[
\theta=\frac{y-\sum_{i\in I_{-}}P_{i}\gamma}{\sum_{j\in I_{+}}\omega_{j}}.
\]
Plugging this expression into the first order condition for goods
$i\in I_{-}$ yields:
\[
P_{i}\gamma\geq\frac{y-\sum_{i\in I_{-}}P_{i}\gamma}{\sum_{j\in I_{+}}\omega_{j}}\omega_{i}.
\]
Multiplying both sides by $\sum_{j\in I_{+}}\omega_{j}$ and summing
over goods $i\in I_{-}$ yields:
\[
\left(\sum_{j\in I_{+}}\omega_{j}\right)\sum_{i\in I_{-}}P_{i}\gamma\geq\left(\sum_{i\in I_{-}}\omega_{i}\right)\left(y-\sum_{i\in I_{-}}P_{i}\gamma\right),
\]
which we may rewrite as 
\[
\left(\sum_{j\in I_{+}}\omega_{j}+\sum_{i\in I_{-}}\omega_{i}\right)\sum_{i\in I_{-}}Pi\gamma\geq\left(\sum_{i\in I_{-}}\omega_{i}\right)y.
\]
Rearranging and simplifying, we obtain:
\[
\sum_{i\in I_{-}}Pi\gamma\geq\sum_{i\in I_{-}}\frac{\omega_{i}}{\omega}y,
\]
which implies
\begin{align*}
\exists i\in I_{-}:\quad P_{i}\gamma & \geq\frac{\omega_{i}}{\omega}y>\uline{s}y,
\end{align*}
where the second inequality follows from part (i) of \ref{assu:omega}.
Thus, if the condition in part (iii) of \eqref{assu:omega} holds,
then $C_{i}>\gamma$ for all $i\in I$. This step completes the proof
for part (i).

Part (ii). Totally differentiating equation \eqref{eq:HCD} and rearranging
it yields
\[
\left[\frac{1}{u}-\sum_{i}\omega_{i}^{\prime}\left(u\right)\ln\left(\frac{\omega(u)}{\omega_{i}(u)}\frac{C_{i}}{\gamma}\right)\right]\mathrm{d}u=\sum_{i}\omega_{i}\left(u\right)\mathrm{d}\ln C_{i}.
\]
The term in brackets on the left-hand-side of the expression above
is positive by $\omega_{i}^{\prime}<0$, $C_{i}>\gamma$, and $\omega_{i}(u)<\omega(u)$.
Therefore, $u$ is increasing in $C_{i}$ by $\omega_{i}>0$.
\end{proof}

\subsection{Lemma \ref{lem:Indirect=000020Utility}}
\begin{lem*}
\textbf{\label{lem:Indirect=000020Utility-1}Indirect utility function}.
Assume assumptions \ref{assu:subsistence} and \ref{assu:omega} hold.
Then the indirect utility $V\left(\{P_{i}\},y\right)$ associated
with the preferences in equation \eqref{eq:HCD} can be written implicitly
as the fixed point for $v$ of the following expression:
\[
\ln v=\sum_{i}\omega_{i}\left(v\right)\ln\left(\frac{y}{\gamma P_{i}}\right)..
\]
This fixed point exists and is unique. Furthermore, $V\left(\{P_{i}\},y\right)$
is increasing in $y$, decreasing in $P_{i}$ for all $i\in I$, and
homogeneous of degree zero in its arguments (and hence $V$ is a proper
indirect utility function).
\end{lem*}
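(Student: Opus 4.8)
The plan is to reduce the lemma to an elementary fixed-point analysis of a single scalar function. Recall from the main text that, under assumptions \ref{assu:omega} and \ref{assu:subsistence}, Lemma \ref{lem:Regularity=000020conditions=000020--=000020consumption} guarantees an interior optimum, and the expenditure-share characterization \eqref{eq:expenditure=000020share} gives the Marshallian demands $C_i = \omega_i(u)\,y/(\omega(u)\,P_i)$. Substituting these into the preference definition \eqref{eq:HCD-1} and using $\ln(C_i/\gamma) = \ln(y/(\gamma P_i)) + \ln(\omega_i/\omega)$ cancels the $\ln(\omega_i/\omega)$ terms and produces exactly the implicit equation in the statement. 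It therefore suffices to study $F(v) \equiv \ln v - \sum_i \omega_i(v)\ln(y/(\gamma P_i))$ on $v \in [1,\infty)$ and to show it has a unique zero that behaves monotonically in $(y,\{P_i\})$.

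First I would record the sign fact $\ln(y/(\gamma P_i)) > 0$ for every $i$: part (iii) of Assumption \ref{assu:omega} gives $\gamma P_i < \underline{s}\,y < y$, so $y/(\gamma P_i) > 1$. Existence then follows from the intermediate value theorem: $F(1) = -\sum_i \omega_i(1)\ln(y/(\gamma P_i)) < 0$, since each $\omega_i(1) > 0$ (by part (i) of Assumption \ref{assu:omega}) and each logarithm is positive, while $F(v) \to +\infty$ as $v \to \infty$ because $\ln v \to \infty$ and the subtracted term $\sum_i \omega_i(v)\ln(y/(\gamma P_i))$ is bounded above by its value at $v = 1$ (the $\omega_i$ are nonincreasing and positive, hence bounded on $[1,\infty)$). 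For uniqueness, differentiate: $F'(v) = \tfrac{1}{v} - \sum_i \omega_i'(v)\ln(y/(\gamma P_i)) > 0$, since $\tfrac{1}{v} > 0$ and each summand $-\omega_i'(v)\ln(y/(\gamma P_i)) \ge 0$ by part (ii) of Assumption \ref{assu:omega} and the sign fact just noted. A strictly increasing continuous function has at most one zero, so the fixed point $v = V(\{P_i\},y)$ is well-defined and unique. (Continuity of $F$ is inherited from the differentiability of the $\omega_i$ posited in Assumption \ref{assu:omega}.)

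For the comparative statics I would apply the implicit function theorem to $F(v;y,\{P_i\}) = 0$, which is legitimate because $F_v = F'(v) > 0$. One computes $F_y = -\sum_i \omega_i(v)/y = -\omega(v)/y < 0$, so $\partial V/\partial y = -F_y/F_v > 0$; and $F_{P_j} = \omega_j(v)/P_j > 0$, so $\partial V/\partial P_j = -F_{P_j}/F_v < 0$. Homogeneity of degree zero is immediate: replacing $(y,\{P_i\})$ by $(\lambda y,\{\lambda P_i\})$ for any $\lambda > 0$ leaves every ratio $y/(\gamma P_i)$, and hence $F$ itself, unchanged, so the unique root $v$ is unchanged. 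Monotonicity in income and prices together with homogeneity of degree zero are precisely the properties required for $V$ to be a proper indirect utility function.

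The only step that is more than bookkeeping is the existence argument, and within it the two boundary claims — that $F$ is strictly negative at the bottom of the admissible utility range and diverges to $+\infty$ at the top. Both hinge on the same two structural inputs: the positivity of $\ln(y/(\gamma P_i))$, which is exactly where part (iii) of the self-consistency assumption enters, and the boundedness of the weights $\omega_i$ on $[1,\infty)$, which comes for free from their monotonicity. Everything else — the cancellation producing the implicit equation, the sign of $F'$, and the implicit-function-theorem derivatives — is routine.
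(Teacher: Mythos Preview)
Your argument is correct and follows the same route as the paper: both rely on the strict monotonicity of $F(v)=\ln v-\sum_i\omega_i(v)\ln\!\bigl(y/(\gamma P_i)\bigr)$, which comes from $\omega_i'\le 0$ together with $\ln\!\bigl(y/(\gamma P_i)\bigr)>0$, and then obtain the comparative statics by totally differentiating the implicit relation. Your version is in fact more complete than the paper's appendix proof, which handles only the monotonicity in $y$ and $P_i$ and leaves existence, uniqueness, and homogeneity to the informal discussion in the main text; your explicit boundary analysis ($F(1)<0$, $F(v)\to+\infty$) and direct check of degree-zero homogeneity fill those gaps cleanly.
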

\begin{proof}
Totally differentiating equation \eqref{eq:Indirect=000020Utility=000020--=000020intermediate=000020step}
and rearranging yield:
\[
\left[\frac{1}{v}-\sum_{i}\omega_{i}^{\prime}\left(v\right)\ln\left(\frac{y}{\gamma P_{i}}\right)\right]\mathrm{d}v=\omega\left(v\right)\mathrm{d}\ln y-\sum_{i}\omega_{i}\left(v\right)\mathrm{d}\ln P_{i}.
\]
The term in brackets on the left-hand-side of the expression above
is positive by assumption \ref{assu:omega}. Thus, $v$ is increasing
in $y$ and decreasing in $P_{i}$ by $\omega_{i}>0$.
\end{proof}

\subsection{Lemma \ref{lem:Expenditure=000020Function}}
\begin{lem*}
\textbf{\label{lem:Expenditure=000020Function-1}Expenditure function}.
Assume assumptions \ref{assu:subsistence} and \ref{assu:omega} hold.
Then the expenditure function $E\left(u,P_{A},P_{M}\right)$ associated
with the preferences in equation \eqref{eq:HCD} can be written as:
\begin{align}
\ln E\left(\{P_{i}\},u\right) & =\ln\gamma+\frac{\ln u}{\omega\left(u\right)}+\sum_{i}\frac{\omega_{i}\left(u\right)}{\omega(u)}\mathrm{d}\ln P_{i}\label{eq:Expenditure=000020Function=000020E-1}
\end{align}
The function $E$ is increasing in its arguments and it is homogeneous
of degree one in prices (and hence is a proper expenditure function).
\end{lem*}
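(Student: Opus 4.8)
The plan is to solve the expenditure‑minimization program explicitly from its interior first‑order conditions, substitute the implied Hicksian demands back into the defining relation \eqref{eq:HCD-1}, solve for $\ln E$, and then read off the regularity properties by inspection together with duality with Lemma \ref{lem:Indirect=000020Utility}.

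First I would look at the program of minimizing total expenditure $\sum_i P_iC_i$ subject to attaining utility $u$. By Lemma \ref{lem:Regularity=000020conditions=000020--=000020consumption}(i) the solution is interior, so the first‑order condition \eqref{eq:foc} holds with equality, giving $P_iC_i=\theta\,\omega_i(u)$ for every $i$, where $\theta$ is the multiplier on the utility constraint. The key point — and the step I expect to need the most care in the write‑up — is that $u$ is held fixed throughout this minimization, so the weights $\omega_i(u)$ are constants and the first‑order conditions are linear in $C_i$. Summing over $i$ and writing $E\equiv\sum_i P_iC_i$ gives $\theta=E/\omega(u)$, hence the Hicksian demands satisfy $P_iC_i=m_i(u)\,E$ with $m_i(u)=\omega_i(u)/\omega(u)$ as in \eqref{eq:expenditure=000020share}: cost minimization reproduces the same expenditure‑share structure as the Marshallian problem.

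Next I would substitute $C_i=m_i(u)E/P_i$ into \eqref{eq:HCD-1}. Since $\ln(C_i/\gamma)=\ln m_i(u)+\ln\!\big(E/(\gamma P_i)\big)$ and $\ln m_i(u)=\ln(\omega_i(u)/\omega(u))$, the term $-\sum_i\omega_i(u)\ln(\omega_i(u)/\omega(u))$ in \eqref{eq:HCD-1} cancels exactly against $\sum_i\omega_i(u)\ln m_i(u)$, leaving
\[
\ln u=\sum_i\omega_i(u)\ln\!\Big(\frac{E}{\gamma P_i}\Big)=\omega(u)\ln\frac{E}{\gamma}-\sum_i\omega_i(u)\ln P_i .
\]
Solving for $\ln E$ yields $\ln E=\ln\gamma+\dfrac{\ln u}{\omega(u)}+\sum_i m_i(u)\ln P_i$, which is exactly \eqref{eq:Expenditure=000020Function=000020E}--\eqref{eq:Omega}; setting $E=y$ and $u=v$ here recovers \eqref{eq:Indirect=000020Utility=000020--=000020intermediate=000020step}, so the two representations are mutually consistent.

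Finally, for the properties: homogeneity of degree one in prices holds because $\sum_i m_i(u)=1$, so replacing each $P_i$ by $tP_i$ adds $\ln t$ to $\ln E$; monotonicity in each $P_i$ is immediate from $\partial\ln E/\partial P_i=m_i(u)/P_i>0$ since $m_i(u)>0$; and monotonicity in $u$ follows by duality, since $E(\cdot,\{P_i\})$ is the inverse in $y$ of $y\mapsto V(\{P_i\},y)$, which Lemma \ref{lem:Indirect=000020Utility} shows is strictly increasing, so its inverse is too (equivalently, $\mathrm{d}(\ln u/\omega(u))/\mathrm{d}u>0$ because $u>1$ and $\omega'<0$). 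Everything after the first step is substitution and inspection, so the only real obstacle is expositional: making the fixed‑$u$ structure of the cost‑minimization first‑order conditions explicit enough that the cancellation in \eqref{eq:HCD-1} is transparent.
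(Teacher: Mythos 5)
Your proof is correct and is precisely the argument the paper has in mind: the paper omits this proof, stating only that it is ``analogous'' to those of Lemmas \ref{lem:Regularity=000020conditions=000020--=000020consumption} and \ref{lem:Indirect=000020Utility}, and your derivation---interior first-order conditions of the cost-minimization program giving $P_iC_i=\theta\omega_i(u)$ as in the proof of Lemma \ref{lem:Regularity=000020conditions=000020--=000020consumption}, then substitution of the implied demands into the defining relation so that the $\sum_i\omega_i(u)\ln\left(\omega_i(u)/\omega(u)\right)$ term cancels, exactly as in the derivation of \eqref{eq:Indirect=000020Utility=000020--=000020intermediate=000020step}---is that analogous argument written out. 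The concluding checks (homogeneity from $\sum_i m_i(u)=1$, monotonicity in prices from $m_i(u)>0$, and monotonicity in $u$ from $u>1$ and $\omega^{\prime}<0$) all hold as you state.
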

\begin{proof}
The proof is analogous to the proofs to lemmas \ref{lem:Regularity=000020conditions=000020--=000020consumption}
and \ref{lem:Indirect=000020Utility-1}, and hence is omitted.
\end{proof}

\subsection{Proposition \ref{prop:Structural=000020change}}
\begin{prop*}
\label{prop:Structural=000020change-1}\textbf{Engel curves and structural
change}. Assume assumption \ref{assu:Engel} holds. Then (i) manufactures
are a luxury good and the agricultural good is a necessity, and (ii)
the allocation of workers across sectors follows the expenditure shares.
Hence $L_{M}$ is increasing in the level of utility $u$ in equilibrium.
\end{prop*}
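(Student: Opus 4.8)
The plan is to handle the two numbered claims in turn and then combine them; the whole argument reduces to monotonicity of the manufacturing expenditure share $m(u)$ together with the equilibrium labor-market identity \eqref{eq:employment}. First I would record that $m$ is interior and strictly increasing. In the two-good case $m(u)=\omega_M(u)/\bigl(\omega_A(u)+\omega_M(u)\bigr)=\bigl[1+\omega_A(u)/\omega_M(u)\bigr]^{-1}$; Assumption~\ref{assu:omega}(i) gives $\omega_A,\omega_M>0$, hence $0<m(u)<1$, and Assumption~\ref{assu:Engel}, which says $\omega_M/\omega_A$ is increasing (equivalently $\omega_A/\omega_M$ decreasing), gives $m'(u)>0$. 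This reproduces \eqref{eq:expenditure=000020share=000020range}.

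For part~(i) I would translate this monotonicity into the luxury/necessity language through income elasticities. Fix the price vector $(P_A,P_M)$; by Lemma~\ref{lem:Indirect=000020Utility} the equilibrium utility $u$ is strictly increasing in nominal income $y$. From the optimal allocation \eqref{eq:expenditure=000020share}, $P_M C_M=m(u)y$ and $P_A C_A=\bigl(1-m(u)\bigr)y$, so taking logs and differentiating with respect to $\ln y$ at fixed prices yields $\partial\ln C_M/\partial\ln y=1+\bigl(\partial\ln m/\partial\ln u\bigr)\bigl(\partial\ln u/\partial\ln y\bigr)$ and $\partial\ln C_A/\partial\ln y=1+\bigl(\partial\ln(1-m)/\partial\ln u\bigr)\bigl(\partial\ln u/\partial\ln y\bigr)$. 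The second factor is positive in both expressions, while the first factor is positive for $M$ and negative for $A$ (since $m'>0$), so the income elasticity of manufactures exceeds one and that of the agricultural good is below one: manufactures are a luxury and agriculture a necessity.

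For part~(ii) I would invoke the accounting identity \eqref{eq:employment}, $L_M/(1-L_M)=(\sigma-1)/(1-m(u))$, already derived in the main text from value-added shares and full employment of labor. Since $\sigma>1$ and $0<m(u)<1$ with $m'(u)>0$, the right-hand side is positive and strictly increasing in $u$; because $x\mapsto x/(1-x)$ is a strictly increasing bijection of $(0,1)$ onto $(0,\infty)$, inverting it shows that $L_M$ is strictly increasing in $u$ along any equilibrium path. Combining with part~(i) gives the proposition.

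The step that requires the most care (more bookkeeping than ingenuity) is the income-elasticity argument in part~(i): one must route the comparative static through the indirect utility function and its monotonicity in $y$ from Lemma~\ref{lem:Indirect=000020Utility}, rather than treating $u$ as a free variable, and one should keep in mind that only entrepreneurs carry \textsc{hcd} preferences while workers consume manufactures only — so the worker side of the economy is already embedded in the coefficients of \eqref{eq:employment}, and no separate argument is needed for it.
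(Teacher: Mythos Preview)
Your proposal is correct and follows essentially the same route as the paper: establish that Assumption~\ref{assu:Engel} is equivalent to $m(u)$ being strictly increasing, read off the luxury/necessity conclusion from this monotonicity, and then use the equilibrium identity \eqref{eq:employment} to conclude $L_M$ is increasing in $u$. The only difference is cosmetic: the paper's proof simply asserts the equivalence ``income elasticity $>1$ $\Leftrightarrow$ $m$ increasing in $u$'' in one line, whereas you spell it out by routing through Lemma~\ref{lem:Indirect=000020Utility} and differentiating $P_i C_i = m_i(u)y$---which is exactly the computation underlying the paper's assertion.
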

\begin{proof}
(i) Manufactures are a luxury good if and only if the income elasticity
of demand is larger than one. With only two goods in the economy,
this requirement is equivalent to saying that $m=\frac{\omega_{M}}{\omega_{A}+\omega_{M}}$
is increasing in $u$, which can happen if and only if $\frac{\omega_{M}}{\omega_{A}}$
is increasing in $u$. (ii) By inspection, $L_{M}$ is increasing
in $m$, which is increasing in $u$ by part (i).
\end{proof}

\subsection{Lemma \ref{lem:Sufficient=000020condition}}
\begin{lem*}
\label{lem:Sufficient=000020condition-1}\textbf{Sufficient regularity
conditions}. If parameter values are such that the following inequality
holds:
\begin{equation}
\frac{\gamma\sigma\tau}{\uline{s}}<1,\label{eq:sufficient=000020condition=000020--=000020fundamentals-1}
\end{equation}
then the sufficient condition in assumption \ref{assu:omega} holds.
\end{lem*}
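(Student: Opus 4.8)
\subsection*{Proof plan}

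The plan is to unpack part (iii) of Assumption \ref{assu:omega} into the concrete statement it becomes under Assumption \ref{assu:subsistence}, namely that in equilibrium $P_i\gamma<\underline{s}\,y$ for each good $i\in\{A,M\}$, where the relevant income $y$ is the entrepreneurs' nominal earnings and the $P_i$ are the equilibrium prices. Since the left-hand side is controlled by an upper bound on equilibrium prices and the right-hand side by a lower bound on equilibrium income, the whole argument reduces to pinning down those two bounds and then chaining them with the hypothesis $\frac{\gamma\sigma\tau}{\underline{s}}<1$ and with $\alpha>1$.

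First I would bound the equilibrium prices. The agricultural good is the numeraire, so $P_A=1$. For the manufacturing composite, equation \eqref{eq:P_M^s} gives $\left(P_M^{r}\right)^{1-\sigma}=\lambda^{r}+\left(1-\lambda^{r}\right)\tau^{1-\sigma}$; because $\sigma>1$ and $\tau>1$ we have $\tau^{1-\sigma}\in(0,1)$, so the right-hand side lies between $\tau^{1-\sigma}$ and $1$, and applying the decreasing map $x\mapsto x^{1/(1-\sigma)}$ reverses these to give $1\le P_M^{r}\le\tau$. Hence $\max_i P_i\le\tau$ in every spatial configuration.

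Next I would bound the entrepreneurs' income from below. In each of the two relevant equilibrium configurations — the symmetric one and the core of the core-periphery pattern — equation \eqref{eq:sales} yields $\pi^{r}=\frac{\alpha}{\sigma-m}$, and since the expenditure share satisfies $0<m<1<\sigma$ by \eqref{eq:expenditure=000020share=000020range} (a consequence of \ref{assu:omega}(i) and \ref{assu:Engel}, not of part (iii), so there is no circularity), this gives $\pi^{r}>\frac{\alpha}{\sigma}$. Combining the two bounds, $P_i\gamma\le\tau\gamma$ while $\underline{s}\,y=\underline{s}\,\pi^{r}>\underline{s}\,\frac{\alpha}{\sigma}>\frac{\underline{s}}{\sigma}$, where the last inequality uses $\alpha>1$. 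The hypothesis $\frac{\gamma\sigma\tau}{\underline{s}}<1$ is exactly $\tau\gamma<\frac{\underline{s}}{\sigma}$, so $P_i\gamma<\underline{s}\,y$ for both goods in both configurations, which is what part (iii) of Assumption \ref{assu:omega} requires.

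The work here is routine; the only point that needs care — rather than a genuine obstacle — is checking that the price and income bounds cover every configuration one wants to certify as an equilibrium, and obtaining the income bound without leaning on the condition being proved. The demand-system-free observation that a firm's revenue already exceeds the immobile worker demand, i.e. $\sigma\pi^{r}\ge\frac{\alpha/2}{\lambda^{r}+(1-\lambda^{r})\tau^{1-\sigma}}>\frac{\alpha}{2}$ from \eqref{eq:sales}, provides an unconditional fallback $\pi^{r}>\frac{\alpha}{2\sigma}$, and the sharper bound $\pi^{r}>\frac{\alpha}{\sigma}$ is what produces the clean constant in the stated inequality. Finally, since $\gamma\rightarrow0$ under Assumption \ref{assu:subsistence}, the inequality $\frac{\gamma\sigma\tau}{\underline{s}}<1$ holds for any finite $\tau$, which is the remark accompanying the lemma.
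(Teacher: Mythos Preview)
Your approach is the same as the paper's: bound prices from above by $\tau$, bound entrepreneurs' income from below by $\alpha/\sigma$, and chain these with the hypothesis $\gamma\sigma\tau/\underline{s}<1$ (using $\alpha>1$).

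The one place where the paper does slightly more is the income bound. Rather than restricting to the symmetric and core configurations where $\pi=\alpha/(\sigma-m)$ is available in closed form, the paper keeps \emph{both} worker-demand terms in equation \eqref{eq:sales}, drops the nonnegative entrepreneur-demand pieces, and observes that
\[
\sigma\pi^{r}\ \ge\ \frac{\alpha}{2}\left[\frac{1}{\lambda^{r}+(1-\lambda^{r})\tau^{1-\sigma}}+\frac{1}{(1-\lambda^{r})\tau^{\sigma-1}+\lambda^{r}}\right]\ \ge\ \alpha,
\]
so $\pi^{r}\ge\alpha/\sigma$ holds for \emph{every} $\lambda^{r}\in[0,1]$. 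This is precisely your ``fallback'' idea, but with both $\alpha/2$ terms retained instead of one; keeping both is what turns the weak $\alpha/(2\sigma)$ into the sharp $\alpha/\sigma$ uniformly, without appealing to the closed-form $\pi=\alpha/(\sigma-m)$. Since the lemma sits before the spatial-equilibrium analysis and is meant to cover the short-run equilibrium at any $\lambda^{r}$, this uniform version is the one you want. Otherwise your argument is correct and matches the paper.
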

\begin{proof}
Equation \eqref{eq:P_M^s} implies $P_{M}\leq\tau$, and $P_{A}=1$
by equation \eqref{eq:w^r=00003Dalpha}. Turn to equation \eqref{eq:sales},
and rewrite it as follows:
\begin{align*}
\sigma\pi^{r} & =\frac{\lambda^{r}m\left(u^{r}\right)\pi^{r}+\frac{\alpha}{2}}{\lambda^{r}+\left(1-\lambda^{r}\right)\tau^{1-\sigma}}+\frac{\left(1-\lambda^{r}\right)m\left(u^{s}\right)\pi^{s}+\frac{\alpha}{2}}{\left(1-\lambda^{r}\right)\tau^{\sigma-1}+\lambda^{r}}\\
 & \geq\frac{\alpha}{2}\left[\frac{1}{\lambda^{r}+\left(1-\lambda^{r}\right)\tau^{1-\sigma}}+\frac{1}{\left(1-\lambda^{r}\right)\tau^{\sigma-1}+\lambda^{r}}\right].
\end{align*}
From $\sigma,\tau>1$ and $0\leq\lambda^{r}\leq1$, it follows that
the term in the square bracket is larger than 2. Then, the earnings
of an entrepreneur are larger than $\pi>\frac{\alpha}{\sigma}$, and
in turn the right-hand side of expression \eqref{eq:sufficient=000020condition=000020--=000020fundamentals-1}
is a lower bound for the right-hand side of the inequality in assumption
\ref{assu:omega}. Hence, if the former inequality holds, so does
the latter.
\end{proof}

\section{Proofs for Section \ref{sec:Equilibrium}}

\subsection{Lemma \ref{lem:Sustain=000020point-2}}
\begin{lem*}
\label{lem:Sustain=000020point-1}\textbf{Sustain point}. Let $\alpha_{1}\left(\sigma\right)>0$
be defined implicitly by $m\left(\alpha_{1},\sigma,1\right)=\sigma-1$
when $\sigma\in\left(1,2\right]$, with $\alpha_{1}^{\prime}>0$ by
equation \eqref{eq:CP=000020alpha=000020sigma}. (i) If $\sigma\in\left(1,2\right]$
and $0<\alpha\leq\alpha_{1}\left(\sigma\right)$, then $\Omega^{s}\leq\Omega^{C}$
if $\lambda^{r}=1$ for any value of $\tau$ in $\left[1,\tau_{1}\right]$.
(ii) If $\sigma\in\left(1,2\right]$ and $\alpha>\alpha_{1}\left(\sigma\right)$,
then $\Omega^{s}\leq\Omega^{C}$ if $\lambda^{r}=1$ for any $\tau\geq1$.
(iii) If $\sigma>2$, then $\Omega^{s}\leq\Omega^{CP}$ if $\lambda^{r}=1$
for any value of $\tau$ in $\left[1,\tau_{1}\right]$..
\end{lem*}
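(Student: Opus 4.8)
The plan is to treat the left-hand side of equation \eqref{eq:sustain=000020plane} as a function of the trade cost alone,
\[
f\left(\tau\right)\equiv1-\frac{2\sigma}{\sigma-m^{C}}\,\tau^{m^{C}+1-\sigma}+\frac{\sigma+m^{C}}{\sigma-m^{C}}\,\tau^{2\left(1-\sigma\right)},
\]
where $m^{C}=m^{C}\left(\alpha,\sigma,1\right)$ is the fixed number determined by the core fixed point \eqref{eq:v=000020Core}; by assumption \ref{assu:regularity-1} and equation \eqref{eq:CP=000020alpha=000020sigma} this number exists, is unique, lies in $\left(0,1\right)$ by \eqref{eq:expenditure=000020share=000020range}, and is strictly increasing in $\alpha$. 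The derivation preceding the lemma shows that the boundary case $\Omega^{s}=\Omega^{C}$ occurs exactly at the zeros of $f$, so the strategy is to locate those zeros on $\left[1,\infty\right)$ and then read off the sign of $\Omega^{s}-\Omega^{C}$ on the complementary intervals. Two facts obtained by direct differentiation anchor everything: $f\left(1\right)=0$ (location is irrelevant at zero trade cost) and $f^{\prime}\left(1\right)=2m^{C}\left(1-2\sigma\right)/\left(\sigma-m^{C}\right)<0$, so that $f$ is strictly negative for $\tau$ just above $1$.

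Next I would invoke Descartes' generalization of Laguerre's rule of signs, keeping track of the ordering of the three exponents $2\left(1-\sigma\right)$, $m^{C}+1-\sigma$, and $0$: one always has $2\left(1-\sigma\right)<m^{C}+1-\sigma$ and $2\left(1-\sigma\right)<0$, while $m^{C}+1-\sigma$ has the same sign as $m^{C}-\left(\sigma-1\right)$. When $m^{C}<\sigma-1$, the ordered coefficients have signs $\left(+,-,+\right)$, so there are at most two positive roots and their number is even; since $\tau=1$ is simple, exactly one other positive root exists, and because $f^{\prime}\left(1\right)<0$ while $f\left(\tau\right)\to1>0$ as $\tau\to\infty$ it must lie in $\left(1,\infty\right)$; call it $\tau_{1}$, with $f<0$ on $\left(1,\tau_{1}\right)$ and $f>0$ on $\left(\tau_{1},\infty\right)$. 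When $m^{C}\ge\sigma-1$, reordering the exponents leaves a single sign change, so $\tau=1$ is the unique positive root; together with $f^{\prime}\left(1\right)<0$ and either $f\left(\tau\right)\to-\infty$ (if $m^{C}>\sigma-1$) or $f\left(\tau\right)=\tfrac{\sigma+m^{C}}{\sigma-m^{C}}\bigl(\tau^{2\left(1-\sigma\right)}-1\bigr)<0$ for $\tau>1$ (if $m^{C}=\sigma-1$), this forces $f<0$ on all of $\left(1,\infty\right)$.

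It remains to assemble the cases using the monotonicity of $m^{C}$ in $\alpha$. For $\sigma\in\left(1,2\right]$ we have $\sigma-1\in\left(0,1\right]$ and $m^{C}\in\left(0,1\right)$, so the equation $m^{C}\left(\alpha_{1},\sigma,1\right)=\sigma-1$ defines a unique threshold $\alpha_{1}\left(\sigma\right)$ by the intermediate value theorem, with $\alpha_{1}^{\prime}>0$ by implicit differentiation and \eqref{eq:CP=000020alpha=000020sigma}; for $0<\alpha\le\alpha_{1}$ then $m^{C}\le\sigma-1$, so $\Omega^{s}\le\Omega^{C}$ on $\left[1,\tau_{1}\right]$ (with $\tau_{1}=+\infty$ at the boundary $\alpha=\alpha_{1}$), giving part (i), while for $\alpha>\alpha_{1}$ we have $m^{C}>\sigma-1$, so $\Omega^{s}\le\Omega^{C}$ for every $\tau\ge1$, giving part (ii). For $\sigma>2$ one has $m^{C}<1<\sigma-1$ automatically, so the first case always applies and $\Omega^{s}\le\Omega^{C}$ on $\left[1,\tau_{1}\right]$, giving part (iii).

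The step I expect to be the main obstacle is not the rule-of-signs bookkeeping but pinning down \emph{which} side of the sustain point is sustainable, i.e., that $\Omega^{s}\le\Omega^{C}$ corresponds to $f\le0$ rather than $f\ge0$. I would settle this by checking the $\tau\to\infty$ asymptotics of \eqref{eq:pi^s}--\eqref{eq:v=000020periphery-1}: the immobile worker demand in the deviating region contributes a term of order $\tau^{2\left(\sigma-1\right)}$ to $\pi^{s}$, whence $\pi^{s}\sim\tfrac{\alpha}{2\sigma}\tau^{\sigma-1}$ and $\Omega^{s}\sim\tfrac{\alpha}{2\sigma}\tau^{\sigma-1-m\left(\Omega^{s}\right)}$, which diverges whenever it can (in particular for $\sigma>2$, since then $\sigma-1>1>m$). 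The divergence of $\Omega^{s}$ matches $f\left(\tau\right)\to1>0$ and, combined with the facts that $f$ and $\Omega^{s}-\Omega^{C}$ share the same zeros on $\left[1,\infty\right)$ and that $f$ changes sign only at $\tau_{1}$, fixes the correspondence throughout.
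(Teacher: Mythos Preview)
Your proposal is correct and follows essentially the same route as the paper: rewrite the sustain-plane equation as a generalized polynomial in $\tau$ and apply Descartes'/Laguerre's rule of signs, splitting on the sign of the exponent $m^{C}+1-\sigma$. Your treatment is in fact more careful than the paper's in two respects---you compute $f'(1)=2m^{C}(1-2\sigma)/(\sigma-m^{C})<0$ explicitly to anchor the sign of $f$ just above $\tau=1$, and you supply an argument (via the $\tau\to\infty$ asymptotics of $\pi^{s}$ and $\Omega^{s}$) for why $f\le 0$ is the side on which $\Omega^{s}\le\Omega^{C}$, a point the paper asserts without justification; both are useful additions rather than departures from the paper's strategy.
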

\begin{proof}
Rewrite equation \eqref{eq:sustain=000020plane} as
\[
f\left(\alpha,\sigma,\tau\right)\equiv1-B_{1}\left(\alpha,\sigma\right)\tau^{b\left(\alpha,\sigma\right)}+B_{2}\left(\alpha,\sigma\right)\tau^{2\left(1-\sigma\right)}=0,\qquad B_{1},B_{2}>1.
\]
By Descartes's generalization of Laguerre's rule of signs, this equation
admits one positive root for $\tau$ if $b\left(\alpha,\sigma\right)>0$,
as is in cases (i) and (iii) of the lemma, and it admits two positive
roots if $b\left(\alpha,\sigma\right)<0$, as in case (ii). In the
former case, it is easily verified by inspection of equation \eqref{eq:sustain=000020plane}
that this root is $\tau=1$, and, from equations \eqref{eq:v=000020Core},
\eqref{eq:v=000020periphery-1}, and \eqref{eq:sustain=000020plane},
that $\Omega^{C}>\Omega^{s}$ whenever $\tau>1$. If instead $b\left(\alpha,\sigma\right)<0$,
then we can establish with some algebra that $\lim_{\tau\rightarrow\infty}f\left(\alpha,\sigma,\tau\right)=0$,
$\lim_{\tau\rightarrow1}f\left(\alpha,\sigma,\tau\right)=-\infty$,
$\lim_{\tau\rightarrow\infty}f^{\prime}\left(\alpha,\sigma,\tau\right)=+\infty$,
and $\lim_{\tau\rightarrow1}f^{\prime}\left(\alpha,\sigma,\tau\right)<0$,
hence there exists a unique $\tau_{1}\in\left(1,\infty\right)$ such
that $f\left(\alpha,\sigma,\tau_{1}\right)=0$.
\end{proof}

\subsection{Lemma \ref{lem:Break=000020point-2}}
\begin{lem*}
\label{lem:Break=000020point-1}\textbf{Break point}. Let $\alpha_{\infty}\left(\sigma\right)>0$
be defined implicitly by $m\left(\alpha_{\infty},\sigma,\infty\right)=\sigma-1$
when $\sigma\in\left(1,2\right]$, with $\alpha_{\infty}^{\prime}>0$
by equation \eqref{eq:CP=000020alpha=000020sigma}. (i) If $\sigma\in\left(1,2\right]$
and $0<\alpha<\alpha_{\infty}\left(\sigma\right)$, then, the symmetric
equilibrium is unstable for any value of $\tau$ in $\left[1,\tau_{0}\right]$.
(ii) If $\sigma\in\left(1,2\right]$ and $\alpha>\alpha_{\infty}\left(\sigma\right)$,
then the symmetric equilibrium is unstable for any $\tau\geq1$. (iii)
If $\sigma>2$, then the symmetric equilibrium is unstable for any
value of $\tau$ in $\left[1,\tau_{0}\right]$.
\end{lem*}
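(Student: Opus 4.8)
The plan is to follow the template of the sustain-point proof (Lemma~\ref{lem:Sustain=000020point-2}), but with the appeal to Descartes' rule of signs replaced by a direct monotonicity argument, since here the relevant expenditure share $m^{B}\left(\alpha,\sigma,\tau\right)$ genuinely varies with the trade cost $\tau$, unlike $m^{C}\left(\alpha,\sigma,1\right)$. Write the right-hand side of the break-point equation \eqref{eq:break=000020point} as $h\left(m^{B}\left(\alpha,\sigma,\tau\right)\right)$ with
\[
h\left(m\right)\equiv\frac{\sigma-1-m}{\sigma-1+m}\cdot\frac{\sigma-m}{\sigma+m},
\]
and let $g\left(\alpha,\sigma,\tau\right)\equiv\tau^{1-\sigma}-h\left(m^{B}\left(\alpha,\sigma,\tau\right)\right)$. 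By the stability analysis preceding the lemma, for $\tau>1$ the symmetric equilibrium is unstable ($\hat{\pi}\geq m^{B}\hat{P}_{M}$) exactly when $g\left(\alpha,\sigma,\tau\right)\geq0$, the break point $\tau_{0}$ being a root of $g$; the point $\tau=1$ is the degenerate case in which location is irrelevant.

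The first step is the comparative statics of $m^{B}$. Totally differentiating \eqref{eq:v=000020symmetric-1}, the coefficient on $\mathrm{d}\ln\Omega^{B}$ on the left is one, while by Assumption~\ref{assu:regularity-1} the slope of the right-hand side in $\ln\Omega^{B}$ is below $m^{B}<1$; the implicit function theorem then yields that $\Omega^{B}$, and hence $m^{B}=m\left(\Omega^{B}\right)$ by Assumption~\ref{assu:Engel}, is strictly decreasing in $\tau$ (raising $\tau$ raises $\ln\frac{2}{1+\tau^{1-\sigma}}$) and strictly increasing in $\alpha$, exactly as for the core-periphery case in \eqref{eq:CP=000020alpha=000020sigma}. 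The second step analyzes $h$: since $m^{B}<1<\sigma$, the factor $\left(\sigma-m\right)/\left(\sigma+m\right)$ is positive, so the sign of $h\left(m\right)$ is that of $\sigma-1-m$, and a routine differentiation shows $h'<0$ on $m\in\left(0,\min\{1,\sigma-1\}\right)$, where also $0<h<1$, whereas $h\leq0$ once $m\geq\sigma-1$. Hence, where $m^{B}\left(\alpha,\sigma,\tau\right)<\sigma-1$, the term $h\left(m^{B}\right)$ is strictly increasing in $\tau$ and $\tau^{1-\sigma}$ strictly decreasing, so $g$ is strictly decreasing there; where $m^{B}\left(\alpha,\sigma,\tau\right)\geq\sigma-1$, $g\geq\tau^{1-\sigma}>0$. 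At the endpoints, $g\left(\alpha,\sigma,1\right)=1-h\left(m^{B}\left(\alpha,\sigma,1\right)\right)>0$ and $\lim_{\tau\to\infty}g=-h\left(m^{B}\left(\alpha,\sigma,\infty\right)\right)$, with sign opposite to that of $\sigma-1-m^{B}\left(\alpha,\sigma,\infty\right)$.

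The three cases then follow. If $\sigma>2$, then $\sigma-1>1>m^{B}$ for all $\tau$, so $g$ decreases strictly on $\left[1,\infty\right)$ from a positive value to a negative limit; a unique $\tau_{0}>1$ solves $g=0$ and the symmetric equilibrium is unstable exactly on $\left[1,\tau_{0}\right]$ --- case (iii). If $\sigma\in\left(1,2\right]$, define $\alpha_{\infty}\left(\sigma\right)$ by $m^{B}\left(\alpha_{\infty},\sigma,\infty\right)=\sigma-1$, which is well defined and unique because $\alpha\mapsto m^{B}\left(\alpha,\sigma,\infty\right)$ is continuous, strictly increasing, and has limits straddling $\sigma-1\in\left(0,1\right]$; moreover $\alpha_{\infty}^{\prime}>0$ by the monotonicity of $m^{B}$ as in \eqref{eq:CP=000020alpha=000020sigma}. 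For $\alpha<\alpha_{\infty}\left(\sigma\right)$, $m^{B}\left(\alpha,\sigma,\infty\right)<\sigma-1$, so $\lim_{\tau\to\infty}g<0$; with $g\left(\alpha,\sigma,1\right)>0$ and the monotonicity structure, a unique root $\tau_{0}>1$ exists and instability holds on $\left[1,\tau_{0}\right]$ --- case (i). For $\alpha>\alpha_{\infty}\left(\sigma\right)$, $m^{B}\left(\alpha,\sigma,\tau\right)\geq m^{B}\left(\alpha,\sigma,\infty\right)>\sigma-1$ for every finite $\tau$, so $g\geq\tau^{1-\sigma}>0$ for all $\tau\geq1$ --- case (ii). Finally, in cases (i) and (iii), $\tau_{0}$ is increasing in $\alpha$ by implicit differentiation of $g\left(\alpha,\sigma,\tau_{0}\right)=0$: at the crossing $m^{B}<\sigma-1$ so $h'<0$, whence $\partial g/\partial\alpha=-h'\left(m^{B}\right)\,\partial m^{B}/\partial\alpha>0$, while $\partial g/\partial\tau<0$, giving $\mathrm{d}\tau_{0}/\mathrm{d}\alpha>0$.

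The main obstacle is the first step: one must verify that Assumption~\ref{assu:regularity-1} is exactly what makes the fixed point of \eqref{eq:v=000020symmetric-1} unique and delivers the unambiguous signs $\partial m^{B}/\partial\tau<0$ and $\partial m^{B}/\partial\alpha>0$. A subsidiary point is that $h$ is monotone in $m$ only where $\sigma-1-m$ keeps a fixed sign, so the argument must split at $m=\sigma-1$ --- handled by observing that $g>0$ outright once $m^{B}\geq\sigma-1$ --- and one must confirm that the limiting values of $m^{B}\left(\cdot,\sigma,\infty\right)$ as $\alpha\to0$ and $\alpha\to\infty$ bracket $\sigma-1$ so that $\alpha_{\infty}$ genuinely exists.
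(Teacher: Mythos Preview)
Your argument is correct and mirrors the route the paper actually takes in the footnote to equation \eqref{eq:break=000020point}: the left side $\tau^{1-\sigma}$ falls monotonically from one to zero while the right side $h(m^{B})$ rises (since $m^{B}$ is decreasing in $\tau$), so they cross exactly once in $(1,\infty)$ iff $m^{B}(\alpha,\sigma,\infty)<\sigma-1$. The paper's appendix proof is terser than yours---it only emphasizes that at the break point $\mathrm{d}\Omega=0$ so $m$ is locally a parameter, then defers the rest to the sustain-point template---whereas you rightly observe that because $m^{B}$ genuinely varies with $\tau$ the Descartes-rule shortcut from Lemma~\ref{lem:Sustain=000020point-2} does not apply literally, and you supply the monotonicity details and the case split at $m^{B}=\sigma-1$ explicitly.
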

\begin{proof}
The formal analysis of the stability of the break point was initiated
by \citet{Puga1999} and is now standard; see e.g., \citet*{BaldwinForslidMartinOttavianoRobertnicoud},
\citet*{FujitaKV1999}, or \citet{RobertNicoud2005}. Specifically,
we totally differentiate equations \eqref{eq:P_M^s}, \eqref{eq:sales},
and \eqref{eq:Expenditure=000020=00003D=000020income} around $\lambda^{r}=\lambda^{s}=\frac{1}{2}$,
and to check for conditions under which $\frac{\partial\Omega^{r}}{\partial\lambda^{r}}<0$
at this symmetric equilibrium; the symmetric equilibrium is said to
be stable under these conditions. If there exists a combination of
parameters $\left(\alpha,\sigma,\tau\right)$ in the admissible range
$\left(1,\infty\right)^{3}$ such that $\frac{\partial\Omega^{r}}{\partial\lambda^{r}}=0$,
then this combination satisfies equation \eqref{eq:break=000020point}.
The only difference between our model and the earlier literature is
that expenditure share on manufactures, $m$, is an endogenous variable
here while it is a parameter in the \textsc{neg} there. However, the
break point is defined as the parameter configuration such that the
change in utility following the move of an atomistic entrepreneur
is exactly zero. Since the change in utility is zero, it follows from
$m^{\prime}\left(\Omega\right)>0$ that $m$ is unchanged, too. The
rest of the proof is similar to that of lemma \ref{lem:Sustain=000020point-2},
and hence is omitted here.
\end{proof}

\subsection{Lemma \ref{lem:break=000020sustain=000020points=000020ranking}}
\begin{lem*}
\label{lem:break=000020sustain=000020points=000020ranking-1}\textbf{Equilibriums
ranking}. (i) Given parameter values $\left\{ \alpha,\phi,\sigma\right\} $,
the utility of entrepreneurs is strictly higher at the core-periphery
equilibrium than at the symmetric equilibrium:
\begin{equation}
\forall\left\{ \alpha,\sigma,\tau\right\} :\quad\Omega^{B}\left(\alpha,\sigma,\tau\right)<\Omega^{C}\left(\alpha,\sigma,1\right).\label{eq:utility=000020ranking-1}
\end{equation}
This result implies in turn $m^{B}<m^{C}$, and $\pi^{B}<\pi^{C}$.
(ii) Given parameter values $\left\{ \alpha,\sigma\right\} $, $\tau_{1}>\tau_{0}$.
(iii) Any spatial equilibrium other than $\lambda\in\left\{ 0,\frac{1}{2},1\right\} $,
if it exists at all, is not stable.
\end{lem*}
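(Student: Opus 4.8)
For part (i) I would write both equilibrium conditions as fixed points of a common map and insert a sign wedge. Define $\Phi(\Omega)\equiv\ln\Omega-\ln\frac{\alpha}{\sigma-m(\Omega)}$ on $(1,\infty)$. Equation \eqref{eq:v=000020Core} says $\Phi(\Omega^{C})=0$, while equation \eqref{eq:v=000020symmetric-1} says $\Phi(\Omega^{B})=-\frac{m(\Omega^{B})}{\sigma-1}\ln\left(\frac{2}{1+\tau^{1-\sigma}}\right)$, which is strictly negative whenever $\tau>1$, since then $\tau^{1-\sigma}<1$ and hence $\frac{2}{1+\tau^{1-\sigma}}>1$. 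It then suffices to show $\Phi$ is strictly increasing: differentiating with respect to $\ln\Omega$ yields $1-\frac{m}{\sigma-m}\frac{\partial\ln m}{\partial\ln\Omega}$, and Assumption \ref{assu:regularity-1} together with $m<1$ (equation \eqref{eq:expenditure=000020share=000020range}) bounds the subtracted term above by $\frac{m(\sigma-1)}{\sigma-m}<1$. Since $\Phi$ is strictly increasing and $\Phi(\Omega^{B})<0=\Phi(\Omega^{C})$, we conclude $\Omega^{B}<\Omega^{C}$. Because $m$ is increasing in $\Omega$ (equation \eqref{eq:expenditure=000020share=000020range}), $m^{B}=m(\Omega^{B})<m(\Omega^{C})=m^{C}$, and since $\pi^{j}=\frac{\alpha}{\sigma-m^{j}}$ is increasing in $m^{j}$ for $j\in\{B,C\}$, also $\pi^{B}<\pi^{C}$.

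For part (ii) note that by equation \eqref{eq:sustain=000020plane} the sustain point depends on fundamentals only through $m^{C}\equiv m^{C}(\alpha,\sigma,1)$ and $\sigma$, so I would write it as $\tau_{1}=\tau_{1}(m^{C},\sigma)$; as observed in the text it is increasing in $m^{C}$. Similarly, by equation \eqref{eq:break=000020point} the break point satisfies $\tau_{0}=T_{0}\left(m^{B}(\alpha,\sigma,\tau_{0}),\sigma\right)$, where $T_{0}(m,\sigma)$ denotes the root in $\tau$ of the closed-form relation $\tau^{1-\sigma}=\frac{(\sigma-1-m)(\sigma-m)}{(\sigma-1+m)(\sigma+m)}$ with $m$ held fixed (note that at any such root $m<\sigma-1$ automatically, else the right-hand side would be non-positive). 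The plan is to combine three ingredients: (a) the standard new-economic-geography inequality $T_{0}(m,\sigma)<\tau_{1}(m,\sigma)$ for every fixed $m\in(0,\sigma-1)$ --- valid here because \eqref{eq:break=000020point} and \eqref{eq:sustain=000020plane} are exactly the break and sustain conditions of the footloose-entrepreneur model of \citet{Forslid2003}, with $m$ playing the role of the manufacturing expenditure-share parameter (see \citealp{Puga1999}; \citealp*{BaldwinForslidMartinOttavianoRobertnicoud}; \citealp{RobertNicoud2005}); (b) the monotonicity of $\tau_{1}(\cdot,\sigma)$; and (c) the bound $m^{B}(\alpha,\sigma,\tau_{0})<m^{C}$ from part (i). Chaining them gives $\tau_{0}=T_{0}\left(m^{B}(\alpha,\sigma,\tau_{0}),\sigma\right)<\tau_{1}\left(m^{B}(\alpha,\sigma,\tau_{0}),\sigma\right)<\tau_{1}(m^{C},\sigma)=\tau_{1}$; when instead $m^{C}\geq\sigma-1$ one has $\tau_{1}=\infty$ and the inequality is immediate.

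For part (iii), observe that any equilibrium with $\lambda^{r}\in(0,1)$ has both regions populated by indifferent entrepreneurs, hence $\Omega^{r}=\Omega^{s}$, hence $u^{r}=u^{s}$ and $m(u^{r})=m(u^{s})$; by the symmetry of the model these equilibria occur in pairs $\{\lambda^{\ast},1-\lambda^{\ast}\}$. The plan is to totally differentiate equations \eqref{eq:P_M^s}, \eqref{eq:sales} and \eqref{eq:Expenditure=000020=00003D=000020income} around such an equilibrium and check that $\partial(\Omega^{r}-\Omega^{s})/\partial\lambda^{r}>0$, i.e., that it sits on the unstable branch of the tomahawk diagram. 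Since $\Omega^{r}=\Omega^{s}$ at the equilibrium forces $m$ and $m'(\Omega)$ to agree across regions there, the one new feature relative to the textbook case --- the endogeneity of $m$ --- enters the differentiated system symmetrically and does not affect the relevant sign, so the claim reduces to the by-now-standard stability analysis of NEG models (\citealp{Puga1999}; \citealp{RobertNicoud2005}; cf.\ the proof of Lemma \ref{lem:Break=000020point-1}). The step I expect to be the main obstacle is part (ii): the break point is defined only implicitly, through an $m^{B}$ that itself depends on $\tau$, so this circular dependence must be disentangled carefully, and one must verify that the textbook inequality $T_{0}(m,\sigma)<\tau_{1}(m,\sigma)$ truly carries over verbatim to the variant with an endogenous share; parts (i) and (iii) are comparatively routine given Assumption \ref{assu:regularity-1} and the existing NEG toolkit.
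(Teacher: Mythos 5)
Your proof is correct. For parts (ii) and (iii) it follows essentially the same route as the paper: both reduce the break/sustain ranking to the textbook fixed-share inequality $T_{0}(m,\sigma)<\tau_{1}(m,\sigma)$ of the footloose-entrepreneur model and close the argument with the wedge $m^{B}<m^{C}$ plus a monotonicity-in-$m$ observation --- the paper inserts the wedge at the break point, showing $\tau_{0}<\tilde{\tau}_{0}\equiv T_{0}(m^{C},\sigma)<\tau_{1}$, while you insert it at the sustain point, $\tau_{0}=T_{0}(m^{B},\sigma)<\tau_{1}(m^{B},\sigma)<\tau_{1}(m^{C},\sigma)$; the two chains are interchangeable, and part (iii) is in both cases delegated to the standard \textsc{neg} stability analysis of \citet{RobertNicoud2005} as a consequence of (ii). Where you genuinely differ is part (i). The paper anchors at $\tau=1$, where $\Omega^{B}=\Omega^{C}$, and then totally differentiates the symmetric fixed-point equation with respect to $\tau$ to show $\mathrm{d}\Omega^{B}/\mathrm{d}\tau<0$, which requires signing a bracketed term containing the $\tau$-dependent correction $\frac{1}{\sigma-1}\ln\frac{2}{1+\tau^{1-\sigma}}$. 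You instead compare the two equilibria as roots of the single monotone map $\Phi(\Omega)=\ln\Omega-\ln\frac{\alpha}{\sigma-m(\Omega)}$, noting $\Phi(\Omega^{C})=0>\Phi(\Omega^{B})$ whenever $\tau>1$. This is cleaner: the only analytic input is the bound $\frac{\partial\ln m}{\partial\ln\Omega}<\sigma-1$ from Assumption \ref{assu:regularity-1}, applied once to get $\Phi^{\prime}>0$, and as a by-product it delivers the uniqueness of $\Omega^{C}$ that the paper establishes separately. The one caveat, which you share with the paper, is that the strict ranking $\Omega^{B}<\Omega^{C}$ only holds for $\tau>1$ (the two coincide at $\tau=1$), so the quantifier in the statement should be read accordingly.
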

\begin{proof}
(i) Note that $\Omega^{B}=\Omega^{C}$ if $\tau=1$. Thus, to show
$\Omega^{B}<\Omega^{C}$ it suffices to show that $\Omega^{B}$ is
continuously decreasing in $\tau$. Totally differentiating equation
\eqref{eq:v=000020symmetric-1} yields
\begin{align*}
\mathrm{d}\ln\Omega^{B}-m^{B}\left[\frac{1}{\sigma-m^{B}}-\frac{1}{\sigma-1}\ln\left(\frac{2}{1+\tau^{1-\sigma}}\right)\right]\mathrm{d}\ln m^{B} & =\mathrm{d}\ln\alpha-m^{B}\frac{\tau^{1-\sigma}}{1+\tau^{1-\sigma}}\mathrm{d}\ln\tau,
\end{align*}
which we may rewrite as 
\begin{equation}
\mathrm{d}\ln\Omega^{B}\left\{ 1-m^{B}\left[\frac{1}{\sigma-m^{B}}-\frac{1}{\sigma-1}\ln\left(\frac{2}{1+\tau^{1-\sigma}}\right)\right]\frac{\partial\ln m^{B}}{\partial\ln\Omega^{B}}\right\} =\mathrm{d}\ln\alpha-m^{B}\frac{\tau^{1-\sigma}}{1+\tau^{1-\sigma}}\mathrm{d}\ln\tau.\label{eq:Omega^B=000020total=000020diff}
\end{equation}
Consider the following term in the braces of the expression above:
\begin{align*}
m^{B}\left[\frac{1}{\sigma-m^{B}}-\frac{1}{\sigma-1}\ln\left(\frac{2}{1+\tau^{1-\sigma}}\right)\right] & \leq\frac{1}{\sigma-1}-\frac{1}{\sigma-1}\ln\left(\frac{2}{1+\tau^{1-\sigma}}\right)\\
 & \leq\frac{1}{\sigma-1}.
\end{align*}
Under assumption \eqref{assu:regularity-1}, then, the term inside
braces in equation \eqref{eq:Omega^B=000020total=000020diff} is positive,
and $\Omega^{B}$ is decreasing in $\tau$, and hence $\Omega^{B}<\Omega^{C}$,
as was to be shown. (ii) Consider the break point in equation \eqref{eq:v=000020symmetric-1},
and evaluate this expression for $m^{B}\left(\alpha,\sigma,\tau\right)=m^{C}$.
Let 
\[
\left(\tilde{\tau}_{0}\right)^{1-\sigma}\equiv\frac{\sigma-1-m^{C}}{\sigma-1+m^{C}}\frac{\sigma-m^{C}}{\sigma+m^{C}},
\]
which is the break point treating $m$ as a parameter, which we fix
at $m^{C}.$ In this case, \citet{RobertNicoud2005} shows that $\tilde{\tau}_{0}<\tau_{1}$
holds. Thus, we are left to showing that $\tau_{0}<\tilde{\tau}_{0}$
holds. To see that this is indeed the case, observe that it follows
from $\Omega^{B}<\Omega^{C}$ and assumption \ref{assu:Engel} that
$m^{B}$ is decreasing in $\tau$, and that $m^{B}<m^{C}$ holds.
Since $\tilde{\tau}_{0}$ is increasing in $m$ by inspection of the
expression above (recall $\sigma>1$), it follows that $\tau_{0}<\tilde{\tau}_{0}$
holds, and hence $\tau_{0}<\tau_{1}$ holds, as was to be shown. (iii)
\citet{RobertNicoud2005} shows that this result is a corollary to
part (ii) of this lemma.
\end{proof}

\subsection{Proposition \ref{prop:Break=000020and=000020Sustain=000020points}}
\begin{prop*}
\label{prop:Break=000020and=000020Sustain=000020points-1}If $\sigma\in\left(1,2\right]$,
then $\exists\alpha_{1}\left(\sigma\right),\alpha_{\infty}\left(\sigma\right)>0$
defined implicitly as $m^{C}\left(\alpha_{1},\sigma,1\right)=\sigma-1$
and $m^{B}\left(\alpha_{\infty},\sigma,\infty\right)$, respectively,
with $0<\alpha_{1}\left(\sigma\right)<\alpha_{\infty}\left(\sigma\right)$.
(i) The symmetric equilibrium is the unique stable equilibrium configuration
in either of the following cases: if $\sigma>2$ and $\tau>\tau_{1}\left(\alpha,\sigma\right)$;
or if $\sigma\in\left(1,2\right]$, $0<\alpha<\alpha_{1}\left(\sigma\right)$,
and $\tau>\tau_{1}\left(\alpha,\sigma\right)$. (ii) The core-periphery
configuration is the unique stable equilibrium in any of the following
cases: if $\sigma>2$ and $\tau\leq\tau_{0}\left(\alpha,\sigma\right)$;
if $\sigma\in\left(1,2\right]$, $0<\alpha<\alpha_{\infty}\left(\sigma\right)$,
and $\tau\leq\tau_{0}\left(\alpha,\sigma\right)$; if $\sigma\in\left(1,2\right]$
and $\alpha\geq\alpha_{\infty}$$\left(\sigma\right)$. (iii) The
symmetric equilibrium and the core-periphery pattern are both stable
equilibrium configurations in any of the following cases: if $\sigma>2$
and $\tau_{0}\left(\alpha,\sigma\right)<\tau\leq\tau_{1}\left(\alpha,\sigma\right)$;
or if $\sigma\in\left(1,2\right]$, $0<\alpha<\alpha_{1}\left(\sigma\right)$,
and $\tau_{0}\left(\alpha,\sigma\right)<\tau\leq\tau_{1}\left(\alpha,\sigma\right)$;
or if $\sigma\in\left(1,2\right]$, $\alpha_{1}\left(\sigma\right)<\alpha<\alpha_{\infty}\left(\sigma\right)$,
and $\tau>\tau_{0}\left(\alpha,\sigma\right)$.
\end{prop*}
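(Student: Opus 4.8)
The strategy is to assemble the ingredients already in hand---the sustain point (Lemma~\ref{lem:Sustain=000020point-2}), the break point (Lemma~\ref{lem:Break=000020point-2}), and the ranking results (Lemma~\ref{lem:break=000020sustain=000020points=000020ranking})---into a single taxonomy of stable configurations. The first observation is that, by part~(iii) of Lemma~\ref{lem:break=000020sustain=000020points=000020ranking}, the only candidates for a stable equilibrium are $\lambda^{r}\in\{0,\frac{1}{2},1\}$, i.e.\ the symmetric configuration or the core-periphery pattern. Hence the classification reduces to determining, for each $(\alpha,\sigma,\tau)$, whether the symmetric equilibrium is stable and whether the core-periphery pattern is an equilibrium, the latter being then automatically stable by the standard \textsc{neg} bifurcation analysis of \citet{RobertNicoud2005} invoked in the proofs of Lemmas~\ref{lem:Break=000020point-2} and~\ref{lem:break=000020sustain=000020points=000020ranking}.

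Next I would record the two relevant thresholds. By Lemma~\ref{lem:Sustain=000020point-2}, the core-periphery pattern is an equilibrium ($\Omega^{s}\le\Omega^{C}$) if and only if $\tau\le\tau_{1}(\alpha,\sigma)$, where a finite $\tau_{1}\in(1,\infty)$ exists precisely when $m^{C}(\alpha,\sigma,1)<\sigma-1$: this holds for every $\alpha$ when $\sigma>2$, and when $\sigma\in(1,2]$ it holds exactly for $\alpha<\alpha_{1}(\sigma)$, whereas for $\alpha>\alpha_{1}(\sigma)$ the pattern is sustainable for all $\tau\ge1$ (set $\tau_{1}=\infty$ by convention). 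Symmetrically, by Lemma~\ref{lem:Break=000020point-2} the symmetric equilibrium is stable if and only if $\tau>\tau_{0}(\alpha,\sigma)$, where a finite $\tau_{0}\in(1,\infty)$ exists precisely when $m^{B}(\alpha,\sigma,\infty)<\sigma-1$: this holds for every $\alpha$ when $\sigma>2$, and when $\sigma\in(1,2]$ exactly for $\alpha<\alpha_{\infty}(\sigma)$, whereas for $\alpha\ge\alpha_{\infty}(\sigma)$ the symmetric equilibrium is unstable for all $\tau$ (set $\tau_{0}=\infty$ by convention). That $m^{C}(\cdot,\sigma,1)$ and $m^{B}(\cdot,\sigma,\infty)$ are strictly increasing in $\alpha$---so that $\alpha_{1}$ and $\alpha_{\infty}$ are well defined---follows from Assumption~\ref{assu:regularity-1} via \eqref{eq:CP=000020alpha=000020sigma} and the analogous computation for $\Omega^{B}$.

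The one genuinely new ingredient is the ordering $\alpha_{1}(\sigma)<\alpha_{\infty}(\sigma)$. I would obtain it from part~(i) of Lemma~\ref{lem:break=000020sustain=000020points=000020ranking}, which gives $m^{B}(\alpha,\sigma,\tau)<m^{C}(\alpha,\sigma,1)$ for all $(\alpha,\sigma,\tau)$; letting $\tau\to\infty$ and evaluating at $\alpha=\alpha_{1}(\sigma)$ yields $m^{B}(\alpha_{1},\sigma,\infty)<m^{C}(\alpha_{1},\sigma,1)=\sigma-1$. Since $m^{B}(\cdot,\sigma,\infty)$ is strictly increasing and first attains the value $\sigma-1$ at $\alpha_{\infty}(\sigma)$, it follows that $\alpha_{\infty}(\sigma)>\alpha_{1}(\sigma)$. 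Combined with part~(ii) of Lemma~\ref{lem:break=000020sustain=000020points=000020ranking} ($\tau_{0}<\tau_{1}$ whenever both are finite), this pins down the picture in $(\alpha,\tau)$-space: the stable configurations are the symmetric one alone when $\tau>\tau_{1}$, both configurations when $\tau_{0}<\tau\le\tau_{1}$, and the core-periphery pattern alone when $\tau\le\tau_{0}$, with the first region empty when $\tau_{1}=\infty$ and the last absorbing every $\tau$ when $\tau_{0}=\infty$.

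It then remains to translate these three regions into the parameter statements (i)--(iii) by listing, for each of $\sigma>2$ and $\sigma\in(1,2]$, which of $\tau_{0},\tau_{1}$ is finite. For $\sigma>2$ both are always finite, giving case~(i) for $\tau>\tau_{1}$, case~(iii) for $\tau_{0}<\tau\le\tau_{1}$, and case~(ii) for $\tau\le\tau_{0}$; the same holds for $\sigma\in(1,2]$ and $0<\alpha<\alpha_{1}$. For $\sigma\in(1,2]$ and $\alpha_{1}<\alpha<\alpha_{\infty}$ only $\tau_{0}$ is finite, so ``$\tau>\tau_{1}$'' is void and we are left with case~(iii) for $\tau>\tau_{0}$ and case~(ii) for $\tau\le\tau_{0}$; and for $\sigma\in(1,2]$ with $\alpha\ge\alpha_{\infty}$ neither threshold is finite, so the core-periphery pattern is the unique stable configuration for all $\tau$, which is again case~(ii). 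I expect the main difficulty to be organizational rather than analytical: one must keep the degenerate cases $\tau_{0}=\infty$ and $\tau_{1}=\infty$ straight so that the union of the three regions, worked through in $(\sigma,\alpha)$, reproduces verbatim the partition (i)--(iii) in the statement.
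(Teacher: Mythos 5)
Your proposal is correct and follows exactly the paper's route: the paper's own proof simply states that the proposition is an immediate consequence of Lemmas \ref{lem:Sustain=000020point-2}, \ref{lem:Break=000020point-2}, and \ref{lem:break=000020sustain=000020points=000020ranking}, which is precisely the assembly you carry out. Your explicit derivation of $\alpha_{1}(\sigma)<\alpha_{\infty}(\sigma)$ from $m^{B}<m^{C}$ and the monotonicity of $m^{B}(\cdot,\sigma,\infty)$ in $\alpha$ fills in a detail the paper leaves implicit, and your bookkeeping of the degenerate cases $\tau_{0}=\infty$, $\tau_{1}=\infty$ matches the intended partition.
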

\begin{proof}
The proof of this proposition is an immediate consequence of lemmas
\ref{lem:Sustain=000020point-2}, \ref{lem:Break=000020point-2},
and \ref{lem:break=000020sustain=000020points=000020ranking}.
\end{proof}

\section{Proofs for Section \ref{sec:Structural=000020Change=000020Spatial=000020Disparities}}

\subsection{Proposition \ref{prop:Growth=000020and=000020Structural=000020change-1}}
\begin{prop*}
\label{prop:Growth=000020and=000020Structural=000020change-1}\textbf{Growth
and structural change}. Assume labor productivity $\alpha$ increases
over time. Then, given the spatial equilibrium $j\in\left\{ B,CP\right\} $,
national income $Y$, per-capita income $\frac{Y}{2}$, the share
of manufacturing $\tilde{m}$, and the earnings premium of entrepreneurs
$\tilde{\pi}$ all increase over time.
\end{prop*}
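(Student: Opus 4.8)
The plan is to reduce the entire statement to a single monotonicity claim: that the equilibrium manufacturing expenditure share $m^{j}(\alpha,\sigma,\tau)$ is increasing in $\alpha$ for each spatial configuration $j\in\{B,C\}$. Once that is established, the rest follows by inspection of the closed-form expressions \eqref{eq:National=000020income}, \eqref{eq:m=000020share=000020income}, and \eqref{eq:premium}, together with the direct dependence on $\alpha$ in $Y^{j}$. So the proof has two parts: first, show $\partial m^{j}/\partial\alpha>0$; second, chain this through the three formulas.

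For the first part I would treat the two configurations together wherever possible. In the core-periphery case, the fixed-point equation \eqref{eq:v=000020Core} reads $\Omega^{C}=\alpha/(\sigma-m(\Omega^{C}))$; in the symmetric case, \eqref{eq:v=000020symmetric-1} reads $\ln\Omega^{B}=\ln\tfrac{\alpha}{\sigma-m(\Omega^{B})}-\tfrac{m(\Omega^{B})}{\sigma-1}\ln\tfrac{2}{1+\tau^{1-\sigma}}$. In both, differentiate implicitly with respect to $\alpha$. The key inequality \ref{assu:regularity-1}, namely $\partial\ln m/\partial\ln\Omega<\sigma-1$, is exactly the condition guaranteeing that the coefficient multiplying $\mathrm{d}\ln\Omega^{j}$ is positive (this is the same algebra already carried out in the footnote to \eqref{eq:v=000020symmetric-1} and in equation \eqref{eq:Omega^B=000020total=000020diff} in the proof of Lemma \ref{lem:break=000020sustain=000020points=000020ranking}), so $\Omega^{j}$ is increasing in $\alpha$. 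Since $m^{\prime}>0$ by \eqref{eq:expenditure=000020share=000020range} (which follows from assumption \ref{assu:Engel}), we get $m^{j}$ increasing in $\alpha$. Indeed, \eqref{eq:CP=000020alpha=000020sigma} already asserts $\partial m^{C}/\partial\alpha>0$ and $\partial\Omega^{C}/\partial\alpha>0$, and the argument for $\Omega^{B}$ is the content of the displayed inequality bounding the bracketed term in \eqref{eq:Omega^B=000020total=000020diff}; so I would simply cite those.

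For the second part, substitute the monotonicity of $m^{j}$ into each formula. From \eqref{eq:m=000020share=000020income}, $\tilde{m}^{j}=1/(1+\sigma-m^{j})$ is manifestly increasing in $m^{j}$, hence in $\alpha$. From \eqref{eq:premium}, $\tilde{\pi}^{j}=1/(\sigma-m^{j})$ is increasing in $m^{j}$, hence in $\alpha$. From \eqref{eq:National=000020income}, $Y^{j}=\alpha(1+\tfrac{1}{\sigma-m^{j}})$ is a product of the increasing-in-$\alpha$ factor $\alpha$ and the factor $1+\tfrac{1}{\sigma-m^{j}}$, which is itself increasing in $\alpha$ (both through $m^{j}$); a product of positive increasing functions is increasing, so $Y^{j}$ increases in $\alpha$, and so does $Y^{j}/2$. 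That closes the proposition.

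The main obstacle is purely the first part, and even there it is not deep: it is the sign of the implicit-function derivative, which hinges entirely on assumption \ref{assu:regularity-1}. The one thing to be careful about is that the bracketed coefficient in \eqref{eq:Omega^B=000020total=000020diff} involves the extra term $-\tfrac{1}{\sigma-1}\ln\tfrac{2}{1+\tau^{1-\sigma}}$, which is $\le 0$ since $\tfrac{2}{1+\tau^{1-\sigma}}\ge 1$; this only reinforces positivity of the coefficient, as the chain of inequalities in the proof of Lemma \ref{lem:break=000020sustain=000020points=000020ranking} shows. So I would write the proof in three short displayed lines — implicit differentiation plus the bound from \ref{assu:regularity-1} giving $\partial m^{j}/\partial\alpha>0$, then the three elementary consequences — and keep it under half a page, leaning on \eqref{eq:CP=000020alpha=000020sigma} and the already-proven Lemma \ref{lem:break=000020sustain=000020points=000020ranking} to avoid repeating algebra.
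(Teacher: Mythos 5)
Your proposal is correct and follows essentially the same route as the paper: the paper's own proof is a one-line ``by inspection'' of equations \eqref{eq:National=000020income}, \eqref{eq:m=000020share=000020income}, and \eqref{eq:premium}, with the monotonicity of $m^{j}$ in $\alpha$ already supplied by the surrounding text via \eqref{eq:CP=000020alpha=000020sigma} and the total differentiation in \eqref{eq:Omega^B=000020total=000020diff}. You have merely made explicit the implicit-differentiation step that the paper delegates to Assumption \ref{assu:regularity-1} and Lemma \ref{lem:break=000020sustain=000020points=000020ranking}, which is fine.
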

\begin{proof}
By inspection of equations \eqref{eq:National=000020income}, \eqref{eq:m=000020share=000020income},
and \eqref{eq:premium}.
\end{proof}

\subsection{Proposition \ref{prop:Structural=000020change=000020and=000020spatial=000020disparities-1}}
\begin{prop*}
\label{prop:Structural=000020change=000020and=000020spatial=000020disparities-1}\textbf{Structural
change and spatial disparities}. Assume labor productivity $\alpha$
increases over time, all other things equal. Then, starting from an
initially low value $\alpha<\alpha_{1}$, (i) structural change takes
place: expenditure and employment are reallocated from agriculture
to manufacturing; (ii) regional disparities may (if $\alpha_{1}<\alpha<\alpha_{0}$)
and will (if $\alpha_{0}<\alpha$) emerge in equilibrium; (iii) regional
disparities reinforce structural change; and (iv) structural change
leads to an increase in the earnings premium of entrepreneurs.
\end{prop*}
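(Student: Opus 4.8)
The plan is to assemble Proposition~\ref{prop:Structural=000020change=000020and=000020spatial=000020disparities-1} as a corollary of the results already established, reading off each of the four claims from an earlier proposition or lemma. The central observation is that everything is driven monotonically by the manufacturing expenditure share $m$, which is in turn monotone in the level of utility, and utility is monotone in labor productivity $\alpha$ along any fixed spatial equilibrium branch (equations \eqref{eq:v=000020Core} and \eqref{eq:v=000020symmetric-1}, together with Assumption~\ref{assu:regularity-1} and \eqref{eq:CP=000020alpha=000020sigma}). So the proof is mostly a matter of citing the right prior result for each part and stitching the thought experiment together.

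For part (i), I would invoke Proposition~\ref{prop:Structural=000020change} (Engel curves and structural change): since manufactures are a luxury by Assumption~\ref{assu:Engel}, $m(u)$ is increasing in $u$, the employment identity \eqref{eq:employment} then makes $L_M$ increasing in $u$, and Proposition~\ref{prop:Growth=000020and=000020Structural=000020change} makes $u$ (equivalently $\Omega$) increasing in $\alpha$ on either branch; hence rising $\alpha$ reallocates expenditure and employment from agriculture to manufacturing. For part (ii), I would appeal to Proposition~\ref{prop:Break=000020and=000020Sustain=000020points}: starting from $\alpha<\alpha_1$ with $\tau>\tau_1$ the symmetric configuration is the unique stable equilibrium; as $\alpha$ crosses $\alpha_1$ (using $\alpha_1'>0$, so $\tau<\tau_1(\alpha)$ eventually holds), we enter the region where both configurations are stable, so core-periphery becomes possible; and once $\alpha>\alpha_\infty$, part (ii) of that proposition gives core-periphery as the unique stable configuration — the ``black hole.'' (I note in passing that the statement as typeset writes $\alpha_0$ for what the rest of the paper calls $\alpha_\infty$; I would keep consistent notation.) For part (iii), Lemma~\ref{lem:break=000020sustain=000020points=000020ranking}(i) gives $\Omega^B<\Omega^C$, hence $m^B<m^C$; so a transition from the symmetric to the core-periphery branch raises $m$ discretely, which by \eqref{eq:employment} and \eqref{eq:m=000020share=000020income} further shifts resources and national-income shares toward manufacturing — regional disparities feed back into structural change. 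For part (iv), Proposition~\ref{prop:Growth=000020and=000020Structural=000020change} already states that the earnings premium $\tilde\pi=1/(\sigma-m)$ is increasing over time as $\alpha$ rises, since it is increasing in $m$ and $m$ is increasing in $\alpha$.

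I do not expect any genuine obstacle here: no new computation is required, and the only care needed is bookkeeping — making sure the $\alpha$-thresholds are the same objects across the invoked results, that the ``holding other parameters fixed'' clause matches the comparative statics in \eqref{eq:CP=000020alpha=000020sigma}, and that the feedback claim in (iii) is stated as a comparison \emph{across} equilibrium branches rather than a within-branch derivative. The cleanest write-up simply says ``this is an immediate consequence of Propositions~\ref{prop:Structural=000020change}, \ref{prop:Growth=000020and=000020Structural=000020change}, and \ref{prop:Break=000020and=000020Sustain=000020points}, together with Lemma~\ref{lem:break=000020sustain=000020points=000020ranking},'' with a one-sentence pointer to which result delivers each of (i)--(iv).
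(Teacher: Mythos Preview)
Your proposal is correct and mirrors the paper's own argument: the paper's proof simply reads ``In the text above the proposition,'' and that text assembles (i)--(iv) from exactly the results you cite --- Propositions~\ref{prop:Structural=000020change} and~\ref{prop:Growth=000020and=000020Structural=000020change} for (i) and (iv), Proposition~\ref{prop:Break=000020and=000020Sustain=000020points} for (ii), and Lemma~\ref{lem:break=000020sustain=000020points=000020ranking} for (iii). Your observation that $\alpha_{0}$ in the appendix statement is a typo for $\alpha_{\infty}$ is also correct.
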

\begin{proof}
In the text above the proposition.
\end{proof}

\section{\protect\label{Appx:Helpman}Encompassing Competition for Land}

In the paper we assert that demand-driven structural change contributes
to the emergence of regional disparities as long as the income elasticity
of the demand for the combination of tradable services and manufacturing
goods is larger than one (and the income elasticity of demand for
agricultural goods is lower than one). Here we build a portmanteau
model nesting both Krugman and Helpman's mechanisms, of which the
baseline model in our main text is a special case.

\subsection{Model}

We add structures to the in section \ref{sec:Model}. Specifically,
we now have 3 sectors:
\begin{itemize}
\item Agriculture. This sector is like in section \ref{sec:Model}. It is
competitive, it produces using labor only, and its output is freely
traded.
\item Manufacturing. As in section \ref{sec:Model}, entrepreneurs are the
fixed factor. Now, the variable factor is a Cobb-Douglas combination
of labor and structures, where $\eta\in\left[0,1\right]$ is Cobb
Douglas share of structures. The model nests our Krugman-style model
as a special case, with $\eta=0$, and a variant of Helpman's model
in the polar case $\eta=1$. Like in in section \ref{sec:Model},
this sector is monopolistically competitive and its output is traded
at an iceberg cost $\tau$.
\item Structures (factories). Structures are in fixed supply and are not
traded. They are used in manufacturing production only (it is conceptually
straightforward to also allow for housing consumption).
\end{itemize}
The main change with our baseline model is that we need impose the
equilibrium condition for the market of structures:
\begin{equation}
\Lambda\rho^{r}=\eta\frac{\sigma-1}{\sigma}R^{r}=\eta\left(\sigma-1\right)\lambda^{r}\pi^{r}.\label{eq:Structures=000020Equilibrium}
\end{equation}
Above, $\rho^{r}$ is the equilibrium unit price of structures and
$\Lambda$ is the fixed supply of land. We set $\Lambda=\eta\left(\sigma-1\right)$
by choice of units, thus $\rho^{r}=\lambda^{r}\pi^{r}$.

Manufacturing sales in equation \eqref{eq:sales} are now equal to
\[
\sigma\pi^{r}=\left(\lambda^{r}\pi^{r}\right)^{\eta\left(1-\sigma\right)}\left\{ \frac{E^{r}}{\Delta^{r}}+\tau^{1-\sigma}\frac{E^{s}}{\Delta^{s}}\right\} ,
\]
where 
\[
E^{r}=\left[m\left(\Omega^{r}\right)+\eta\left(\sigma-1\right)\right]\lambda^{r}\pi^{r}+\frac{\alpha}{2}
\]
denotes manufacturing expenditures, and 
\[
\Delta^{r}\equiv\left(P^{r}\right)^{1-\sigma}=\lambda^{r}\left(\lambda^{r}\pi^{r}\right)^{\eta\left(1-\sigma\right)}+\tau^{1-\sigma}\lambda^{s}\left(\lambda^{s}\pi^{s}\right)^{\eta\left(1-\sigma\right)}.
\]

\subsection{Hat Algebra}

It proves useful to define $\hat{x}\equiv x^{r}/x^{s}$ for all $x=\lambda,\pi,\Delta,E,P$.
Taking ratios of the expressions above yields:
\begin{align}
\hat{\pi} & =\left(\hat{\lambda}\hat{\pi}\right)^{\eta\left(1-\sigma\right)}\frac{\hat{E}+\tau^{1-\sigma}\hat{\Delta}}{\tau^{1-\sigma}\hat{E}+\hat{\Delta}}\nonumber \\
\hat{E} & =\frac{\alpha+\left[m^{r}+\eta\left(\sigma-1\right)\right]2\lambda^{r}\pi^{r}}{\alpha+\left[m^{s}+\eta\left(\sigma-1\right)\right]2\lambda^{s}\pi^{s}}\nonumber \\
\hat{\Delta} & =\frac{\hat{\lambda}^{1+\eta\left(1-\sigma\right)}\hat{\pi}^{\eta\left(1-\sigma\right)}+\tau^{1-\sigma}}{\tau^{1-\sigma}\hat{\lambda}^{1+\eta\left(1-\sigma\right)}\hat{\pi}^{\eta\left(1-\sigma\right)}+1}.\label{eq:hat=000020system}
\end{align}
By the same token, if $m^{s}=m^{r}=m$, the ratio of real entrepreneur
earnings is equal to
\[
\hat{\Omega}=\hat{\pi}\hat{\Delta}^{\frac{m}{\sigma-1}}.
\]

\subsection{Stability of the Symmetric Equilibrium}

At the symmetric equilibrium, $\hat{x}=1$ for all $x$, and $m^{r}=m^{s}=m$,
some $m\in\left(0,1\right)$. In particular, the equilibrium earnings
are equal to 
\[
\pi=\frac{\alpha}{\sigma-\left[m+\eta\left(\sigma-1\right)\right]},
\]
which is positive by $\eta,m\in\left(0,1\right)$ and $\sigma>1$,
and increasing in $m$. We are interested in the stability of this
equilibrium. Let 
\[
Z\equiv\frac{1-\tau^{1-\sigma}}{1+\tau^{1-\sigma}}
\]
be an index of trade frictions. By inspection, $Z$ is monotonically
increasing in $\tau$ from $Z=0$ when $\tau=1$ (free trade) to $Z=1$
when $\tau\rightarrow\infty$ (autarky). Let also define the elasticity
function as follows:
\[
{\cal E}\left(x\right)\equiv\left.\frac{\mathrm{d}\ln\hat{x}}{\mathrm{d}\ln\hat{\lambda}}\right|_{\hat{\lambda}=1}.
\]
Using this compact notation, the symmetric equilibrium is said to
be stable if 
\begin{equation}
{\cal E}\left(\Omega\right)={\cal E}\left(\pi\right)+\frac{m}{\sigma-1}{\cal E}\left(\Delta\right)\label{eq:elasticity=000020Omega}
\end{equation}
is negative, and unstable otherwise.

Differentiating the system in equation \eqref{eq:hat=000020system},
we obtain
\begin{align*}
{\cal E}\left(\pi\right) & =-\eta\left(\sigma-1\right)\left[1+{\cal E}\left(\pi\right)\right]+Z\left[{\cal E}\left(E\right)-{\cal E}\left(\Delta\right)\right]\\
{\cal E}\left(E\right) & =2\tilde{m}\left[1+{\cal E}\left(\pi\right)\right]\\
{\cal E}\left(\Delta\right) & =Z\left\{ 1-\eta\left(\sigma-1\right)\left[1+{\cal E}\left(\pi\right)\right]\right\} ,
\end{align*}
where 
\[
\tilde{m}\equiv\frac{m+\eta\left(\sigma-1\right)}{\sigma}
\]
collects terms. By inspection, $\tilde{m}$ is increasing in $m$,
and it belongs to the unit interval.

Let us rewrite the system of equations in matrix notation:
\[
\left[\begin{array}{ccc}
1+\eta\left(\sigma-1\right) & -Z & Z\\
-2\tilde{m} & 1 & 0\\
-Z\eta\left(\sigma-1\right) & 0 & 1
\end{array}\right]\left[\begin{array}{c}
{\cal E}\left(\pi\right)\\
{\cal E}\left(E\right)\\
{\cal E}\left(\Delta\right)
\end{array}\right]=\left[\begin{array}{c}
-\eta\left(\sigma-1\right)\\
2\tilde{m}\\
Z\left[1-\eta\left(\sigma-1\right)\right]
\end{array}\right].
\]
Solving this system for ${\cal E}\left(\pi\right)$ and ${\cal E}\left(\Delta\right)$
and using equation \eqref{eq:elasticity=000020Omega} yields:
\begin{align}
{\cal E}\left(\Omega\right) & =\frac{-\eta\left(\sigma-1\right)+2\tilde{m}Z-\left[1-\eta\left(\sigma-1\right)\right]Z^{2}+\frac{m}{\sigma-1}Z\left(1-2\tilde{m}Z\right)}{1-2\tilde{m}Z+\eta\left(\sigma-1\right)\left(1-Z^{2}\right)}\nonumber \\
 & =\frac{-\eta\left(\sigma-1\right)\left(1-Z^{2}\right)+2\tilde{m}Z-Z^{2}+\frac{m}{\sigma-1}Z\left(1-2\tilde{m}Z\right)}{1-2\tilde{m}Z+\eta\left(\sigma-1\right)\left(1-Z^{2}\right)}\nonumber \\
 & =-1+\frac{1-Z^{2}+\frac{m}{\sigma-1}Z\left(1-2\tilde{m}Z\right)}{1-2\tilde{m}Z+\eta\left(\sigma-1\right)\left(1-Z^{2}\right)}.\label{eq:elasticity=000020omega=000020solution}
\end{align}
From the first line of equation \eqref{eq:elasticity=000020omega=000020solution},
we obtain the following results:
\begin{enumerate}
\item The symmetric equilibrium is stable for arbitrarily low trade costs
if $\eta>0$. Indeed,
\[
\left.{\cal E}\left(\Omega\right)\right|_{Z=0}=-\left(1+\frac{1}{\eta\left(\sigma-1\right)}\right)^{-1},
\]
which is unambiguously negative if $\eta>0$ (as in Helpman's model)
and zero if $\eta=0$ (as in Krugman's model).
\item The symmetric equilibrium is unstable for arbitrarily high trade costs
regardless of $\eta$ if expenditures on manufactures are high enough.
Indeed,
\[
\left.{\cal E}\left(\Omega\right)\right|_{Z=1}=\frac{m}{\sigma-1}-1,
\]
which is positive if and only if $m>\sigma-1$ (as in Krugman's model).
\end{enumerate}
\selectlanguage{american}%

\section{Data Construction\protect\label{sec:Data-Construction}}

We merge data from the Groningen 10 sector database, which we take
from \citet{CominLashkariMestieri2021}, with the data from the urbanization
rates from the United Nations \href{https://data.unhabitat.org/}{Unhabitat Project}.
We drop from the resulting dataset Singapore and Hong Kong, since
they are essentially city-states. The resulting dataset contains 37
countries, and spans from 1960 to 2010. The panel is not balanced
though, the average number of years per country is 40.2, with a standard
deviation of 9.1 years. The list of countries and the corresponding
number of years with data follows: Argentina, 46 years of data; Bolivia,
44; Brazil, 12; Chile, 46; Colombia, 46; Costa Rica, 46; India, 35;
Indonesia, 45; Japan, 44; Korea, 43; Mexico, 46; Malaysia, 31; Peru,
15; Philippines, 35; Thailand, 46; Venezuela, 44; West Germany, 24;
Denmark, 36; Spain, 36; France, 36; Italy, 36; Netherlands, 36; Sweden,
36; USA, 46; UK, 46; Botswana, 43; Ethiopia, 50; Ghana, 31; Kenya,
42; Mauritius, 41; Malawi, 45; Nigeria, 51; Senegal, 41; Tanzania,
51; South Africa, 51; Zambia, 46.\selectlanguage{english}%

\end{document}